\crefname{prop}{Proposition}{Propositions}
\newtheorem{thm}{Theorem}[section]
\newtheorem{lemma}[thm]{Lemma}
\newtheorem{prop}[thm]{Proposition}
\newtheorem{cor}[thm]{Corollary}
\newtheorem{remark}[thm]{Remark}
\theoremstyle{definition}
\newtheorem{definition}[thm]{Definition}
\newcommand{\MM}{\mathcal{M}}
\newcommand{\F}{\mathcal{F}}
\newcommand{\Z}{\mathcal{Z}}
\newcommand{\E}{\mathcal{E}}
\newcommand{\RR}{\mathbb{R}}
\newcommand{\ZZ}{\mathbb{Z}}
\newcommand{\CC}{\mathbb{C}}
\newcommand{\pp}{\bar{\rho}}
\newcommand{\im}{\mathrm{Im}}
\newcommand{\rk}{\mathrm{rank}}
\newcommand{\Flat}{\mathrm{Flat}}
\newcommand{\SL}{S_{L_{12}}}
\newcommand{\mc}[1]{\begingroup\color{blue}#1\endgroup}
\newcommand{\ak}[1]{\begingroup\color{orange}#1\endgroup}
\newcommand{\jg}[1]{\begingroup\color{magenta}#1\endgroup}
\newcommand{\rh}[1]{\begingroup\color{cyan}#1\endgroup}
\newcommand{\nozero}[1]{\begingroup\color{green}#1\endgroup}
\newcommand{\OLD}[1]{\begingroup\tiny\color{gray}#1\endgroup}
\title{Computing phylogenetic invariants for time-reversible models: from TN93 to its submodels}
\author[2,3]{Marta Casanellas}
\author[4]{Jennifer Garbett}
\author[3]{Roser Homs}
\author[5]{Annachiara Korchmaros}
\author[1]{Niharika Chakrabarty Paul}
\affil[1]{\small{Max Planck Institute for the Mathematics in the Sciences}}
\affil[2]{\small{Universitat Polit\`ecnica de Catalunya}}
\affil[3]{\small{Centre de Recerca Matem\`atica}}
\affil[4]{\small{Lenoir-Rhyne University}}
\affil[5]{\small{University of Leipzig}}
\date{}
\begin{document}

\maketitle

\begin{abstract}
Phylogenetic invariants are equations that vanish on algebraic varieties associated with Markov processes that model molecular substitutions on phylogenetic trees. For practical applications, it is essential to understand these equations across a wide range of substitution models. Recent work has shown that, for equivariant models, phylogenetic invariants can be derived from those of the general Markov model by restricting to the linear space defined by the model (namely, the space of mixtures of distributions on the model). Following this philosophy, we describe the space of mixtures and phylogenetic invariants for time-reversible models that are not equivariant. Specifically, we study two submodels of the Tamura-Nei nucleotide substitution model (Felsenstein 81 and 84) using an orthogonal change of basis recently introduced for algebraic time-reversible models.

For tripods, we prove that the algebraic variety of each submodel coincides with the variety of Tamura-Nei intersected with the linear space of the submodel. In the case of quartets, we show that it is an irreducible component of this intersection. Moreover, we demonstrate that it suffices to consider only the binomial equations defining the linear space, which correspond to the natural symmetries of the model in the new coordinates. For each submodel, we explicitly provide equations defining a local complete intersection that characterizes the phylogenetic variety on a dense open subset containing the biologically relevant points.
\end{abstract}

\section{Introduction}

Algebraic phylogenetics is a growing research area at the intersection of algebra, geometry, statistics, and evolutionary biology (see the introductory papers \cite{eriksson2004,AllmanRhodeschapter4}). As already realized in the late eighties, equations that vanish on any distribution arising from a Markov process of nucleotide substitution on a phylogenetic tree can be very helpful in detecting the tree that produced the distribution (see, for example, the quote to this algebraic approach in the seminal book by biologist J. Felsenstein \cite[\S 5]{felsenstein2004}). These equations were called \emph{phylogenetic invariants} and obtaining them has been the main focus of algebraic phylogenetics: the set of distributions arising from a Markov process on a phylogenetic tree $T$ determine an algebraic variety $V_T$ and phylogenetic invariants correspond to polynomials in its ideal $\mathcal{I}(V_T)$. The underlying theory of phylogenetic invariants has been used to propose new methods of phylogenetic reconstruction \cite{chifmankubatko2014,  casfergar23} and model selection \cite{KDGC}, and to establish identifiability of complex evolutionary models \cite{allmanGTR,allman2009}.

Biological constraints on the parameters of the Markov process (namely, the stationary distribution and transition matrices) give rise to different nucleotide substitution models, which in turn lead to different algebraic varieties, $V_T^{\MM}$, for each substitution model $\MM$ and each phylogenetic tree $T$. Phylogenetic invariants have been found for some \emph{equivariant models}, which include the most {general Markov model}, the strand-symmetric model, and the \emph{group-based models}; \cite{Allman2008,Sturmfels2005,Draisma,cassull}. The latter are the simplest models and have been largely studied from the point of view of algebraic geometry because they give rise to toric varieties. However, these models are not exactly the models that biologists use the most frequently. For example, group-based models assume a uniform stationary distribution (which is too restrictive in most biological frameworks), and the general Markov model has too many parameters and becomes intractable from a likelihood perspective. 

Nucleotide substitution models that are widely used in biology assume \emph{time reversibility} and have an arbitrary stationary distribution. In \cite{CHT}, the authors introduce a new technique that allows these models to be tackled more reasonably from the point of view of algebraic geometry by setting up a new framework to find phylogenetic invariants for the so-called \emph{algebraic time-reversible} (ATR) models. So far, only the ATR model of Tamura and Nei (TN93, \cite{TN93_p}) has been studied in depth.
Precisely, the authors provided equations that cut out $V_T^\MM$ on an open set when $T$ is a tree with three or four leaves (\emph{quartet}).
It should be noted that quartets are one of the main objects of study in algebraic phylogenetics because reconstructing quartets correctly allows the reconstruction of larger trees (see, for instance, \cite{wqfm,QFM,zou2019}). {In this paper, we focus on two {nucleotide substitution} submodels of TN93: the F81 \cite{Felsenstein81} and F84 \cite{Felsenstein96} models proposed by Felsenstein in the early eighties. 
{They {generalize} the group-based models Jukes-Cantor and Kimura 2-parameter, respectively, to an arbitrary stationary distribution. In F81, the substitution rate only depends on the base distribution of the ancestral nucleotide and a single parameter. In contrast, F84 has two parameters to emphasize the difference between transitions and transversions.}
These are the only submodels of TN93 that are multiplicatively closed (an essential property for phylogenetic inference, see \cite{Sumner_gtr}), and have been implemented in several phylogenetic packages for tree topology estimation~\cite{Yang1997,Swofford2003}}.

During the last twenty years, some phylogenetic invariants have been found independently for some phylogenetic models; see \cite{Allman2008,Sturmfels2005,cassull,chifmankubatko2014}. However, following the recent philosophy developed for equivariant models, in \cite{casfer2024}, one should be able to obtain phylogenetic invariants for a phylogenetic tree evolving under a submodel of a given model $\mathcal{M}$, by simply imposing symmetries that define the submodel in the equations that cut out $V_T^{\MM}$. This paper aims to study the algebraic varieties defined by algebraic time-reversible models F81 and F84 and to check whether this latter result can be generalized to these models. For equivariant models, symmetries defined by a submodel are easily described by the action of a permutation group. For F81 and F84, the stationary distribution does not satisfy symmetric constraints, which impedes the search for a permutation group that leaves all distributions invariant.
For an evolutionary model $\MM$, \emph{symmetries of the model} are equalities between coordinates that hold for all points in $V_T^{\MM}$, independently of the tree $T$; see \cite{CFK, CasSteel}.
These linear equations are instances of \emph{model invariants} for $n$ leaves, that is, elements of $I(V_T^\MM)$ for any phylogenetic tree $T$ with $n$ leaves. In contrast, equations that hold for a particular tree topology but not for all trees are called \emph{topology invariants}. The most well-known topology invariants arise from imposing rank conditions on a matrix obtained by flattening the distribution vector; see \cite{Allman2008} and Section~\ref{sec:preliminaries}.  

Linear model invariants are relevant in biology because they actually define the space $\mathcal{D}^\MM_n$ of \emph{mixtures} of distributions of Markov processes on trees on $n$ leaves, see \cite{mat08,CFK,CasSteel}, but also are essential for model selection inference. That is, linear model invariants for $\mathcal{M}$ determine whether a distribution may have arisen from a mixture of evolutionary processes on trees or networks evolving under the model $\MM$. From the point of view of algebraic geometry, $\mathcal{D}^\MM_n$ is the linear span of $\cup V_T^{\MM}$, where the union runs over all trees with $n$ leaves.
For models that escape from equivariant models, $\mathcal{D}^\MM_n$ is not fully determined by the natural symmetries, and the extra equations needed are not straightforward to describe or derive. In \cite{CasSteel}, the authors provide a minimum spanning set of linear equations that define the space of mixtures for quartets evolving under F81. The derivation of those equations had to be done ad hoc; the equations had many terms and long coefficients, making their generalization to $n$ leaves particularly challenging. In this paper, we propose a new approach. We use the change of basis given in \cite{CHT} for TN93 to describe the symmetry equations in these coordinates (Propositions \ref{prop:quartet-eqF84}  and \ref{prop:quartet-eq}), and we prove that the extra equations needed to define the mixture space for quartets evolving under F81 or F84 involve at most four monomials and can be obtained by imposing symmetries on certain minors of the flattening matrices; see Theorems \ref{prop:mixtures_quartetF84} and \ref{thm:mixturesF81}. 

In addition to the space of mixtures, we use this approach to provide equations that define $V_T^{\MM}$ locally around the biologically relevant points for tripods and quartets evolving under F81 and F84. For $\MM=$ F81 or F84, providing a minimum set of generators for $I(V_T^\MM)$ can be computationally expensive and unnecessary for biological purposes. For equivariant models, it was proven in \cite{CFM} that rank conditions and equations from tripods can be used to provide a local complete intersection that coincides with $V_T^\MM$ on a Zariski open subset containing the biologically relevant points. We prove the analogs for tripods and quartets evolving under F81 and F84. In addition, in this case, we corroborate the philosophy underlying \cite{casfer2024}: this local complete intersection can be obtained by imposing symmetries on a local complete intersection that defines the variety for TN93. We refer to Theorems~\ref{thm:CIF84} and \ref{thm:CI} for precise statements.

The structure of the paper is as follows. 
In Section~\ref{sec:preliminaries} we introduce algebraic time-reversible models, emphasizing models F81, F84, and TN93 and basic tools and definitions in algebraic phylogenetics.
Section~\ref{sec:tripods} is devoted to the study of tripods, and in Section~\ref{sec:symmetry}, we compute symmetry equations for quartets. Section~\ref{sec:rank} deals with rank constraints arising from flattening matrices of the tensor of joint distributions. In Section~\ref{sec:mixtures}, we derive the defining equations of the spaces of mixtures for tripods and quartets. We provide equations defining the phylogenetic varieties around the biologically relevant points in Section~\ref{sec:complete_intersection} and close the paper with a Discussion section.

\textbf{Acknowledgments:} We are grateful the Women in Algebraic Statistics workshop, held in Oxford, and held by Jane Ivy Coons for providing the invaluable opportunity to collaborate. The workshop was supported by the following sources: St John's College, Oxford; the L'Oreal-UNESCO For Women in Science UK and Ireland Rising Talent Award in Mathematics and Computer Science (awarded to Jane Coons); the Heilbronn Institute for Mathematical Research; the UKRI/EPSRC Additional Funding Programme for the Mathematical Sciences.

M.C. and R.H. were partially supported by projects with references PID2019-103849G-I00 and PID2023-146936NB-I00, financed by 
MCIN/AEI/10.13039/501100011033/FEDER,UE, by the Severo Ochoa and Mar\'ia de Maeztu Program for Centers and Units of Excellence in R\&D (project CEX2020-001084-M), and by the AGAUR project 2021 SGR 00603 Geometry of Manifolds and Applications, GEOMVAP. 
 R.H. is supported by the postdoctoral fellowships programme Beatriu de Pin\'os (ref. 2021BP00119), funded by the Secretary of
Universities and Research (Government of Catalonia).
A.K. is supported by the German Research Foundation (DFG, STA850/49-1).

\section{Preliminaries}\label{sec:preliminaries}

A \emph{phylogenetic tree on $L$} is a tree $T=(V,E)$ with labeled leaves in bijection with the set $L=\{l_1,\dots,l_n\}$ and unlabeled interior nodes.
A phylogenetic tree is \emph{trivalent} if all of its interior nodes have degree three and we say that a phylogenetic tree is \emph{rooted} if we distinguish an interior node as its root $r$ and direct all edges away from it. Phylogenetic trees with $3$ (resp. $4$) leaves are called \emph{tripods} (resp. \emph{quartets}).

We associate a random variable $X_v$, with values in a state space $\Sigma$, with each vertex of $T$. We call $(X_v: v\in V)$ a \emph{Markov process on the phylogenetic tree $T$} if each random variable is conditionally independent of its non-descendants given its parent variable. This process on a rooted phylogenetic tree is a statistical model given by the parameters: 
\begin{itemize}
    \item $\pi^r$, the distribution at the root $r$, and  
    \item a $|\Sigma|\times |\Sigma|$ transition matrix, $M^e$, for each edge, $e:u\rightarrow v\in E$, that encodes state transition probabilities  $M^e_{i,j}=P(X_v=j\mid X_u=i)$.
\end{itemize}


If we model nucleotide substitutions in a DNA sequence, then this Markov process has state space $\Sigma=[4]$, with elements representing the four nucleotides in DNA sequences. In this work, we restrict ourselves to this case and fix notation such that $1$ refers to adenine, $2$ to guanine, $3$ to cytosine, and $4$ to thymine. Thus, $1$ and $2$ are purines and $3$ and $4$ are pyrimidines. By imposing constraints on the distribution of the root and transition matrices, one has different nucleotide substitution models, and according to \cite{CHT}, we have the following definition. 

\begin{definition}\label{def:ATR}
   Let $\Delta^3$ denote the 3-dimensional standard simplex in $\RR^4$ and let $\pi\in\Delta^3$ be a fixed distribution with non-zero entries. We say that a nucleotide substitution model $\mathcal{M}$ is \emph{algebraic $\pi$-time-reversible} (ATR) if the following properties hold:
   \begin{enumerate}
       \item time-reversibility with stationary distribution $\pi$: $\pi_i M_{i,j}=\pi_j M_{j,i}$ for $M\in\mathcal{M}$ and $i,j\in[4]$;
       \item commutativity: $M_1M_2=M_2M_1$, for any $M_1, M_2\in\mathcal{M}$.
   \end{enumerate}
\end{definition}

\begin{remark}\label{rmk:pi-orthog}\rm 
By \cite[Section 3]{CHT}, any $\pi$-time-reversible model is $\pi$-stationary, namely $\pi^t M=\pi^t$ for any transition matrix in the model. From an application point of view, the two properties above are crucial, as they allow estimation of the distribution at the root from data that have reached stationarity, thus eliminating $\pi^r$ from the model parameters.

Note that commutativity in Definition \ref{def:ATR} implies that there exists a basis that simultaneously diagonalizes all matrices in the model. We can choose this basis such that it is orthogonal with respect to the inner product $\langle u,v\rangle_\pi:=\sum_{i\in[4]}\frac{1}{\pi_i}u_iv_i$, $u_i,v_i\in\mathbb{C}$, see \cite[Definition 3.1]{CHT} for a detailed treatment. Such a basis is called a \emph{$\pi$-orthogonal basis}.
\end{remark}

Now, for a fixed root distribution $\pi^r=\pi$, the probability of observing configuration $i_1,\dots,i_n$, $i_j\in[4]$, at the leaves $l_1,\dots,l_n$ of the phylogenetic tree $T$ is a polynomial expression on the entries of the transition matrices. If $e_k:v_k\rightarrow l_k$, $k\in[n]$, is the pendant edge incident to leaf $l_k$, and $i_w\in[4]$ denotes the state of the random variable $X_w$ for any internal node $w$ (including the root $r$) this probability is:
\begin{equation*}
p^T_{i_1\dots i_n}=\sum_{i_r\in [4]} \pi_{i_r}M^{e_1}_{i_{v_1},i_1}\cdots M^{e_n}_{i_{v_n},i_n}\prod_{\begin{array}{c}{\scriptstyle e:u\rightarrow v \in E}\\ {\scriptstyle e\notin\{e_1,\dots,e_n\}}\end{array}} M^e_{i_u,i_v}.
\end{equation*}
\noindent

\noindent We can extend this expression to allow for complex parameters and think of the joint probability vector as a tensor in $\otimes^n\mathbb{C}^4$, via the following definition. 
\begin{definition}\label{def:param}
The \emph{joint probability tensor} associated to the phylogenetic tree $T$ evolving under an ATR model for nucleotides is the tensor 
$$p^T=\sum_{i_1,\dots,i_n\in[4]}p^T_{i_1\dots i_n} e^{i_1}\otimes\cdots\otimes e^{i_n}\in\otimes^n\mathbb{C}^4,$$
\noindent
where $\{e^1,e^2,e^3,e^4\}$ is the standard basis of $\mathbb{C}^4$ and $p^T_{i_1\dots i_n}$ is computed from parameters as above. We will denote by $\psi_T^\MM$, the map that assigns to each choice of parameters $\Theta=(M^e)_{e\in E}$ its corresponding joint probability tensor $p^T$. We will omit the superindex $T$ when there is no ambiguity on the underlying phylogenetic tree.
\end{definition}

\subsection{The Tamura-Nei nucleotide substitution model and its submodels}\label{subsec:submodels}

In \cite{CHT} the authors develop tools to compute phylogenetic invariants for ATR models and apply them to the Tamura-Nei (TN93)\cite{TN93_p} $\pi$-time-reversible nucleotide substitution model. 
Here, we deal with evolutionary models F81 \cite{Felsenstein81} and F84 \cite{Felsenstein96}, the two submodels of TN93 that are multiplicatively closed for fixed $\pi$;
a desirable property for phylogenetic inference, see \cite{Sumner_gtr}. Other commonly used submodels, such as HKY85 \cite{HKY85}, are not multiplicatively closed.

In this section, we introduce the models TN93, F81 and F84. For a fixed stationary distribution $\pi$, the transition matrices of TN93 are of the form

\begin{equation}\label{eq:TN93}
M=\begin{pmatrix}
*_{\mathtt{1}} &\pi_{2}c & \pi_{3}b & \pi_{4}b\\
\pi_{1}c &*_{2} & \pi_{3}b & \pi_{4}b\\
\pi_{1}b &\pi_{2}b & *_{3} & \pi_{4}d\\
\pi_{1}b & \pi_{2}b &\pi_{3}d &*_{4}
\end{pmatrix},
\end{equation}

\noindent
where $*_i$ is chosen to ensure that each row sums to $1$. In \cite[Example 3.5, Section 5]{CHT}, a detailed study of trees evolving under TN93 for a fixed arbitrary stationary distribution $\pi=(\pi_1,\pi_2,\pi_3,\pi_4)$ is provided, using the $\pi$-orthogonal basis
\begin{equation}\label{eq:basis}
    B=\left\{u^1=\begin{pmatrix}{\pi_1}\\ {\pi_2}\\ {\pi_3}\\ {\pi_4}\end{pmatrix}, \,
u^2 =\begin{pmatrix} \pi_1 \pi_{34}\\  \pi_2\pi_{34}\\ -\pi_3 \pi_{12}\\ -\pi_4 \pi_{12}
\end{pmatrix},\,
u^3=\frac{1}{\pi_{34}}\begin{pmatrix}0\\  0 \\ {\pi_3}\pi_4\\-\pi_3\pi_4\end{pmatrix}, \,
u^4=\frac{1}{\pi_{12}}\begin{pmatrix} {\pi_1}\pi_2 \\-\pi_1\pi_2\\ 0 \\0\end{pmatrix}\right\},
\end{equation}
where $\pi_{12}=\pi_1+\pi_2$ and $\pi_{34}=\pi_3+\pi_4$. For any transition matrix $M$ in the model, if the vectors of $B$ are written as columns of a matrix $A$, then $M^t$ diagonalizes as
\begin{equation}\label{eq:Mlambdas}
M^t=A\begin{pmatrix}\lambda_1 &0&0&0\\
0&\lambda_2 &0&0\\
0&0&\lambda_3 &0\\
0&0&0&\lambda_4
\end{pmatrix} A^{-1},
\end{equation}
where $\lambda_1=1$, $\lambda_2=\lambda_1-b$, $\lambda_3=\lambda_1-\pi_{12}b-\pi_{34}d$, $\lambda_4=\lambda_1-\pi_{34}b-\pi_{12}c$ are the eigenvalues of $M$. 

Emulating the case of group-based models, in which Jukes-Cantor (JC69) and Kimura 2-parameters (K2P) can be obtained from Kimura 3-parameters (K3P) by forcing certain eigenvalues to be equal, we study the F81 and F84 submodels of TN93 resulting from imposing $\lambda_2=\lambda_3=\lambda_4$ and $\lambda_3=\lambda_4$, respectively, in \eqref{eq:Mlambdas}. 

When all eigenvalues different from 1 are equal, that is, $\lambda_2=\lambda_3=\lambda_4$, we obtain the \emph{Felsenstein 1981 model} (F81)~\cite{Felsenstein81}. This translates into an equality among all parameters $b,c,d$ of TN93 in \eqref{eq:TN93}, which gives transition matrices of the form

\begin{equation}\label{eq:F81}
M=\begin{pmatrix}
*_{\mathtt{1}} &\pi_{2}b & \pi_{3}b & \pi_{4}b\\
\pi_{1}b &*_{2} & \pi_{3}b & \pi_{4}b\\
\pi_{1}b &\pi_{2}b & *_{3} & \pi_{4}b\\
\pi_{1}b & \pi_{2}b &\pi_{3}b &*_{4}
\end{pmatrix}.
\end{equation}

\noindent Since all non-diagonal entries of each column are equal, the probability of transitioning from one state to a different one does not depend on the initial state, thus justifying the fact that this is also called the \emph{equal input model}; see \cite{CasSteel} for an algebraic treatment of the equal input model in any number of states. Note that by considering the uniform distribution $\pi_1=\pi_2=\pi_3=\pi_4=1/4$ we recover the \emph{fully symmetric} or \emph{Jukes-Cantor model}.   

When only $\lambda_3=\lambda_4$, we instead obtain the \emph{Felsenstein 1984 model} (F84)~\cite{Felsenstein96}. While TN93 has a parameter for transitions (substitutions between purines or between pyrimidines) and one parameter for each type of transversions (substitutions between a purine and a pyrimidine), this model adds a constraint relating both types of transversions. In terms of the transition matrices and the usual notation for F84, if we take $\beta=b$, $\alpha=\pi_{12}c-\pi_{12}b$ (which is also equal to $\pi_{34}d-\pi_{34}b$ because $\lambda_3=\lambda_4$), we get $c=\beta+\alpha/\pi_{12}$, $d=\beta+\alpha/\pi_{34}$, and the F84 transition matrices can be rewritten as in the original paper:

\begin{equation}\label{eq:F84}
M=\begin{pmatrix}
*_{\mathtt{1}} &\pi_{2}\left(\beta+\frac{\alpha}{\pi_{12}}\right) & \pi_{3}\beta & \pi_{4}\beta\\
\pi_{1}\left(\beta+\frac{\alpha}{\pi_{12}}\right) &*_{2} & \pi_{3}\beta & \pi_{4}\beta\\
\pi_{1}\beta &\pi_{2}\beta & *_{3} & \pi_{4}\left(\beta+\frac{\alpha}{\pi_{34}}\right)\\
\pi_{1}\beta & \pi_{2}\beta &\pi_{3}\left(\beta+\frac{\alpha}{\pi_{34}}\right) &*_{4}
\end{pmatrix}.
\end{equation}

\noindent Conversely, any matrix of the F84 model with parameters $\alpha,\beta$ as \eqref{eq:F84} is a TN93 matrix with $\lambda_3=\lambda_4.$

\begin{remark}\label{rmk:generic_pi}\rm
The root distribution $\pi$ is fixed for a given model, and we assume it to be generic throughout the paper. In particular, this ensures that we are not dealing with simpler models such as JC69. See \cite[Remark 5.2]{CHT} for a detailed description of the genericity conditions.
\end{remark}

\subsection{Phylogenetic invariants in a $\pi$-orthogonal basis}\label{subsec:basis}

In this section, we adopt a change of coordinates that eases the study of TN93 and its submodels. 
We refer the reader to \cite{CHT} for further details. At the end of this section, we introduce a further rescaling of these coordinates that was hinted at in \cite{CasSteel}.

Let $A$ be the change of basis matrix from
a $\pi$-orthogonal basis $B=\{u^1,u^2,u^3,u^4\}$ (see Remark \ref{rmk:pi-orthog}) to the standard basis and let $\otimes$ denote the Kronecker product of matrices. Then, if $p$ is a joint probability tensor as in Definition \ref{def:param},
\begin{equation}\label{eq:coord_change}
    \bar{p}=\left(A^{-1}\otimes\cdots\otimes A^{-1}\right)p
\end{equation}
\noindent
gives its coordinates in the $\pi$-orthogonal basis $\{u^{i_1}\otimes\cdots\otimes u^{i_n}\}_{i_1,\dots, i_n}$.


Let $\MM$ denote F81 or F84, the submodels of TN93 presented in the previous sections. Consider the sets of transition matrices of the form \eqref{eq:F81} and \eqref{eq:F84}, and consider their diagonalizations $\Lambda^e=\mathrm{diag}(\lambda_1^e,\lambda_2^e,\lambda_3^e,\lambda_4^e)$, for $e \in E$, in the basis $B$ from \eqref{eq:basis}. We extend the model by allowing $\Lambda^e$ to take values in the complex numbers so that the joint probabilities are homogeneous polynomials on the eigenvalues in \eqref{eq:Mlambdas}.
Then we can parametrize the model for a tree with $n$ leaves as follows: 

\[\begin{array}{rcl}
  \varphi_T^\MM:  \prod_{e\in E}{\mathbb{C}}^4 
    & {\longrightarrow} &\bigotimes^n \mathbb{C}^4 \\
     \left( \Lambda^{e}\right)_{e \in E} & \mapsto  & \sum\limits_{i_1,\dots,i_n\in[4]} \bar{p}_{i_1\dots i_n}^T  u^{i_1}\otimes \dots \otimes u^{i_n} \, .
\end{array}\]
\noindent
Recall from \Cref{subsec:submodels} that 
$\lambda_2^{e}=\lambda_3^{e}=\lambda_4^{e}$ if $\MM=$F81 and $\lambda_3^e=\lambda_4^e$ if $\MM=$F84.

\begin{definition}
    The \emph{phylogenetic variety}, $CV_T^\MM$, of a phylogenetic tree $T$ evolving under $\MM$ is the Zariski closure of the image of $\varphi_T^\MM$.
    We denote the vanishing ideal of $CV_T^\MM$ in $\mathbb{C}[\bar{p}_{1..1},\dots,\bar{p}_{4..4}]$ by $I_T^\MM$ and call its elements \emph{phylogenetic invariants}. If a phylogenetic invariant holds for all trees with $n$ leaves, we call it a \emph{model invariant}. Otherwise, if it holds for some but not all trees, it is a \emph{topology invariant}.
\end{definition}


By tree topology, we refer to the topology of the labeled unrooted tree. In the case of quartets, there are three different topologies for trivalent trees, and they can be described using bipartitions of the leaf set $L$: $T_{12}=l_1l_2|l_3l_4$, $T_{13}=l_1l_3|l_2l_4$ and $T_{14}=l_1l_4|l_2l_3$, as seen in Figure \ref{fig:quartets}. {If $T$ is a (trivalent) quartet, $\varphi_T^\MM(D_1,D_2,D_3,D_4;D)$ stands for the tensor obtained by assigning diagonal matrix $D_i$ at pendant edge $e_i$ and diagonal matrix $D$ at the interior edge.}

\begin{figure}[!htb]
    \centering
    \begin{minipage}{0.3\textwidth}
        \centering
    \begin{tikzpicture}
        \draw (0,0)--(1,1);
        \draw (0,2)--(1,1);
        \draw (1,1)--(2.5,1);
        \draw (2.5,1)--(3.5,0);
        \draw (2.5,1)--(3.5,2);
        \node[left] at (0,0) {$l_2$};
        \node[left] at (0,2) {$l_1$};
        \node[right] at (3.5,0) {$l_4$};
        \node[right] at (3.5,2) {$l_3$};
        \fill (1,1) circle[radius=2pt];
        \fill (2.5,1) circle[radius=2pt];
        \fill (0,0) circle[radius=2pt];
        \fill (0,2) circle[radius=2pt];
        \fill (3.5,0) circle[radius=2pt];
        \fill (3.5,2) circle[radius=2pt];
    \end{tikzpicture}
    \end{minipage}
    \begin{minipage}{0.3\textwidth}
    \centering
    \begin{tikzpicture}
      \draw (0,0)--(1,1);
        \draw (0,2)--(1,1);
        \draw (1,1)--(2.5,1);
        \draw (2.5,1)--(3.5,0);
        \draw (2.5,1)--(3.5,2);
        \node[left] at (0,0) {$l_3$};
        \node[left] at (0,2) {$l_1$};
        \node[right] at (3.5,0) {$l_4$};
        \node[right] at (3.5,2) {$l_2$};
        \fill (1,1) circle[radius=2pt];
        \fill (2.5,1) circle[radius=2pt];
         \fill (0,0) circle[radius=2pt];
        \fill (0,2) circle[radius=2pt];
        \fill (3.5,0) circle[radius=2pt];
        \fill (3.5,2) circle[radius=2pt];
        \end{tikzpicture}
        \end{minipage}
        \begin{minipage}{0.3\textwidth}
        \centering
         \begin{tikzpicture}
         \draw (0,0)--(1,1);
        \draw (0,2)--(1,1);
        \draw (1,1)--(2.5,1);
        \draw (2.5,1)--(3.5,0);
        \draw (2.5,1)--(3.5,2);
        \node[left] at (0,0) {$l_4$};
        \node[left] at (0,2) {$l_1$};
        \node[right] at (3.5,0) {$l_3$};
        \node[right] at (3.5,2) {$l_2$};
        \fill (1,1) circle[radius=2pt];
        \fill (2.5,1) circle[radius=2pt];
         \fill (0,0) circle[radius=2pt];
        \fill (0,2) circle[radius=2pt];
        \fill (3.5,0) circle[radius=2pt];
        \fill (3.5,2) circle[radius=2pt];
    \end{tikzpicture}
        \end{minipage}
        \caption{\label{fig:quartets} Unrooted trivalent trees depicting $T_{12}$, $T_{13}$, and $T_{14}$, from left to right.}
        \label{fig:RunningExample}
\end{figure}
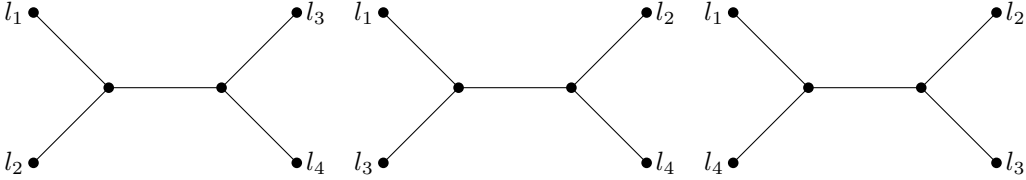

\begin{prop}\label{prop:dim}
Let $T$ be a trivalent phylogenetic tree evolving under a submodel $\MM$ of TN93 with fixed root distribution $\pi$. Then the dimension of the phylogenetic variety $CV_T^\MM$ is $\vert E\vert+1$ if $\MM=$F81 and $2\vert E\vert+1$ if $\MM=$F84.
\end{prop}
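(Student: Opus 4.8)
The plan is to compute $\dim CV_T^\MM$ as the dimension of the closure of the image of the parametrization $\varphi_T^\MM$, via the fiber-dimension formula $\dim\overline{\operatorname{im}\,\varphi_T^\MM}=\dim(\text{domain})-(\text{generic fiber dimension})$. First I would record the number of free parameters per edge once the eigenvalue constraints of the submodel are imposed and the model is extended to complex eigenvalues as in Section~\ref{subsec:basis}. For $\MM=$F81 the constraint $\lambda_2^e=\lambda_3^e=\lambda_4^e$ leaves the two parameters $\lambda_1^e$ and $\mu^e:=\lambda_2^e$ per edge, so the domain has dimension $2\vert E\vert$; for $\MM=$F84 the constraint $\lambda_3^e=\lambda_4^e$ leaves $\lambda_1^e,\lambda_2^e,\nu^e:=\lambda_3^e$, giving dimension $3\vert E\vert$. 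It then suffices to prove that the generic fiber of $\varphi_T^\MM$ has dimension exactly $\vert E\vert-1$ in both cases, since $2\vert E\vert-(\vert E\vert-1)=\vert E\vert+1$ and $3\vert E\vert-(\vert E\vert-1)=2\vert E\vert+1$.

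Second, I would exhibit the expected fiber. By Definition~\ref{def:param} the tensor $p^T$ is multilinear in the edge transition matrices, and since the change of coordinates \eqref{eq:coord_change} is linear and leaves the $\Lambda^e$ untouched, each coordinate $\bar p^T_{i_1\dots i_n}$ is, by the product structure over edges established in \cite{CHT}, homogeneous of degree one in the eigenvalue vector $(\lambda_1^e,\dots,\lambda_4^e)$ of each edge $e$ separately. Consequently the torus $(\mathbb{C}^\ast)^{\vert E\vert}$ acts on the domain by $(\Lambda^e)_e\mapsto(t_e\Lambda^e)_e$ and rescales the image tensor by $\prod_{e}t_e$; the subtorus $\{(t_e):\prod_e t_e=1\}$, of dimension $\vert E\vert-1$, therefore fixes every tensor and is contained in every fiber. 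This already gives the lower bound $\vert E\vert-1$ on the generic fiber dimension, and shows in passing that $CV_T^\MM$ is an affine cone.

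The heart of the argument, and the step I expect to be the main obstacle, is the matching upper bound: the generic fiber is exactly this subtorus. Equivalently, the scaling-invariant quantities must be identifiable, i.e.\ from a generic tensor $p^T$ one can recover the ratios $\lambda_j^e/\lambda_1^e$ for every edge $e$ (one ratio per edge for F81, two for F84). I would use the explicit monomial-type expression for $\bar p^T_{i_1\dots i_n}$ in the $\pi$-orthogonal basis from \cite{CHT}: choosing leaf-index patterns that agree on all edges but differ only in the contribution attached to a single edge, and forming the corresponding ratio of two coordinates, cancels every other edge factor and returns the eigenvalue ratio of that edge. Carrying this out edge by edge recovers all the ratios, so modulo the scaling torus the parameters are determined up to a finite set, forcing the generic fiber to coincide with the $(\vert E\vert-1)$-dimensional subtorus. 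Here the trivalence of $T$ (so that every edge genuinely separates the leaves and no parameters are forced to merge) and the genericity of $\pi$ from Remark~\ref{rmk:generic_pi} are essential to rule out the degenerate cancellations that would inflate the fiber.

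Combining the two bounds gives generic fiber dimension $\vert E\vert-1$, and the dimension formula yields $\dim CV_T^\MM=\vert E\vert+1$ for F81 and $2\vert E\vert+1$ for F84. An equivalent way to organize the same computation is to evaluate the rank of the Jacobian of $\varphi_T^\MM$ at a generic parameter point and check it equals $\vert E\vert+1$ (resp.\ $2\vert E\vert+1$): the scaling symmetry above already exhibits a $(\vert E\vert-1)$-dimensional kernel, so the remaining content is precisely that there are no further infinitesimal symmetries, which one verifies directly on the explicit coordinate expressions.
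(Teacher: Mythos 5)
Your dimension count is correct and, at bottom, it is the same computation as the paper's, just organized without normalizing: the paper sets $\lambda_1^e=1$ on every edge (which kills your scaling torus), invokes Chang's identifiability theorem \cite{chang1996} to conclude that the restricted map has generically finite fibers, reads off $\dim V_T^{\MM}=\vert E\vert$ (F81) or $2\vert E\vert$ (F84) as the number of remaining free parameters, and then adds $1$ because $CV_T^{\MM}$ is the affine cone over $V_T^{\MM}$ (a fact quoted from \cite{CHT}, which is exactly your observation that the parametrization is multihomogeneous of degree one in each edge). Your subtraction $2\vert E\vert-(\vert E\vert-1)$, resp.\ $3\vert E\vert-(\vert E\vert-1)$, is the unnormalized version of the same bookkeeping, and your identification of the $(\vert E\vert-1)$-dimensional subtorus $\{\prod_e t_e=1\}$ inside every fiber is correct and gives the right lower bound on the fiber dimension.

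The one place where you diverge substantively is that you try to prove the matching upper bound (generic finiteness modulo the torus) by hand, and your proposed mechanism does not work as stated for trees with interior edges. The claim that "forming the ratio of two coordinates cancels every other edge factor and returns the eigenvalue ratio of that edge" is only literally true for star trees, where $\tilde{p}_{i_1\dots i_n}=\lambda^{e_1}_{i_1}\cdots\lambda^{e_n}_{i_n}$ is monomial. For a quartet the interior-edge contribution $\tilde{q}_{i_1i_2i_3i_4}$ is a nontrivial \emph{linear combination} of the interior eigenvalues (see the table in the proof of Proposition \ref{prop:quartet-eqF84}), so no single ratio of coordinates isolates one edge; you would have to first recover the pendant ratios and then solve for the interior eigenvalues from the affine expressions, checking that the genericity of $\pi$ keeps the relevant coefficients nonzero. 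This is doable but is real work, and it is precisely the content that the paper outsources to \cite{chang1996}. So either cite Chang for generic identifiability of the normalized parameters, or flesh out the interior-edge step; as written, the "heart of the argument" you flag is indeed the gap.
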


\begin{proof}
Let us restrict $\varphi_T^\MM$ to matrices $\Lambda^e$ with $\lambda_1^e=1$, which comprises all Markov matrices. 
We denote by $V_T^\MM$ the Zariski closure of the image of this restriction. 
By \cite{chang1996}, this restricted map has finite fibers {on distributions that satisfy certain generic conditions}. {These conditions are also generic for all models $\MM$} and hence the dimension of $V_T^\MM$ is that of the parameter space. 
If $\pi$ is fixed, each transition matrix has one free parameter for F81 (so $\dim V_T^{F81}=\vert E\vert$) and two free parameters for F84 ($\dim V_T^{F84}=2\vert E\vert$). 
By \cite{CHT} we know that $CV_T^\MM$ is the affine cone over $V_T^\MM$, so $\dim CV_T^\MM=\dim V_T^\MM+1$. 
\end{proof}

When all transition matrices are the identity, we call the corresponding tensor $\rho$ the \emph{no-evolution point} and note that $\bar{\rho} = \varphi_T(\{Id\}_{e\in E})$. Observe that this point does not depend on the substitution model (so we omit the superscript $\MM$). 

If we have a tensor $\bar{p} = \varphi_T^\MM(\{\Lambda^e\}_{e\in E})$, then we can consider the tensor
$$\bar{q}= \varphi_T^\MM(\{Id^{e_i}\}_{i\in [n]};\{\Lambda^e\}_{e\in E\backslash\{e_1,\dots,e_n\}})$$
which has the identity as the transition matrix at the pendant edges $e_i$. Then we have 
$\bar{p}=(\Lambda^{e_1},\dots,\Lambda^{e_n})\cdot \bar{q}$, that is, 
\begin{equation}\label{eq:eigenvalues_p_q}
\bar{p}_{i_1\dots i_n}=\lambda^{e_1}_{i_1}\dots\lambda^{e_n}_{i_n}\bar{q}_{i_1\dots i_n}.
\end{equation}

\noindent In order to obtain a parametrization that is as close as possible to being monomial and monic, we rescale the coordinates via the no-evolution point as
\begin{equation}\label{eq:tildes}
\tilde{p}_{i_1\dots i_n}=\frac{\bar{p}_{i_1\dots i_n}}{\bar{\rho}_{i_1\dots i_n}} 
\end{equation}
\noindent
if $\bar{\rho}_{i_1\dots i_n}\neq 0$ and $\tilde{p}_{i_1\dots i_n}=\bar{p}_{i_1\dots i_n}$ otherwise. The analog of \eqref{eq:eigenvalues_p_q} also holds for coordinates $\tilde{p}$:

\begin{equation}\label{eq:eigenvalues_p_q_tilde}
\tilde{p}_{i_1\dots i_n}=\lambda^{e_1}_{i_1}\dots\lambda^{e_n}_{i_n}\tilde{q}_{i_1\dots i_n}.
\end{equation}

\begin{remark}\label{tilde_rho}\rm
After rescaling, all non-zero coordinates of the no-evolution point are $\tilde{\rho}_{i_1\dots i_n}=1$. Therefore, for star trees with any number of leaves, all non-zero coordinates $\tilde{p}_{i_1\dots i_n}=\lambda^{e_1}_{i_1}\dots\lambda^{e_n}_{i_n}$ are monic and monomial.  
This includes the case of tripods; see \Cref{sec:tripods}.
\end{remark}

\subsection{Flattening matrices}

One of the main techniques for computing phylogenetic invariants of the general Markov model (and its submodels) consists of obtaining rank constraints in flattenings of the joint probability tensor corresponding to bipartitions of the set of leaves.  

\begin{definition}\label{def:flat} Let $A|B$ be a bipartition of the set of leaves $L$. Assume that leaves are ordered so that $A=\{1,\dots,m\}$ and $B=\{m+1,\dots,n\}$. If $p\in \otimes^n \mathbb{C}^4$, the \emph{flattening of $p$ according to the bipartition} $A|B$ is the $4^{|A|}\times 4^{|B|}$ matrix $\Flat_{A|B}(p)$ whose $(i_1,\dots, i_m,i_{m+1},\dots,i_n)$ entry is $p_{i_1\dots i_n}.$ 
For any other order of the set of leaves, $\Flat_{A|B}(p)$ is defined analogously.
\end{definition}

The following result by Allman and Rhodes is key in algebraic phylogenetics.

\begin{thm}[\cite{Allman2008}]\label{thm:ARflat}
    Let $T$ be a phylogenetic tree with leaf set $L$ and let $p\in \im \varphi_T^{\MM}$ be a joint probability tensor as in Definition \ref{def:param}. If $A|B$ is a bipartition of $L$ induced by removing an edge of $T$, then $\rk \left(\Flat_{A|B}(p)\right)\leq 4$. Otherwise, if $A|B$ cannot be induced by removing any edge of $T$, then $\rk\left( \Flat_{A|B}(p)\right)>4$ if the parameters that gave rise to $p$ were sufficiently general.   
\end{thm}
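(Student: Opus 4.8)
The plan is to prove the two assertions separately: the upper bound $\rk \Flat_{A|B}(p)\le 4$ for edge-induced bipartitions, which holds for \emph{every} choice of parameters, and the strict lower bound $\rk \Flat_{A|B}(p)>4$ for non-edge bipartitions, which holds only generically.

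For the upper bound, I would exploit the conditional independence encoded by the tree. If removing the edge $e:u\to v$ of $T$ induces $A|B$, then the leaves in $A$ lie in the connected component of $u$ and those in $B$ in the component of $v$. By the global Markov property for trees, conditioning on the state $X_u=k$ at the cut vertex makes $(X_l)_{l\in A}$ and $(X_l)_{l\in B}$ independent, giving a factorization
\[
p_{i_1\dots i_n}=\sum_{k\in[4]} g_A\big((i_l)_{l\in A},k\big)\, h_B\big((i_l)_{l\in B},k\big),
\]
where $g_A$ collects the contraction of the tensor network on the $u$-side (including $\pi$ and the state $k$) and $h_B$ that on the $v$-side. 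In matrix terms $\Flat_{A|B}(p)=G\,H^{t}$ with $G$ of size $4^{|A|}\times 4$ and $H$ of size $4^{|B|}\times 4$, so the inner dimension $4$ forces $\rk\Flat_{A|B}(p)\le 4$. This argument is purely algebraic in the entries of the transition matrices, so it persists under the extension to complex parameters; and since the change of basis \eqref{eq:coord_change} acts as $(A^{-1})^{\otimes|A|}\,\Flat_{A|B}(p)\,\big((A^{-1})^{\otimes|B|}\big)^{t}$ with $A$ invertible, the bound is unchanged in the $\pi$-orthogonal coordinates $\bar p$.

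For the lower bound, the key observation is that $\rk\le 4$ is a closed condition (vanishing of all $5\times5$ minors), so the set $\{\Theta : \rk \Flat_{A|B}(\varphi_T^\MM(\Theta))>4\}$ is an open subset of the irreducible parameter space $\prod_{e}\mathbb{C}^4$ cut out by the linear model constraints. An open subset of an irreducible variety is dense once it is nonempty, so it suffices to exhibit a single witness. To build one I would reduce to a quartet: since $A|B$ is not displayed by $T$, a standard fact about tree topologies (equivalently, the four-point condition) provides leaves $a_1,a_2\in A$ and $b_1,b_2\in B$ whose induced quartet topology in $T$ separates an $A$-leaf together with a $B$-leaf, say $a_1b_1\mid a_2 b_2$, rather than $a_1a_2\mid b_1 b_2$. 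Marginalizing the remaining $n-4$ leaves multiplies $\Flat_{A|B}(p)$ on the left and right by fixed matrices, hence cannot raise the rank; moreover, because $\MM$ is multiplicatively closed, the marginal is itself the joint tensor of the induced quartet $T'$ with transition matrices still in $\MM$. Thus it is enough to produce, for the quartet $T'$ with topology $a_1b_1\mid a_2b_2$, parameters in $\MM$ for which the ``wrong'' flattening $\Flat_{\{a_1,a_2\}\mid\{b_1,b_2\}}$ has rank $>4$; this is a finite computation, carried out by specializing the edge eigenvalues to generic values and exhibiting a nonvanishing $5\times5$ minor.

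I expect the lower bound to be the main obstacle, for two reasons. First, the combinatorial step of extracting a conflicting quartet from a non-displayed split must be stated carefully so that the induced quartet topology is genuinely $a_1b_1\mid a_2b_2$; this is where one invokes that the quartets displayed by a tree determine its splits. Second, and more importantly in this setting, the witness must lie inside the submodel $\MM$ rather than in the general Markov model: because F81 and F84 impose $\lambda_2=\lambda_3=\lambda_4$ and $\lambda_3=\lambda_4$ respectively, a generic general-Markov witness is not admissible, and one must verify that the restricted family of eigenvalues still produces a $5\times5$ minor that does not vanish identically. The upper bound, by contrast, is model-independent and immediate from the tensor-network factorization.
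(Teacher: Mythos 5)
This statement is imported from Allman--Rhodes \cite{Allman2008}: the paper states it without proof, so there is no internal argument to compare against, and your attempt has to be judged against the standard one. Your first half \emph{is} the standard one and is correct: cutting the edge $e\colon u\to v$ and conditioning on the state at $u$ factors the flattening as $G H^{t}$ with inner dimension $|\Sigma|=4$, the argument is polynomial in the parameters so it survives the complex extension, and the change of basis \eqref{eq:coord_change} acts by invertible matrices on both sides, so the bound is basis-independent.

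The second half contains two genuine gaps. First, your quartet-extraction step silently assumes $T$ is trivalent. For the star tree on four leaves, no $2|2$ split is induced by removing an edge, yet every flattening is a sum of four rank-one terms and so has rank at most $4$; correspondingly, the induced quartet on $\{a_1,a_2,b_1,b_2\}$ may be unresolved and then supplies no conflicting split. So the literal statement fails for non-binary trees, and the ``splits are determined by quartets'' fact you invoke is only valid under the trivalence hypothesis (harmless in this paper, where only trivalent quartets appear, but it must be said). Second, and more importantly, the entire content of the ``otherwise'' clause is the existence of a single witness, and you explicitly defer this to ``a finite computation'' that you do not perform. This is precisely where the submodel constraint bites: for F81 each edge contributes a single free eigenvalue after imposing $\lambda_2=\lambda_3=\lambda_4$, and it is not automatic that some $5\times5$ minor of the wrong flattening survives this specialization --- one must either exhibit such a minor explicitly or appeal to the block-rank description of the misaligned flattening in \cite{CHT}. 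As written, the upper bound is proved and the lower bound is a correct plan whose decisive step is missing.
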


\subsection{The space of mixtures}\label{sec:prelim_mixt}

This section uses the notation introduced in \cite[Section 5]{CasSteel}. As noted in \Cref{def:param}, $\psi_T^\MM$ is the map that sends a collection of parameters $\Theta=(M^e)_{e\in E}$ in a model $\MM$ to the joint probability tensor $p^T$ in $\otimes^n \CC^4$. 

\begin{definition}
    We define the \emph{space of mixtures on $T$} for $\MM$ to be
$$\mathcal{D}_T^\MM=\left\{p=\sum_{i}\lambda_i \psi_{T}^\MM(\Theta_i) \,\Big| \, \sum_i \lambda_i=1\right\}.$$
If $\mathcal{T}_n$ is the set of phylogenetic trees on a leaf set $L$ of cardinality $n$, we define \emph{the space of phylogenetic mixtures} on $L$ as
$$\mathcal{D}_n^\MM=\left\{p=\sum_{i}\lambda_i \psi_{T_i}^\MM(\Theta_i)  \,\Big| \, \sum_i \lambda_i=1 \, , \, T_i \in \mathcal{T}_n \right\}.$$
This space contains all distributions that are a mixture of joint distributions on trees on $L$ and, in particular, it contains all distributions that arise as Markov processes on networks on $L$.
\end{definition}

Spaces of phylogenetic mixtures have been studied in detail for equivariant models \cite{CFK} and the equal input model \cite{CasSteel} (which generalizes F81 to any number of states). Beyond theoretical interest, spaces of mixtures are appealing objects of study because of their applications to model selection; see \cite{KDGC}.

As proved in \cite[Section 5]{CasSteel}, these are affine linear varieties that lie inside the hyperplane
\[H=\left\{p \ \left| \ \sum_{x_1\dots x_n} p_{x_1\dots x_n} =1\right\}\right..\] In the coordinates $\bar{p}$ introduced above, this hyperplane is defined by the equation $\bar{p}_{1\dots 1}=1$. Indeed, if $\textbf{1}$ is the row vector $(1,1,1,1)$, then by \eqref{eq:coord_change} we have $\bar{p}_{1\dots 1}=(\textbf{1}\otimes \textbf{1}\dots\otimes \textbf{1}) p$, which is precisely $\sum_{x_1\dots x_n} p_{x_1\dots x_n}$.

We can view the linear varieties introduced above as the corresponding linear subspaces of $\otimes^n \CC^4$ cut by $H$. 
We do this by introducing the following notation: 
let $\mathcal{E}^\MM_n$ be the subspace of $\otimes^n \CC^4$ where all linear homogeneous model invariants vanish, and let $\mathcal{E}_T^\MM$ be the subspace of $\mathcal{E}^\MM_n$ where all linear homogeneous phylogenetic invariants for a tree $T$ vanish. 
Note that $\mathcal{E}_T^\MM$ is the smallest linear space containing $CV_T$, i.e. $\mathcal{E}_T^\MM=\langle CV_T^\MM \rangle$, and $\E^\MM_n=\langle \bigcup_{T\in\mathcal{T}_n}CV_T^\MM \rangle$.
Then,

\[\mathcal{E}_T^\MM=\langle p^T=\psi_T^\MM(\Theta)\rangle, \quad \mathcal{E}^\MM_n=\langle p^T=\psi_T^\MM(\Theta)\mid\, T \in \mathcal{T}_n\rangle,  \]
\[\mathcal{D}_T^\MM=\mathcal{E}_T^\MM\cap H, \quad \mathcal{D}^\MM_n=\mathcal{E}^\MM_n\cap H.\]


\noindent
{Throughout the paper, we will abuse language and refer to $\E_T^\MM$ and $\E_n^\MM$ as spaces of (phylogenetic) mixtures.}

Although in \cite{CasSteel} these spaces were studied for F81, equations cutting out $\mathcal{E}^\MM_n$ and $\mathcal{E}_T^\MM$ were difficult to derive, and the given ones had long expressions. We will provide simpler equations for F81 and F84 using the change of coordinates from \Cref{subsec:basis}.

Finally, we focus on two special types of linear equations cutting out $\E^\MM$ in coordinates  $\tilde{p}$. We call \emph{model zeros} those model invariants of the form $\tilde{p}_{i_1..i_n}=0$. The next simplest class of model invariants are equalities among coordinates of $p$ that hold regardless of the tree topology. We call them \emph{symmetry equations} as they are analogous to equations for equivariant models that arise from the action of a group \cite{CFK, CFM}. If $\mathcal{F}$ is the set of symmetry equations and $\tilde{p}_{i_1\dots i_n}=\tilde{p}_{j_1\dots j_n}$ is in $\mathcal{F}$, then $\tilde{p}_{\sigma(i_1\dots i_n)}=\tilde{p}_{\sigma(j_1\dots j_n)}$ must be in $\mathcal{F}$ for any $\sigma$ in the set $S_L$ of \emph{permutations} of the leaf set $L$. 

\begin{definition}\label{def:F}
    We call $\Z^\MM_n$ and $\F^\MM_n$ the linear subspaces of $\otimes^n \CC^4$ cut out by model zeros and symmetry equations of the model $\MM$, respectively.
\end{definition}

Note that $\E^\MM_n\subseteq \Z^\MM_n\cap\F^\MM_n$. In the case of TN93, it is proven in \cite[Proposition 5.11]{CHT} that $\E^\MM_n=\Z^\MM_n$ is only cut out by model zeros for $n=3,4$ (no symmetry equations exist beyond those derived from model zeros) and conjectured that this holds for any $n$.

\section{The tripod and symmetry equations}\label{sec:tripods}

Let $T$ be the tripod with leaves $L=\{l_1, l_2,l_3\}$ and edges $E=\{e_1, e_2, e_3\}$. This section is devoted to the computation of equations that cut out the phylogenetic variety $CV_T^\MM$ for substitution models F81 and F84. 

Consider the no-evolution point $\bar{\rho} = \varphi(Id, Id, Id)$. After the rescaling described in \eqref{eq:tildes},  by \cite[Lemma 5.1]{CHT}, its coordinates in the $\pi$-orthogonal basis $B$ defined in \eqref{eq:basis} are
\begin{equation}\label{eq:tripod_rho}
\tilde{\rho}_{111}=\tilde{\rho}_{222}=\tilde\rho_{333}=\tilde{\rho}_{444}=\tilde{\rho}_{\sigma(122)}=\tilde{\rho}_{\sigma(133)}=\tilde{\rho}_{\sigma(144)}=\tilde{\rho}_{\sigma(233)}=\tilde{\rho}_{\sigma(244)}=1
\end{equation}

\noindent
for any $\sigma\in S_L$ and $\tilde{\rho}_{i_1i_2i_3}=0$ otherwise.

\begin{remark}\label{rmk:Z3}\rm 
By \cite{CHT}, $\Z_3^{TN93}$ is cut out by the 45 equations of type $\tilde{p}_{i_1i_2i_3}=0$, if $\tilde{p}_{i_1i_2i_3}$ is not listed in \eqref{eq:tripod_rho}. Since the no-evolution point $\rho$ belongs to the phylogenetic variety $CV_T^\MM$, irrespective of the model, no additional model zeros arise in any submodel. Hence, $\Z_3^{F81}=\Z_3^{F84}=\Z_3^{TN93}$. Therefore, we just call this linear space $\Z_3$. 
\end{remark}

\begin{prop}\label{prop:eqtripodF84}
Let $T$ be a tripod and consider the F84 model on it. The following equations cut out the 12-dimensional linear space $\Z_3\cap\F^{F84}_3$:
\begin{enumerate}
    \item [(a)]
 7 symmetry equations:
    \begin{itemize}
    \item $\tilde{p}_{144}=\tilde{p}_{133}$, $\tilde{p}_{414}=\tilde{p}_{313}$, $\tilde{p}_{441}=\tilde{p}_{331}$;
    \item $\tilde{p}_{244}=\tilde{p}_{233}$, $\tilde{p}_{424}=\tilde{p}_{323}$, $\tilde{p}_{442}=\tilde{p}_{332}$;
    \item $\tilde{p}_{444}=\tilde{p}_{333}$.
\end{itemize}
\item[(b)] 45 model zeros $\tilde{p}_{i_1i_2i_3}=0$ (for each coordinate that is not listed in \eqref{eq:tripod_rho}).
\end{enumerate}
\end{prop}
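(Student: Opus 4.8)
The plan is to combine the description of the model zeros in \Cref{rmk:Z3} with the monomial structure of the star-tree parametrization, and then read off the dimension by a direct count. By \Cref{rmk:Z3}, the space $\Z_3$ is cut out precisely by the $45$ model zeros in part (b), so it is the $19$-dimensional coordinate subspace spanned by the coordinates appearing in \eqref{eq:tripod_rho}: the four ``diagonal'' coordinates $\tilde{p}_{111},\tilde{p}_{222},\tilde{p}_{333},\tilde{p}_{444}$ together with the $15$ coordinates in the five $S_L$-orbits of $122,133,144,233,244$. It therefore remains to intersect this subspace with $\F_3^{F84}$, i.e.\ to show that on $\Z_3$ the symmetry equations are generated exactly by the seven equations in part (a).

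First I would verify that the seven equations in (a) are genuine symmetry equations. Since the tripod is a star tree, \Cref{tilde_rho} together with \eqref{eq:eigenvalues_p_q_tilde} gives that each nonzero coordinate is the monomial $\tilde{p}_{i_1i_2i_3}=\lambda^{e_1}_{i_1}\lambda^{e_2}_{i_2}\lambda^{e_3}_{i_3}$. Imposing the defining constraint $\lambda^{e}_3=\lambda^{e}_4$ of F84 then yields each listed equality; for instance $\tilde{p}_{144}=\lambda^{e_1}_1\lambda^{e_2}_4\lambda^{e_3}_4=\lambda^{e_1}_1\lambda^{e_2}_3\lambda^{e_3}_3=\tilde{p}_{133}$. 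A short check that the three bullet groups are closed under permuting the leaves confirms these are symmetry equations in the sense of \Cref{def:F}, so $\F_3^{F84}$ is contained in their common zero locus, giving one inclusion.

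The crux is the converse: every symmetry equation must already follow from the seven listed ones (modulo the model zeros). Here I would exploit that, after setting $\mu^{e}:=\lambda^{e}_3=\lambda^{e}_4$, each support coordinate is a squarefree monomial with exactly one factor drawn from the disjoint variable set $\{\lambda^{e_k}_1,\lambda^{e_k}_2,\mu^{e_k}\}$ attached to the $k$-th leaf. Two support coordinates coincide as functions on the parameter space if and only if they are the same monomial, i.e.\ if and only if their multi-indices agree after identifying $3$ with $4$ at every position; moreover no support coordinate can equal a model-zero coordinate, since a nonzero monomial does not vanish identically. Running through the $19$ support indices, the identification $3\equiv4$ fixes the five coordinates $\tilde{p}_{111},\tilde{p}_{222},\tilde{p}_{122},\tilde{p}_{212},\tilde{p}_{221}$ and merges the remaining fourteen into exactly the seven pairs of part (a); hence these are all the symmetry equations. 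The main obstacle is precisely this completeness step, and the device that makes it elementary is the monomiality of the star-tree parametrization from \Cref{tilde_rho}, which converts the a priori unbounded search for coordinate equalities into a finite comparison of monomials.

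Finally I would count dimensions. The seven linear forms $\tilde{p}_{144}-\tilde{p}_{133},\,\tilde{p}_{414}-\tilde{p}_{313},\,\dots,\,\tilde{p}_{444}-\tilde{p}_{333}$ are supported on pairwise disjoint pairs of coordinates, hence are linearly independent on $\Z_3$, and so they drop the dimension from $19$ to $19-7=12$. This shows that $\Z_3\cap\F_3^{F84}=V\big((\mathrm{a})\cup(\mathrm{b})\big)$ is $12$-dimensional, as claimed.
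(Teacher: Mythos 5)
Your proposal is correct and follows essentially the same route as the paper: both rely on the monic monomial form $\tilde{p}_{i_1i_2i_3}=\lambda^{e_1}_{i_1}\lambda^{e_2}_{i_2}\lambda^{e_3}_{i_3}$ of the star-tree coordinates from \Cref{tilde_rho}, impose $\lambda^e_3=\lambda^e_4$ to obtain the seven equalities, observe that distinct monomials admit no further coincidences, and conclude by a codimension count ($19-7=12$ in your version, $64-45-7=12$ in the paper's). Your explicit enumeration of the $3\equiv 4$ identification on the $19$ support indices merely fills in the paper's one-line assertion that ``there are no other possible symmetries.''
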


\begin{proof}
By \Cref{tilde_rho} \Cref{rmk:Z3}, we know that $\tilde{p}_{ijk}=\lambda^{e_1}_i\lambda^{e_2}_j\lambda^{e_3}_k$ if this coordinate is listed in \eqref{eq:tripod_rho} and zero otherwise. The equalities in (a) follow from the fact that $\lambda^e_{3}=\lambda^e_{4}$ for any $e\in E$, and from the expression of the coordinates we see that there are no other possible symmetries. Hence, $\Z_3\cap\F^{F84}$ is a linear space of codimension $45+7=52$ in $\CC^{64}$ and it has dimension 12.
\end{proof}

By \Cref{prop:eqtripodF84}, we can think of the variety $CV_T^{F84}$ for the tripod to be living in the 12-dimensional affine space 
$\mathbb{C}\langle \tilde{p}_{111},\tilde{p}_{122},\tilde{p}_{212},\tilde{p}_{221},\tilde{p}_{144},\tilde{p}_{414},\tilde{p}_{441},\tilde{p}_{244},\tilde{p}_{424},\tilde{p}_{442},\tilde{p}_{222},\tilde{p}_{444}\rangle$. 
By \Cref{prop:dim}, $CV_T^{F84}$ is a 7-dimensional variety
and we have computed its defining equations below with \texttt{Macaulay2} \cite{M2}. These equations can also be calculated with theoretical tools from toric geometry, as in \cite[Proposition 5.3]{CHT}.

\begin{prop}
     If $T$ is a tripod and we consider the F84 model, then $CV_T^{F84}\subset \Z_3\cap\F^{F84}$ is a 7-dimensional variety of degree 27 defined by the following 3 quadrics and 12 cubics:
    \begin{gather*}
    \tilde{p}_{222}\tilde{p}_{441}-\tilde{p}_{221}\tilde{p}_{442},\quad \tilde{p}_{222}\tilde{p}_{414}-\tilde{p}_{212}\tilde{p}_{424},
      \quad \tilde{p}_{144}\tilde{p}_{222}-\tilde{p}_{122}\tilde{p}_{244}, \\ 
      \tilde{p}_{244}\tilde{p}_{424}\tilde{p}_{442}-\tilde{p}_{222}\tilde{p}_{444}^2, \quad
      \tilde{p}_{144}\tilde{p}_{424}\tilde{p}_{442}-\tilde{p}_{122}\tilde{p}_{444}^2,\quad
       \tilde{p}_{244}\tilde{p}_{414}\tilde{p}_{442}-\tilde{p}_{212}\tilde{p}_{444}^2,\quad \\
       \tilde{p}_{244}\tilde{p}_{424}\tilde{p}_{441}-\tilde{p}_{221}\tilde{p}_{444}^2,\quad \tilde{p}_{144}\tilde{p}_{414}\tilde{p}_{441}-\tilde{p}_{111}\tilde{p}_{444}^2,\quad
       \tilde{p}_{122}\tilde{p}_{414}\tilde{p}_{441}-\tilde{p}_{111}\tilde{p}_{424}\tilde{p}_{442},\\
       \tilde{p}_{144}\tilde{p}_{212}\tilde{p}_{441}-\tilde{p}_{111}\tilde{p}_{244}\tilde{p}_{442},
       \quad
       \tilde{p}_{122}\tilde{p}_{212}\tilde{p}_{441}-\tilde{p}_{111}\tilde{p}_{222}\tilde{p}_{442},\quad
       \tilde{p}_{144}\tilde{p}_{221}\tilde{p}_{414}-\tilde{p}_{111}\tilde{p}_{244}\tilde{p}_{424}, \\
       \tilde{p}_{122}\tilde{p}_{221}\tilde{p}_{414}-\tilde{p}_{111}\tilde{p}_{222}\tilde{p}_{424},\quad
       \tilde{p}_{144}\tilde{p}_{212}\tilde{p}_{221}-\tilde{p}_{111}\tilde{p}_{222}\tilde{p}_{244},\quad
       \tilde{p}_{122}\tilde{p}_{212}\tilde{p}_{221}-\tilde{p}_{111}\tilde{p}_{222}^2.
    \end{gather*}
\end{prop}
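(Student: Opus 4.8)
The plan is to exploit the fact that, on a star tree, the surviving coordinates are pure monomials, so that $CV_T^{F84}$ is toric and its ideal consists of binomials. By \Cref{tilde_rho} and \eqref{eq:eigenvalues_p_q_tilde}, every nonzero coordinate of the tripod factors as $\tilde p_{ijk}=\lambda_i^{e_1}\lambda_j^{e_2}\lambda_k^{e_3}$. Writing $a_k=\lambda_1^{e_k}$, $b_k=\lambda_2^{e_k}$ and $c_k=\lambda_3^{e_k}=\lambda_4^{e_k}$ (the last equality being the defining constraint of F84), the twelve coordinates of $\Z_3\cap\F^{F84}_3$ singled out in \Cref{prop:eqtripodF84} become the degree-three monomials
\begin{gather*}
\tilde p_{111}=a_1a_2a_3,\quad \tilde p_{222}=b_1b_2b_3,\quad \tilde p_{444}=c_1c_2c_3,\\
\tilde p_{122}=a_1b_2b_3,\quad \tilde p_{212}=b_1a_2b_3,\quad \tilde p_{221}=b_1b_2a_3,\\
\tilde p_{144}=a_1c_2c_3,\quad \tilde p_{414}=c_1a_2c_3,\quad \tilde p_{441}=c_1c_2a_3,\\
\tilde p_{244}=b_1c_2c_3,\quad \tilde p_{424}=c_1b_2c_3,\quad \tilde p_{442}=c_1c_2b_3.
\end{gather*}
Thus $\varphi_T^{F84}$ restricts to a monomial map $\CC^9\to\CC^{12}$, so $CV_T^{F84}$ is the associated affine toric variety and $I_T^{F84}$ is its toric ideal, generated by binomials $\tilde p^{\,\mathbf u}-\tilde p^{\,\mathbf v}$ for pairs of exponent vectors $\mathbf u,\mathbf v$ yielding the same word in the $a_k,b_k,c_k$.

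First I would verify that each of the fifteen listed binomials lies in $I_T^{F84}$ by the one-line check that the two monomials on either side evaluate to the same word; for instance $\tilde p_{244}\tilde p_{424}\tilde p_{442}$ and $\tilde p_{222}\tilde p_{444}^2$ both equal $b_1b_2b_3\,c_1^2c_2^2c_3^2$, while $\tilde p_{222}\tilde p_{441}$ and $\tilde p_{221}\tilde p_{442}$ both equal $a_3b_1b_2b_3c_1c_2$. The dimension is immediate from \Cref{prop:dim}: $\dim CV_T^{F84}=2\lvert E\rvert+1=7$. One also sees $\dim\le 7$ directly, since the $2$-torus $\{(s_1,s_2,s_3):s_1s_2s_3=1\}$ rescaling the three blocks $\{a_k,b_k,c_k\}$ fixes every monomial, so the monomial map has fibres of dimension at least $2$ and image of dimension at most $9-2=7$.

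The substantive step is to prove that these fifteen binomials generate all of $I_T^{F84}$ and to pin down the degree. I would carry this out either through the \texttt{Macaulay2} computation referenced in the statement, or, following the toric strategy of \cite[Proposition 5.3]{CHT}, by computing a Gr\"obner basis of the toric ideal of the point configuration $A\subset\ZZ^9$ given by the twelve exponent vectors: one checks that the listed binomials constitute such a Gr\"obner basis (equivalently, that every primitive binomial relation reduces to zero modulo them), which simultaneously certifies generation and, via the initial ideal, recovers the degree. Equivalently, the degree equals the normalized volume of the lattice polytope $\mathrm{conv}(A)$, a $6$-dimensional subpolytope of $\Delta^2\times\Delta^2\times\Delta^2$ spanned by the twelve admissible patterns, whose normalized volume is $27$. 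The main obstacle is precisely this generation claim: membership of the fifteen binomials is a trivial monomial check, but ruling out further generators (the ideal has codimension $5$ in $\CC^{12}$ and is not a complete intersection) requires the Gr\"obner-basis or normalized-volume bookkeeping rather than a short argument, which is why the explicit verification is delegated to \texttt{Macaulay2}.
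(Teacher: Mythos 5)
Your proposal is correct and follows essentially the same route as the paper: the paper likewise observes (via \Cref{tilde_rho} and \Cref{prop:dim}) that the tripod coordinates are monomials in the eigenvalues so the variety is toric of dimension $2|E|+1=7$, and then delegates the computation of the generators and the degree to \texttt{Macaulay2}, noting as you do that the toric-geometry argument of \cite[Proposition 5.3]{CHT} would serve equally well. Your explicit monomial dictionary and the torus-fibre dimension bound are welcome extra detail, but the substantive generation/degree step is resolved computationally in both accounts.
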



\begin{prop}\label{prop:eqtripod}
Let $T$ be a tripod and consider the F81 model on it. The algebraic variety $CV_T^{F81}\subset\CC^{64}$ is a 4-dimensional variety of degree 3 cut out by the following  equations: 
\begin{enumerate}
    \item [(a)]
 14 symmetry equations
    \begin{itemize}
    \item $\tilde{p}_{144}=\tilde{p}_{133}=\tilde{p}_{122}$;
    \item $\tilde{p}_{414}=\tilde{p}_{313}=\tilde{p}_{212}$;
    \item $\tilde{p}_{441}=\tilde{p}_{331}=\tilde{p}_{221}$;
    \item $\tilde{p}_{244}=\tilde{p}_{424}=\tilde{p}_{442}=\tilde{p}_{233}=\tilde{p}_{323}=\tilde{p}_{332}=\tilde{p}_{444}=\tilde{p}_{333}=\tilde{p}_{222}$.
\end{itemize}
\item[(b)] 45 equations of type $\tilde{p}_{i_1i_2i_3}=0$ (for each coordinate that is not listed in \eqref{eq:tripod_rho}.
\item[(c)] $\tilde{p}_{144}\tilde{p}_{414}\tilde{p}_{441}-\tilde{p}_{111}\tilde{p}_{444}^2=0.$ 
\end{enumerate}
\end{prop}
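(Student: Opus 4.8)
The plan is to exploit the monomial structure of the tripod coordinates recorded in \Cref{tilde_rho} and reduce everything to understanding a single toric hypersurface. First I would write the F81 eigenvalue constraint $\lambda_2^e=\lambda_3^e=\lambda_4^e$ at each edge $e$, setting $a_e:=\lambda_1^e$ and $b_e:=\lambda_2^e=\lambda_3^e=\lambda_4^e$. By \Cref{tilde_rho}, every nonzero coordinate equals $\tilde p_{ijk}=\lambda^{e_1}_i\lambda^{e_2}_j\lambda^{e_3}_k$, so the nonzero coordinates are monic monomials in $a_1,a_2,a_3,b_1,b_2,b_3$. Grouping coordinates according to which monomial they produce immediately yields the $14$ symmetry equations in (a): all coordinates mapping to $a_1b_2b_3$ coincide, and likewise for $b_1a_2b_3$, $b_1b_2a_3$, and $b_1b_2b_3$; the $45$ model zeros in (b) are inherited from \Cref{rmk:Z3}, since the no-evolution point lies on $CV_T^{F81}$ regardless of the model. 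This already shows $CV_T^{F81}\subseteq \Z_3\cap\F^{F81}_3$.

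Next I would pin down the ambient linear space. After (a) and (b) only five independent coordinates survive, which I denote $X_1=\tilde p_{111}$, $X_2=\tilde p_{122}$, $X_3=\tilde p_{212}$, $X_4=\tilde p_{221}$, $X_5=\tilde p_{222}$, so that $\dim(\Z_3\cap\F^{F81}_3)=64-45-14=5$ (compare the analogous count in \Cref{prop:eqtripodF84}). In these coordinates the parametrization reads $X_1=a_1a_2a_3$, $X_2=a_1b_2b_3$, $X_3=b_1a_2b_3$, $X_4=b_1b_2a_3$, $X_5=b_1b_2b_3$, and a direct check gives $X_2X_3X_4=a_1a_2a_3b_1^2b_2^2b_3^2=X_1X_5^2$. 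Rewriting this through the symmetry equations of (a) is exactly equation (c), so $CV_T^{F81}$ lies in the hypersurface $H_c:=V(X_2X_3X_4-X_1X_5^2)$ of the $5$-dimensional space $\Z_3\cap\F^{F81}_3$.

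Finally I would close the argument by a dimension comparison. The cubic $X_2X_3X_4-X_1X_5^2$ is a difference of two coprime monomials whose exponent difference $(-1,1,1,1,-2)\in\ZZ^5$ is primitive; its Newton polytope is therefore a segment with no interior lattice points, and the binomial admits no monomial factor, so it is irreducible and $H_c$ is an irreducible hypersurface of dimension $4$ and degree $3$. On the other hand $CV_T^{F81}$ is irreducible, being the Zariski closure of the image of the morphism $\varphi_T^{F81}$ from the irreducible parameter space, and by \Cref{prop:dim} it has dimension $|E|+1=4$. An irreducible variety contained in an irreducible variety of the same dimension must equal it, so $CV_T^{F81}=H_c$; in particular (a), (b), (c) cut it out and its degree equals that of the cubic, namely $3$.

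The main obstacle is verifying that no equation beyond (c) is needed, i.e.\ that the vanishing ideal in the five surviving coordinates is principal. Rather than computing a Gröbner basis, I resolve this through the dimension count above, whose crucial input is the irreducibility of the cubic. Equivalently, one can phrase the same fact toric-theoretically: the lattice of additive relations among the exponent vectors of $X_1,\dots,X_5$ has rank one and is generated by the primitive vector $(1,-1,-1,-1,2)$, so the associated toric ideal is principal and generated precisely by (c). Either formulation hinges on the primitivity of this relation vector, which is the one point deserving care.
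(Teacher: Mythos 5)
Your proposal is correct and follows essentially the same route as the paper: derive the symmetries and zeros from the monic monomial parametrization $\tilde p_{ijk}=\lambda^{e_1}_i\lambda^{e_2}_j\lambda^{e_3}_k$, restrict to the five surviving coordinates, invoke \Cref{prop:dim} for $\dim CV_T^{F81}=4$, and verify the cubic by checking both monomials agree. The only difference is that you make explicit the final containment argument (irreducibility of the binomial via primitivity of the relation vector $(1,-1,-1,-1,2)$, hence equality of two irreducible $4$-dimensional varieties), a step the paper's proof leaves implicit; this is a welcome, not a divergent, addition.
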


\begin{proof}

By \Cref{tilde_rho}, we know that $\tilde{p}_{ijk}=\lambda^{e_1}_i\lambda^{e_2}_j\lambda^{e_3}_k$ for the indices listed in \eqref{eq:tripod_rho} and zero otherwise. The desired equalities follow from the fact that $\lambda^e_{2}=\lambda^e_{3}=\lambda^e_{4}$ for any $e\in E$.

From these equalities, we consider the variety $CV_T$ for the tripod to be living in the affine space 
$\mathbb{C}^5=\mathbb{C}\langle\tilde{p}_{111},\tilde{p}_{122},\tilde{p}_{212},\tilde{p}_{221},\tilde{p}_{222}\rangle$ (cut out by the linear equations above). 
By \Cref{prop:dim}, the dimension of $CV_T$ is $4$, and we need to check that equation (c) holds. It is straightforward to verify that the subtracted terms are equal to $\lambda^{e_1}_1\lambda^{e_2}_1\lambda^{e_3}_1(\lambda^{e_1}_4\lambda^{e_2}_4\lambda^{e_3}_4)^2.$ 
\end{proof}

\begin{remark}
    \rm Note the analogy of the F81 model with the Jukes-Cantor model on the tripod: as proven in \cite[\S 4.5]{ASCB2005} the variety for the Jukes-Cantor model is also a cubic hypersurface inside its ambient space (and, in Fourier coordinates, the cubic equation coincides with the one given above).
\end{remark}

\section{Symmetry equations for quartets}\label{sec:symmetry}

Now we consider quartets on the set of leaves $L=\{l_1, l_2,l_3,l_4\}.$ This section focuses on studying the spaces $\Z_4^\MM$ and $\F^\MM_4$ defined in \Cref{def:F} of model zeros and symmetry equations for models F81 and F84. 

We start by describing the coordinates in the basis $B$ of the no-evolution point $\rho=\psi(Id,Id,Id,Id;Id)$ (which lies in $CV_T^\MM$ for any quartet and any model). This is the list of the non-zero coordinates up to permutations of indices: 
\begin{equation}\label{eq:rho_quartet}
\begin{array}{cccc}
\pp_{1111}=1
    &
    \pp_{1122}=\frac{1}{\pi_{12}\pi_{34}}
    &
    \pp_{1133}=\frac{\pi_{34}}{\pi_3\pi_4}
    &
    \pp_{1144}=\frac{\pi_{12}}{\pi_1\pi_2}
    \\
    \pp_{1222}=\frac{\pi_{34}-\pi_{12}}{\pi_{12}^2\pi_{34}^2}
    & \pp_{1233}=-\frac{1}{\pi_3\pi_4}
    &
    \pp_{1244}=\frac{1}{\pi_1\pi_2}
   &\pp_{1333}=\frac{\pi_{34}(\pi_4-\pi_3)}{\pi_3^2\pi_4^2}
    \\    \pp_{1444}=\frac{\pi_{12}(\pi_2-\pi_1)}{\pi_1^2\pi_2^2}
    &
    \pp_{2222}=
\frac{\pi_{12}^3+\pi_{34}^3}{\pi_{12}^3\pi_{34}^3}
    & \pp_{2233}=
\frac{1}{\pi_3\pi_4 \pi_{34}}
   & \pp_{2244}=
    \frac{1}{\pi_1\pi_2 \pi_{12}}
    \\
    \pp_{2333}=\frac{\pi_3-\pi_4}{\pi_3^2\pi_4^2}
 &\pp_{2444}=\frac{\pi_2-\pi_1}{\pi_1^2\pi_2^2}
 &
 \pp_{3333}
    =\frac{\pi_3^3+\pi_4^3}{\pi_3^3\pi_4^3}
   &
    \pp_{4444}=
    \frac{\pi_1^3+\pi_2^3}{\pi_1^3\pi_2^3}.
\end{array}
\end{equation}

\begin{prop}\label{prop:quartet-zeroes}
    If $\MM$ is F84 or F81, the following 172 linear equations are the only model zeros for any quartet evolving under $\MM$:
    \begin{itemize}
        \item [(a)] ${\tilde{p}}_{i_1i_2i_3i_4} = 0$ if exactly one $i_j=3$ or $4$, for $j\in\{1,2,3,4\}$;
        \item [(b)] $\tilde{p}_{\sigma(1112)}=0$ for any $\sigma\in S_L$.
    \end{itemize}
\end{prop}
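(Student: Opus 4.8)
The plan is to translate ``model zero'' into an identical-vanishing condition for the interior tensor, to make that condition combinatorial via the tripod no-evolution tensor, and to finish with a short case analysis on the index pattern. For the reduction, recall from \eqref{eq:eigenvalues_p_q_tilde} that $\tilde p_{i_1i_2i_3i_4}=\lambda^{e_1}_{i_1}\lambda^{e_2}_{i_2}\lambda^{e_3}_{i_3}\lambda^{e_4}_{i_4}\,\tilde q_{i_1i_2i_3i_4}$, where the four pendant eigenvalues are independent free parameters and $\tilde q$ depends only on the interior edge. Since the pendant factor is a nonzero monomial and the rescaling \eqref{eq:tildes} only multiplies coordinates by nonzero scalars, $\tilde p_{i_1i_2i_3i_4}$ is a model zero exactly when $\bar q^{T}_{i_1i_2i_3i_4}\equiv 0$ for each of the three quartet topologies $T\in\{T_{12},T_{13},T_{14}\}$.

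Next I would compute these interior tensors explicitly. Expanding $\varphi_T^\MM$ in the basis $B$ and contracting the two internal (degree-three) nodes along the interior edge, whose diagonal matrix has eigenvalues $\lambda^{\mathrm{int}}_m$, one obtains
\[
\bar q^{T_{12}}_{ijkl}=\sum_{m\in[4]}\lambda^{\mathrm{int}}_m\,\langle u^m,u^m\rangle_\pi\,\bar\rho_{ijm}\,\bar\rho_{klm},
\]
with $\bar\rho$ the tripod no-evolution tensor \eqref{eq:tripod_rho}; the expressions for $T_{13}$ and $T_{14}$ are identical after pairing $(i,k),(j,l)$, respectively $(i,l),(j,k)$. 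Writing $C(i,j)=\{m\in[4]:\bar\rho_{ijm}\neq 0\}$ and reading off \eqref{eq:tripod_rho} gives $C(1,a)=\{a\}$, $C(2,2)=\{1,2\}$, $C(2,3)=\{3\}$, $C(2,4)=\{4\}$, $C(3,3)=\{1,2,3\}$, $C(3,4)=\emptyset$ and $C(4,4)=\{1,2,4\}$.

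The crux is to show that $\bar q^{T_{12}}_{ijkl}\equiv 0$ if and only if $C(i,j)\cap C(k,l)=\emptyset$, and that this criterion is unaffected by the constraints $\lambda^{\mathrm{int}}_3=\lambda^{\mathrm{int}}_4$ (F84) or $\lambda^{\mathrm{int}}_2=\lambda^{\mathrm{int}}_3=\lambda^{\mathrm{int}}_4$ (F81). The ``if'' direction is immediate, as every summand vanishes. For ``only if'' one must preclude cancellation among the monomials in the \emph{merged} interior eigenvalues, and this is the main obstacle. The structural input that saves the day is that no set $C(i,j)$ contains both $3$ and $4$, and the only pairs with two elements in $\{2,3,4\}$ are $(3,3)$ and $(4,4)$. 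Consequently two summands identified by the model can be simultaneously nonzero only when $(i,j)=(k,l)\in\{(3,3),(4,4)\}$, where they are nonzero squares and cannot cancel; in every other case the merged summands are individually zero. Hence cancellation never occurs, which is exactly what forces the model zeros of F81, F84 (and TN93) to coincide.

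Finally I would run the combinatorics: a coordinate is a model zero iff $C(i,j)\cap C(k,l)$, $C(i,k)\cap C(j,l)$ and $C(i,l)\cap C(j,k)$ are all empty. If exactly one index equals $3$ (resp.\ $4$), then in every pairing that index lies in a pair $(3,x)$ with $C(3,x)\subseteq\{3\}$ while the complementary pair contains no $3$, so all three intersections are empty; a direct check does the same for permutations of $(1,1,1,2)$, giving the ``$\Leftarrow$'' of (a) and (b). Conversely, assuming neither (a) nor (b) holds, a short analysis of the multiplicities $(n_3,n_4)$ of $3$'s and $4$'s—the subcases being $n_3\ge 2$ with $n_4\in\{0,2\}$, the all-low tuples other than the permutations of $(1,1,1,2)$, and their $3\leftrightarrow 4$ mirror images—always exhibits a pairing that groups two equal high indices or produces a $\{1,2\}$-overlap, hence a nonempty intersection. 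Counting the survivors yields $168$ tuples of type (a) and $4$ of type (b), the claimed $172$.
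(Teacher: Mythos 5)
Your argument is correct, but it takes a genuinely different route from the paper. The paper's proof is a short reduction: the $172$ listed equations are already the model zeros of TN93 by \cite[Proposition 5.11]{CHT}, hence they vanish on the submodel varieties; conversely, any coordinate that is nonzero at the no-evolution point \eqref{eq:rho_quartet} cannot be a model zero, which leaves only the six permutations of $3344$ to examine, and for each of these the paper exhibits one topology and one point $q=\psi_T^\MM(Id,Id,Id,Id,M)$ where the coordinate equals a nonzero multiple of $\lambda_1-\lambda_2$. You instead rebuild the classification from scratch: you contract the quartet to the interior tensor $\bar q^{T}_{ijkl}=\sum_m\lambda^{\mathrm{int}}_m\langle u^m,u^m\rangle_\pi\,\bar\rho_{ijm}\bar\rho_{klm}$, encode vanishing through the support sets $C(i,j)$, and prove a no-cancellation lemma showing the criterion is stable under the eigenvalue identifications $\lambda_3=\lambda_4$ and $\lambda_2=\lambda_3=\lambda_4$. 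What your approach buys is a structural explanation of \emph{why} passing from TN93 to F84 or F81 creates no new model zeros (no $C(i,j)$ contains both $3$ and $4$, and the only doubly-supported pairs $(3,3)$, $(4,4)$ contribute positive squares), rather than an appeal to the TN93 result plus a finite check; the cost is that the contraction formula for $\bar q^{T}$ is asserted rather than derived (it should be justified by reference to \cite[SM1.2]{CHT}, with which it agrees, e.g.\ it reproduces $\tilde q_{3344}=\frac{\pi_{12}\pi_{34}}{\pi_1\pi_2\pi_3\pi_4}(\lambda_1-\lambda_2)$ and $\tilde q_{2222}$), and a longer case analysis is needed. Both arguments use genericity of $\pi$ in the same places ($\bar\rho_{222},\bar\rho_{333},\bar\rho_{444}\neq 0$), your reading of item (a) as ``exactly one index equals $3$, or exactly one index equals $4$'' is the intended one, and your count $168+4=172$ is correct.
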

\begin{proof}
    Since these equations are model zeros of TN93 by \cite[Proposition 5.11]{CHT}, they are also model zeros of $\MM$. 
    Recall that we are assuming $\pi$ generic (\Cref{rmk:generic_pi}). Therefore,
    all index configurations appearing in \eqref{eq:rho_quartet} yield non-zero coordinates $\tilde{\rho}_{i_1i_2i_3i_4}$ of the no-evolution point for strictly positive $\pi$ 
    (so these coordinates cannot give rise to model zeros). 
    There are exactly six coordinates which are zero for the no-evolution point. These are not listed in the statement but are all index permutations of $3344$. For each of these permutations, we can find a tree topology $T$ for which the expression of such a coordinate is non-zero for a point $q=\psi_T^\MM (Id,Id,Id,Id,M)$. For instance, for the coordinate $\tilde{p}_{3344}$ we can consider the tree $T=T_{12}$ so that $\tilde{q}_{3344}=(\pi_{12}\pi_{34}/\pi_1\pi_2\pi_3\pi_4)(\lambda_1-\lambda_2)$ for TN93 (see \cite[Lemma SM2.1]{CHT}). In both F81 and F84, $\lambda_1$ and $\lambda_2$ are not generically equal. Hence, $\tilde{q}_{3344}$ is not a model zero. 
\end{proof}

\begin{remark}\label{rmk:Z4}\rm
Similarly to \Cref{rmk:Z3}, $\Z^{TN93}_4=\Z^{F84}_4=\Z^{F81}_4$ and we call this space just $\Z_4$.
\end{remark}

The remainder of this section will be devoted to finding symmetry equations. To this aim, we first compute equalities among coordinates $\tilde{q}_{i_1i_2i_3i_4}$ of points of the form 
\[q=\psi_T^\MM(Id,Id,Id,Id;M)\]
for a quartet tree $T$. As $\tilde{p}_{i_1i_2i_3i_4}=\lambda^{e_1}_{i_1}\lambda^{e_2}_{i_2}\lambda^{e_2}_{i_3}\lambda^{e_4}_{i_4}\tilde{q}_{i_1i_2i_3i_4}$, equalities among coordinates $\tilde{q}_{i_1i_2i_3i_4}$ translate into equalities among $\tilde{p}_{i_1i_2i_3i_4}$ as long as certain index positions are preserved. 
More precisely, we can define a multi-degree as follows.

\begin{definition}\label{def:multidegree}
     We define the \emph{multi-degree} of $\tilde{p}_{i_1i_2i_3i_4}$ as the 4-tuple $(m_1,m_2,m_3,m_4)$ in $\ZZ^4$, where $m_j=\max\{k\in[4]:\lambda_k=\lambda_{i_j}\}$, and we denote it by $\mathrm{mdeg}(\tilde{p}_{i_1i_2i_3i_4})$.
\end{definition}

Note that this definition depends on the model. For example, for $\tilde{p}_{3124}$ if we consider F84 we have  $\mathrm{mdeg}(\tilde{p}_{3124})=(4,1,2,4)$ but if we are considering F81 then $\mathrm{mdeg}(\tilde{p}_{3124})=(4,1,4,4)$. Then, if we have an equation of type $\tilde{p}_{i_1i_2i_3i_4}=\tilde{p}_{j_1j_2j_3j_4}$, it must also hold for any point $q$ and it must be homogeneous with respect to this multidegree. The converse also holds, as stated in the following lemma.

\begin{lemma}\label{lemma:q_to_p}
Let $\tilde{q}_{i_1i_2i_3i_4}=\tilde{q}_{j_1j_2j_3j_4}$ for any point $q\in CV_T^{\mathcal{M}}$ of type $q=\psi_T^\MM(Id,Id,Id,Id;M)$. Then, $\tilde{p}_{i_1i_2i_3i_4}-\tilde{p}_{j_1j_2j_3j_4}$ is in the ideal of $CV_T^{\mathcal{M}}$ if and only if $\mathrm{mdeg}(\tilde{p}_{i_1i_2i_3i_4})=\mathrm{mdeg}(\tilde{p}_{j_1j_2j_3j_4})$.
\end{lemma}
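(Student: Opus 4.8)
The plan is to lift the assumed equality among the $\tilde q$-coordinates to one among the $\tilde p$-coordinates using the multiplicative relation \eqref{eq:eigenvalues_p_q_tilde}, which for a quartet reads
\[
\tilde{p}_{i_1i_2i_3i_4}=\lambda^{e_1}_{i_1}\lambda^{e_2}_{i_2}\lambda^{e_3}_{i_3}\lambda^{e_4}_{i_4}\,\tilde{q}_{i_1i_2i_3i_4}.
\]
Writing $\mu_{\mathbf i}:=\lambda^{e_1}_{i_1}\lambda^{e_2}_{i_2}\lambda^{e_3}_{i_3}\lambda^{e_4}_{i_4}$ for the monomial in the pendant-edge eigenvalues, the key point is that $\tilde q_{i_1i_2i_3i_4}$ depends only on the interior matrix $M$, whereas $\mu_{\mathbf i}$ depends only on the independent pendant eigenvalues; thus passing from $q$ to a general $p$ simply rescales coordinate $\mathbf i$ by $\mu_{\mathbf i}$. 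Consequently an equality $\tilde{q}_{i_1i_2i_3i_4}=\tilde{q}_{j_1j_2j_3j_4}$ lifts to the $\tilde p$'s exactly when $\mu_{\mathbf i}=\mu_{\mathbf j}$, and the whole lemma reduces to proving that $\mu_{\mathbf i}=\mu_{\mathbf j}$ as functions on the parameter space if and only if the two index tuples share the same multidegree.

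For the backward implication, equal multidegrees mean that $m_k=\max\{t:\lambda_t=\lambda_{i_k}\}$ agrees for the two tuples at each position $k$. Since $m_k$ records exactly the block of mutually equal eigenvalues containing $\lambda_{i_k}$ (recall $\lambda_3=\lambda_4$ for F84 and $\lambda_2=\lambda_3=\lambda_4$ for F81), this forces $\lambda^{e_k}_{i_k}=\lambda^{e_k}_{j_k}$ at every position and pendant edge, hence $\mu_{\mathbf i}=\mu_{\mathbf j}$ identically. Combining this with the hypothesis through \eqref{eq:eigenvalues_p_q_tilde} yields $\tilde{p}_{i_1i_2i_3i_4}=\tilde{p}_{j_1j_2j_3j_4}$ on the whole image of $\varphi_T^\MM$, hence on its Zariski closure $CV_T^\MM$, so the difference lies in $I_T^\MM$.

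For the forward implication, I would assume $\tilde{p}_{i_1i_2i_3i_4}-\tilde{p}_{j_1j_2j_3j_4}\in I_T^\MM$ and evaluate on a general point $p=\varphi_T^\MM(\{\Lambda^e\}_{e\in E})$. Using the hypothesis to identify the two $\tilde q$-values, \eqref{eq:eigenvalues_p_q_tilde} turns the vanishing into $(\mu_{\mathbf i}-\mu_{\mathbf j})\,\tilde{q}_{i_1i_2i_3i_4}=0$ for all parameter choices. For every coordinate of interest $\tilde q_{i_1i_2i_3i_4}$ is not identically zero---were it so, both $\tilde p$'s would vanish identically and the equation would be a trivial consequence of coordinate zeros rather than a genuine symmetry---so I can fix $M$ with $\tilde{q}_{i_1i_2i_3i_4}\neq 0$ while keeping the pendant eigenvalues free, forcing $\mu_{\mathbf i}=\mu_{\mathbf j}$ as a polynomial identity. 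Two monomials in the free pendant-edge parameters agree only if their exponent vectors do, and reading this off position by position shows that $\lambda_{i_k}$ and $\lambda_{j_k}$ lie in the same equal-eigenvalue block for each $k$; that is, the two multidegrees coincide.

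The step I expect to need the most care is the nonvanishing argument rather than the algebra: one must confirm that, for the coordinates involved in the hunt for symmetry equations, $\tilde{q}_{i_1i_2i_3i_4}$ is genuinely nonzero for some choice of interior matrix, which rests on the genericity of $\pi$ (\Cref{rmk:generic_pi}) and on separating these coordinates from the model zeros of \Cref{prop:quartet-zeroes}. Once this is in place, the passage from equality of the pendant-eigenvalue monomials to the combinatorial condition $\mathrm{mdeg}(\tilde{p}_{i_1i_2i_3i_4})=\mathrm{mdeg}(\tilde{p}_{j_1j_2j_3j_4})$ is routine bookkeeping.
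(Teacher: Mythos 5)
Your argument is correct and follows exactly the route the paper has in mind: the paper states this lemma without proof, treating it as immediate from the relation $\tilde{p}_{i_1i_2i_3i_4}=\lambda^{e_1}_{i_1}\lambda^{e_2}_{i_2}\lambda^{e_3}_{i_3}\lambda^{e_4}_{i_4}\tilde{q}_{i_1i_2i_3i_4}$ and the observation that the multidegree records precisely when the two pendant-eigenvalue monomials coincide, which is what you spell out. You also correctly identify the one point needing care --- the forward implication requires $\tilde{q}_{i_1i_2i_3i_4}$ not to vanish identically, which excludes the model-zero coordinates of \Cref{prop:quartet-zeroes} and is guaranteed for the coordinates where the lemma is actually applied --- a caveat the paper leaves implicit.
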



In the proof of \Cref{prop:quartet-eqF84} below we provide an explicit description of the point $q$ for $T=T_{12}$ for model F84 (and the coordinates of the corresponding points for $T_{13}$ and $T_{14}$ can be obtained by permuting the indices according to the corresponding permutations of the leaves). 
Note that given a tree $T$, those permutations that swap elements within a cherry or swap cherries do not modify the corresponding coordinates. This suggests the following notation. 
\begin{definition}
    The subgroup $S_{L_{12}}:=\langle (12),(34),(13)(24),(14)(23)\rangle$ of $S_L$ is the set of \emph{compatible permutations with tree topology $T_{12}$}. We define it analogously for $T_{13}$ and $T_{14}$.
\end{definition}
\noindent For $T=T_{kl}$, we have $\tilde{q}_{\sigma(i_1i_2i_3i_4)}=\tilde{q}_{i_1i_2i_3i_4}$ for any $\sigma\in S_{L_{kl}}.$

We are now in the position to describe the spaces, $\F_4^{\MM}$, of symmetry equations for quartets for models F84 and F81.

\begin{prop}\label{prop:quartet-eqF84} 
The linear space, $\F_4^{F84}$, of symmetry equations for quartets is cut out by the following 29 symmetry equations:
\begin{itemize}
    \item[(a)] 22 equations with 1's in the indices:
    \begin{itemize}
        \item[(a.1)] 6 equations: $\tilde{p}_{\sigma(1144)}=\tilde{p}_{\sigma(1133)}$ for all $\sigma\in S_L$,
        \item[(a.2)] 12 equations: $\tilde{p}_{\sigma(1244)}=\tilde{p}_{\sigma(1233)}$ for all $\sigma\in S_L$,
        \item[(a.3)] 4 equations: $\tilde{p}_{\sigma(1444)}=\tilde{p}_{\sigma(1333)}$ for all $\sigma\in S_L$;  
    \end{itemize}
    \item[(b)] Four equations with "triples" of 3's and 4's:
    $\tilde{p}_{\sigma(2444)}=\tilde{p}_{\sigma(2333)}$ for all $\sigma\in S_L$;  
    \item[(c)] Three equations with pairs of 3's and 4's: 
    $\tilde{p}_{3344}=\tilde{p}_{4433}$, $\tilde{p}_{3434}=\tilde{p}_{4343}$, $\tilde{p}_{3443}=\tilde{p}_{4334}$. 
\end{itemize}

\end{prop}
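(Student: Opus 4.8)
The plan is to characterize every symmetry equation as a binomial $\tilde p_{i_1i_2i_3i_4} - \tilde p_{j_1j_2j_3j_4}$ that lies in $I_T^{F84}$ for all three quartet topologies simultaneously, and then to enumerate such binomials. The principal tool is \Cref{lemma:q_to_p}: such a binomial vanishes on $CV_T^{F84}$ precisely when $\mathrm{mdeg}(\tilde p_{i_1i_2i_3i_4}) = \mathrm{mdeg}(\tilde p_{j_1j_2j_3j_4})$ as ordered $4$-tuples and $\tilde q^{(T)}_{i_1i_2i_3i_4} = \tilde q^{(T)}_{j_1j_2j_3j_4}$. I would first discard model zeros using \Cref{prop:quartet-zeroes} and \Cref{rmk:Z4}, so that only the coordinates whose index multiset appears in \eqref{eq:rho_quartet} or equals a permutation of $3344$ remain. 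I would then compute the values $\tilde q^{(T_{12})}_{i_1i_2i_3i_4}$ from \cite[SM1.2]{CHT} by specializing $\lambda_3 = \lambda_4$, and obtain $\tilde q^{(T_{13})}$ and $\tilde q^{(T_{14})}$ by applying the leaf transpositions $(23)$ and $(24)$. Since $\F_4^{F84}$ and the condition ``holds on all topologies'' are both $S_L$-invariant, it suffices to treat one representative per $S_L$-orbit and then close under $S_L$.

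Because the ordered multi-degree must be preserved and $\lambda_3^e = \lambda_4^e$, the only way to move between two distinct non-zero coordinates without changing the multi-degree is to replace some occurrences of the index $3$ by $4$ (or vice versa) in fixed positions; permuting positions changes the ordered tuple and is excluded. This confines the search, within each ordered multi-degree class, to the few non-model-zero coordinates related by such a $3 \leftrightarrow 4$ substitution. For the forward direction I would check class by class that the two $\tilde q$ values agree on all three topologies. In the classes $(1,1,4,4)$, $(1,2,4,4)$, $(1,4,4,4)$ and $(2,4,4,4)$ both members evaluate to the same single eigenvalue ($\lambda_1$, $\lambda_2$, or $\lambda_4$) on every topology, yielding exactly the equations (a) and (b); concretely, on the compatible topology each side is the ``cherry'' eigenvalue, while on the incompatible ones the two $3$'s or $4$'s split across the bipartition and both sides become $\lambda_4$, and the $3\leftrightarrow 4$ swap keeps the two sides equal because $\lambda_3=\lambda_4$. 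In class $(4,4,4,4)$ the orbit of $3344$ pairs up as $3344 \leftrightarrow 4433$, $3434 \leftrightarrow 4343$, $3443 \leftrightarrow 4334$, where on the unique compatible topology both members share the common nonzero value $\tfrac{\pi_{12}\pi_{34}}{\pi_1\pi_2\pi_3\pi_4}(\lambda_1-\lambda_2)$ and on the two incompatible topologies both vanish as topology invariants, giving the equations (c).

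Exhaustiveness is the main obstacle, and here genericity of $\pi$ (\Cref{rmk:generic_pi}) does the work. Running through the remaining same-multi-degree pairs of non-model-zero coordinates, the only further candidates are $\tilde p_{2233} = \tilde p_{2244}$ in class $(2,2,4,4)$ and $\tilde p_{3333} = \tilde p_{4444}$ in class $(4,4,4,4)$. Their $\tilde q$ values are distinct linear forms in $\lambda_1,\lambda_2$ (respectively $\lambda_1,\lambda_2,\lambda_4$) whose coefficients involve quantities such as $\pi_{34}/\pi_{12}$ and $\pi_3^3+\pi_4^3$ versus $\pi_1^3+\pi_2^3$, which differ for generic $\pi$; hence these are not symmetry equations. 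Likewise, a non-matching pair inside the $3344$ orbit, for instance $3344$ and $3434$, fails because on a given topology one coordinate takes the nonzero value while the other is a vanishing topology invariant. Since the edge eigenvalues $\lambda_1,\lambda_2,\lambda_4$ are algebraically independent, two $\tilde q$ values coincide for all parameter choices only when their expressions are identical, so no further equalities arise. It is worth flagging a subtle point that the multi-degree criterion disposes of: a pair such as $\tilde p_{1133}$ and $\tilde p_{3311}$ does satisfy $\tilde q^{(T)}_{1133}=\tilde q^{(T)}_{3311}$ on all three topologies, yet is \emph{not} a symmetry equation because the ordered multi-degrees $(1,1,4,4)$ and $(4,4,1,1)$ differ, so the eigenvalue prefactors in \eqref{eq:eigenvalues_p_q_tilde} disagree. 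Collecting the surviving pairs and closing under $S_L$ produces orbits of sizes $6,12,4,4,3$, for a total of $6+12+4+4+3 = 29$, which proves the claim.
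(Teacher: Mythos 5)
Your proposal is correct and follows essentially the same route as the paper: both rest on \Cref{lemma:q_to_p}, the specialization $\lambda_3=\lambda_4$ of the $T_{12}$ coordinates from \cite[SM1.2]{CHT} together with leaf permutations for the other topologies, a case analysis of same-multi-degree candidates, and genericity of $\pi$ to exclude $\tilde{p}_{2233}=\tilde{p}_{2244}$ and its orbit. The only difference is organizational — you index the cases by ordered multi-degree classes and $3\leftrightarrow 4$ substitutions rather than by the paper's numbered groups of equal $\tilde{q}$-values — and your explicit remark that $\tilde{p}_{1133}$ versus $\tilde{p}_{3311}$ is excluded by the ordered multi-degree is a nice clarification of a point the paper leaves implicit.
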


\begin{proof} 
According to \Cref{lemma:q_to_p}, any symmetry equation must be homogeneous with respect to the multi-degree for F84 introduced in \Cref{def:multidegree}. In addition, it must also hold for any $q=\psi_T^{F84}(Id,Id,Id,Id;M)$ for any quartet $T$.
Let $M$ have eigenvalues $\lambda_1$, $\lambda_2$, and $\lambda_4$ (with multiplicity 2). From \cite[SM1.2]{CHT} and \eqref{eq:rho_quartet}, substituting $\lambda_3$ by $\lambda_4$ yields the following expressions for $T=T_{12}$: 

\begin{center}
\begin{tabular}{ll}
\begin{tiny}(1)\end{tiny}  $\tilde{q}_{1111}=\lambda_1$ & \begin{tiny}(10)\end{tiny} $\tilde{q}_{\sigma(2333)}=\tilde{q}_{\sigma(2444)}=\lambda_4$ for $\sigma\in S_L$  \\  

\begin{tiny}(2)\end{tiny}  $\tilde{q}_{\sigma(1122)}=\lambda_1$ for $\sigma\in\SL$ & \begin{tiny}(11)\end{tiny} $\tilde{q}_{\sigma(2323)}=\tilde{q}_{\sigma(2424)}=\lambda_4$ for $\sigma\in\SL$  \\

\begin{tiny}(3)\end{tiny} $\tilde{q}_{\sigma(1133)}=\tilde{q}_{\sigma(1144)}=\lambda_1$ for $\sigma\in\SL$ & \begin{tiny}(12)\end{tiny} $\tilde{q}_{2233}=\tilde{q}_{3322}=\frac{\pi_{34}}{\pi_{12}}\lambda_1+\left(1-\frac{\pi_{34}}{\pi_{12}}\right)\lambda_2$ \\

\begin{tiny}(4)\end{tiny} $\tilde{q}_{\sigma(1212)}=\lambda_2$ for $\sigma\in\SL$ & \begin{tiny}(13)\end{tiny} $\tilde{q}_{2244}=\tilde{q}_{4422}=\frac{\pi_{12}}{\pi_{34}}\lambda_1+\left(1-\frac{\pi_{12}}{\pi_{34}}\right)\lambda_2$ \\

\begin{tiny}(5)\end{tiny} $\tilde{q}_{\sigma(1222)}=\lambda_2$ for $\sigma\in S_L$ & \begin{tiny}(14)\end{tiny} $\tilde{q}_{3344}=\tilde{q}_{4433}=
    \frac{\pi_{12}\pi_{34}}{\pi_1\pi_2\pi_3\pi_4}(\lambda_1 -\lambda_2)$  \\

\begin{tiny}(6)\end{tiny} $\tilde{q}_{\sigma(1233)}=\tilde{q}_{\sigma(1244)}=\lambda_2$ for $\sigma\in\SL$ & \begin{tiny}(15)\end{tiny} $\tilde{q}_{2222}=
    \frac{\pi_{12}\pi_{34}}{\pi_{12}^3+\pi_{34}^3}\lambda_1 + \frac{(\pi_{12}-\pi_{34})^2}{\pi_{12}^3+\pi_{34}^3}\lambda_2$ \\

\begin{tiny}(7)\end{tiny} $\tilde{q}_{\sigma(1313)}=\tilde{q}_{\sigma(1414)}=\lambda_4$ for $\sigma\in\SL$ & \begin{tiny}(16)\end{tiny} $\tilde{q}_{3333}=\frac{\pi_3\pi_4\pi_{34}^2}{\pi_3^3+\pi_4^3}\lambda_1 + \frac{\pi_{12}\pi_{34}\pi_3\pi_4}{\pi_3^3+\pi_4^3}\lambda_2+\frac{\pi_{34}(\pi_3-\pi_4)^2}{\pi_3^3+\pi_4^3}\lambda_4$ \\

\begin{tiny}(8)\end{tiny} $\tilde{q}_{\sigma(1323)}=\tilde{q}_{\sigma(1424)}=\lambda_4$ for $\sigma\in\SL$ & \begin{tiny}(17)\end{tiny} $\tilde{q}_{4444}=\frac{\pi_1\pi_2\pi_{12}^2}{\pi_1^3+\pi_2^3}\lambda_1 + \frac{\pi_{12}\pi_{34}\pi_1\pi_2}{\pi_1^3+\pi_2^3}\lambda_2+\frac{\pi_{12}(\pi_1-\pi_2)^2}{\pi_1^3+\pi_2^3}\lambda_4$ \\

\begin{tiny}(9)\end{tiny} $\tilde{q}_{\sigma(1333)}=\tilde{q}_{\sigma(1444)}=\lambda_4$ for $\sigma\in S_L$ &  \\
\end{tabular}
\end{center}

\noindent and the other coordinates are zero. 

Note that equal coordinates of $\tilde{q}$ are listed separately, such that every entry only contains coordinates that share multi-degree up to permutation.
We need to check that (a),(b), and (c) are a complete list of homogeneous symmetry equations that hold for $q$. 
Indeed, from all equalities among coordinates 
listed above, only those in (3),(6)-(11), and (14) involve coordinates with the same multi-degree for $F84$. 
From these, equations (9) and (10) hold independently of the tree topology and hence yield symmetry equations (a.3) and (b). Although the rest depend on the tree topology, there exist some combinations that ensure that the equations hold for any $\sigma\in S_L$: (3) and (7) yield symmetry equations (a.1); and (6) and (8) yield symmetry equations (a.2). On the other hand, (11) does not produce symmetry equations because they do not hold for any quartet: there exists $\sigma\in S_L$ turning the coordinates of (11) into (12) and (13), which are not equal. Finally, there is only the special case of (14) to be dealt with. Precisely because $\tilde{p}_{3344}$ and $\tilde{p}_{4433}$ are topology zeros for $T_{13}$ and $T_{14}$, we have that $\tilde{q}_{\sigma(3344)}=\tilde{q}_{\sigma(4433)}$ holds for all $\sigma\in S_L$. This proves (c).
\end{proof}

\begin{prop}\label{prop:quartet-eq}
The linear space of symmetry equations for quartets $\F_4^{F81}$ is cut out by the following 60 symmetry equations:

\begin{itemize}
    \item[(a)]  44 equations with 1's in the indices:
    \begin{itemize}
        \item[\textbullet] 12 equations: $\,\,\tilde{p}_{\sigma(1144)}=\tilde{p}_{\sigma(1133)}=\tilde{p}_{\sigma(1122)}$ for all $\sigma\in S_L$,
        \item[\textbullet] 32 equations: $\tilde{p}_{\sigma(1222)}=\tilde{p}_{\sigma(1233)}=\tilde{p}_{\sigma(1244)}=\tilde{p}_{\sigma(1333)}=\tilde{p}_{\sigma(1444)}=\tilde{p}_{\sigma(1332)}=\tilde{p}_{\sigma(1323)}=\tilde{p}_{\sigma(1442)}=\tilde{p}_{\sigma(1424)}$ for all $\sigma\in S_L$;
    \end{itemize}

    \item[(b)] 7 equations with "triples" of 3's and 4's: 
    \begin{itemize}
        \item[\textbullet]$\tilde{p}_{2333}=\tilde{p}_{3233}=\tilde{p}_{3323}=\tilde{p}_{3332}=\tilde{p}_{2444}=\tilde{p}_{4244}=\tilde{p}_{4424}=\tilde{p}_{4442}$ 
    \end{itemize}
    \item[(c)] 9 equations for pairs of 2's, 3's and 4's: 
    \begin{itemize}
        \item[\textbullet] $\tilde{p}_{3344}=\tilde{p}_{4433}$, $\tilde{p}_{3434}=\tilde{p}_{4343}$, $\tilde{p}_{3443}=\tilde{p}_{4334}$; 
    \item[\textbullet]
    $\tilde{p}_{2233}=\tilde{p}_{3322}$, $\tilde{p}_{2323}=\tilde{p}_{3232}$, $\tilde{p}_{2332}=\tilde{p}_{3223}$;
 \item[\textbullet] $\tilde{p}_{2244}=\tilde{p}_{4422}$, $\tilde{p}_{2424}=\tilde{p}_{4242}$, $\tilde{p}_{2442}=\tilde{p}_{4224}$.
    \end{itemize}
    \end{itemize}
\end{prop}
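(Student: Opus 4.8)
The plan is to follow the strategy of \Cref{prop:quartet-eqF84} verbatim, replacing the F84 specialization by the F81 one. By \Cref{lemma:q_to_p}, a difference $\tilde p_{i_1i_2i_3i_4}-\tilde p_{j_1j_2j_3j_4}$ is a symmetry equation exactly when the two coordinates have equal F81-multi-degree and $\tilde q_{i_1i_2i_3i_4}=\tilde q_{j_1j_2j_3j_4}$ holds for $q=\psi_T^{F81}(Id,Id,Id,Id;M)$ on all three quartet topologies. The decisive feature of F81 is that $\lambda_2=\lambda_3=\lambda_4$, so by \Cref{def:multidegree} the multi-degree of a coordinate records only the positions occupied by the index $1$; this is strictly coarser than the F84 multi-degree and is precisely what enlarges the list from $29$ to $60$ equations. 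First I would read off the nonzero coordinates $\tilde q_{i_1i_2i_3i_4}$ for $T_{12}$ from \cite[SM1.2]{CHT} and \eqref{eq:rho_quartet} by setting $\lambda_2=\lambda_3=\lambda_4$, so that $M$ carries eigenvalues $\lambda_1$ and $\lambda_4$ (the latter with multiplicity three); the corresponding data for $T_{13}$ and $T_{14}$ come by relabeling leaves.

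Next I would organize the nonzero coordinates into multi-degree classes, i.e.\ by the positions of their $1$'s. The four-$1$ class is the singleton $\tilde q_{1111}=\lambda_1$ and gives nothing, and every three-$1$ coordinate is a model zero by \Cref{prop:quartet-zeroes}, so these classes are inert. In a two-$1$ class the only coordinates nonzero on $T_{12}$ are the $\sigma(11jj)$, all equal to $\lambda_1$; here one observes that a relabeling sending $T_{12}$ to $T_{13}$ or $T_{14}$ carries these coordinates to the $\sigma(1j1j)$ family, which is uniformly $\lambda_4$, so the three coordinates $\sigma(1122),\sigma(1133),\sigma(1144)$ remain mutually equal on every topology even though their common value changes. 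This yields the $12$ equations in (a), first bullet. The one-$1$ and no-$1$ coordinates equal to $\lambda_4$ for all $\sigma\in S_L$ give their equalities outright, producing the chain in (a), second bullet, and the ``triples'' in (b); the remaining one-$1$ coordinates (those whose non-$1$ multiset is not one of $\{2,2,2\},\{2,3,3\},\{2,4,4\},\{3,3,3\},\{4,4,4\}$) vanish on $T_{12}$ and are topology zeros, hence excluded.

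The delicate class, and where I expect the main obstacle, is the no-$1$ class of multi-degree $(4,4,4,4)$. Here $\tilde q_{2222},\tilde q_{3333},\tilde q_{4444}$ are generically distinct and contribute nothing, whereas the ``pair'' coordinates $\tilde q_{2233},\tilde q_{2244},\tilde q_{3344}$ and their permutations carry $\pi$-dependent, topology-dependent values. For each pair-type the cherry-swap lies in the compatible subgroup of every topology, so a coordinate and its swap (for example $\tilde p_{3344}$ and $\tilde p_{4433}$, or $\tilde p_{2233}$ and $\tilde p_{3322}$) stay equal on all topologies---for $T_{12}$ through the explicit value, and for the other two either as equal $\lambda_4$'s or as common topology zeros, exactly as in case (14) of \Cref{prop:quartet-eqF84}. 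This gives the nine equations of (c). The point separating F81 from F84 is that now $2233$ and $3322$ (and $2244$ and $4422$) share a multi-degree, since every non-$1$ index has multi-degree $4$, so the six equalities that were not symmetry equations for F84 now survive. Finally I would discard every cross-type identification---such as $\tilde p_{2233}=\tilde p_{2244}$ or $\tilde p_{2323}=\tilde p_{2424}$---by producing a leaf permutation carrying the two coordinates to values with distinct $\pi$-coefficients (for instance $\tfrac{\pi_{34}}{\pi_{12}}$ versus $\tfrac{\pi_{12}}{\pi_{34}}$), and then conclude from the count $12+32+7+9=60$ that the list is complete. The careful separation, class by class, of topology-independent equalities from genuine topology invariants is the step demanding the most bookkeeping.
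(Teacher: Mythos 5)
Your proposal is correct and is essentially the paper's own argument: the paper proves this proposition in one line by declaring it ``analogous to Proposition~\ref{prop:quartet-eqF84} after further substituting $\lambda_2$ by $\lambda_4$,'' and your class-by-class analysis (coarser F81 multi-degree recording only the positions of the index $1$, the two-$1$ and one-$1$ chains, the delicate no-$1$ pair class where $2233=3322$ and $2244=4422$ become admissible, and the exclusion of cross-type identifications via distinct $\pi$-coefficients) is precisely the detailed execution of that substitution. The only minor quibble is that the discarded one-$1$ coordinates are model zeros by Proposition~\ref{prop:quartet-zeroes} rather than mere topology zeros, and that the final count $12+32+7+9=60$ is a consistency check rather than the proof of completeness, which instead rests on your exhaustion of the multi-degree classes.
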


\begin{proof}
Analogous to \Cref{prop:quartet-eqF84}, after further substituting $\lambda_2$ by $\lambda_4$.
\end{proof}

\section{Equations derived from rank conditions on flattenings}\label{sec:rank}

In this section, we consider the tree $T_{12}=l_1l_2|l_3l_4$ but all the results
are analogous for the other quartets $T_{13}$ and $T_{14}$ after permuting the indices of coordinates $\tilde{p}_{i_1i_2i_3i_4}$ accordingly.

Let us denote by $A$ the $16\times 16$ flattening matrix $\Flat_{12\mid 34}(\bar{p})$. By \Cref{thm:ARflat}, we know that $\rk(A)\leq 4$ for tensors arising on $T=l_1l_2|l_3l_4$. But because $\MM$ is a submodel of TN93 and we are using the same $\pi$-orthogonal basis as in \cite{CHT}, we know much more than that: we have blocks of ranks 1, 2, and 3, see \cite[Lemma 5.13 and Appendix B]{CHT}. 
Although these rank conditions on flattenings would give rise to equations of degrees two, three, and four, below we prove that we can obtain lower degree equations by adding the symmetry equations already found in the previous section. We are especially interested in linear equations because of the definition of the space of mixtures. Therefore, the rest of the section is devoted to deriving linear equations from rank conditions of flattenings and symmetry equations.

In Tables \ref{tab:flat} and \ref{tab:flatF84}, we display the matrix $A$ for F81 and F84, respectively, with its entries expressed in coordinates $\tilde{p}$ and the no-evolution point $\rho$; see \eqref{eq:tildes} and \eqref{eq:rho_quartet}. These entries have been obtained from the flattening matrix under the TN93 model \cite[Table 2]{CHT} by imposing the additional $60$ symmetries in \Cref{prop:quartet-eq} for F81 and the $29$ symmetries in \Cref{prop:quartet-eqF84} for F84. In Appendices \ref{app:flat} and \ref{app:flatF84}, we display these flattening matrices with a detailed description of which blocks must have rank 1, 2, or 3 according to the results of \cite{CHT}. 

\textbf{Notation.} Throughout the rest of the paper, we will use the following notation: $A^{ij,\dots,kl}_{ab,\dots,cd}$ denotes the submatrix of $A$ consisting of rows labeled $(a,b),\dots,(c,d)$ and columns $(i,j),\dots,(k,l)$.

\begin{remark}\label{rmk:primeideal}\rm
Note that $CV_T^\MM$ is an irreducible variety, and hence its vanishing ideal $I^\MM_T$ is prime. Therefore, when computing phylogenetic invariants, we can factor out all variables $\tilde{p}_{i_1i_2i_3i_4}$ that do not belong to the ideal. In the case of $T_{12}$, those variables are the ones not listed in \Cref{prop:quartet-zeroes} except for $\tilde{p}_{3434},\tilde{p}_{4343},\tilde{p}_{3443},\tilde{p}_{4334}$ (see \cite[Remark 5.12]{CHT}).
\end{remark} 

\subsection{Rank conditions for F84}

\begin{prop}\label{prop:4quartetEq84}
Let $T=T_{12}$ be a quartet and consider the F84 model on it. The following equations hold for any point in $CV_T \subseteq \F_4^{F84}\cap\Z_4$: 
\begin{itemize}
    \item[(a)] $\tilde{p}_{3434}=\tilde{p}_{3443}=0$;
    \item[(b)] $\tilde{p}_{2424}=\tilde{p}_{2323}$, $\tilde{p}_{2442}=\tilde{p}_{2332}$, $\tilde{p}_{4224}=\tilde{p}_{3223}$, $\tilde{p}_{4242}=\tilde{p}_{3232}$.
\end{itemize}
\end{prop}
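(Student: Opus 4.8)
The plan is to reduce both statements to equalities and vanishings among the coordinates $\tilde{q}_{i_1i_2i_3i_4}$ of the special point $q=\psi_{T_{12}}^{F84}(Id,Id,Id,Id;M)$, and then transport them to the $\tilde{p}$ coordinates via the scaling relation \eqref{eq:eigenvalues_p_q_tilde} together with \Cref{lemma:q_to_p}. Since $\tilde{p}_{i_1i_2i_3i_4}=\lambda^{e_1}_{i_1}\lambda^{e_2}_{i_2}\lambda^{e_3}_{i_3}\lambda^{e_4}_{i_4}\tilde{q}_{i_1i_2i_3i_4}$, it suffices to work with a single generic interior matrix $M$ and then pass to the Zariski closure $CV_{T_{12}}^{F84}$. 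The crucial bookkeeping is that for F84 we have $\lambda_3^e=\lambda_4^e$, so that the multi-degrees of \Cref{def:multidegree} of the two sides of each proposed equation agree (e.g. $\mathrm{mdeg}(\tilde{p}_{2424})=(2,4,2,4)=\mathrm{mdeg}(\tilde{p}_{2323})$), which is exactly the hypothesis needed to invoke \Cref{lemma:q_to_p}.

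For part (a) I would exhibit the relevant coordinates as structurally zero entries of the flattening $\Flat_{12\mid 34}(\bar{q})$. Writing the change of basis as in \eqref{eq:coord_change} and using that both pendant edges of the cherry $\{l_1,l_2\}$ carry the identity, the row of $\Flat_{12\mid 34}(\bar{q})$ indexed by $(i_1,i_2)$ is built from the products $(A^{-1})_{i_1 j}(A^{-1})_{i_2 j}$ summed against $\pi_j M_{j,\cdot}$. The key observation is that row $3$ of $A^{-1}$ is supported on the pyrimidine states $\{3,4\}$ while row $4$ is supported on the purine states $\{1,2\}$, as is visible from the zero patterns of $u^3$ and $u^4$ in \eqref{eq:basis}. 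Hence $(A^{-1})_{3 j}(A^{-1})_{4 j}=0$ for every $j$, so the rows of $\Flat_{12\mid 34}(\bar{q})$ indexed by $(3,4)$ and $(4,3)$ vanish identically. In particular $\bar{q}_{3434}=\bar{q}_{3443}=0$, and since $\bar{\rho}_{3434}=\bar{\rho}_{3443}=0$ (these indices are permutations of $3344$, which is absent from \eqref{eq:rho_quartet}) the rescaling \eqref{eq:tildes} gives $\tilde{p}_{3434}=\bar{p}_{3434}=0$ and $\tilde{p}_{3443}=0$ after multiplying by the eigenvalues. The two companion vanishings $\tilde{p}_{4343}=\tilde{p}_{4334}=0$ then follow either from the vanishing of row $(4,3)$ or from the symmetry equations (c) of \Cref{prop:quartet-eqF84}, which is why only two are stated.

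For part (b) the equalities are genuine topology invariants for $T_{12}$, so they cannot come from the symmetry equations of \Cref{sec:symmetry}; instead I would read them off the explicit evaluation of $\tilde{q}$ on $T_{12}$. By \cite[SM1.2]{CHT} and \eqref{eq:rho_quartet}, substituting $\lambda_3$ by $\lambda_4$ yields $\tilde{q}_{\sigma(2323)}=\tilde{q}_{\sigma(2424)}=\lambda_4$ for every $\sigma\in S_{L_{12}}$ (entry (11) in the proof of \Cref{prop:quartet-eqF84}). Spelling out the $S_{L_{12}}$-orbits of $2323$ and of $2424$ gives $\tilde{q}_{2323}=\tilde{q}_{2332}=\tilde{q}_{3223}=\tilde{q}_{3232}=\tilde{q}_{2424}=\tilde{q}_{2442}=\tilde{q}_{4224}=\tilde{q}_{4242}=\lambda_4$, so each of the four proposed equalities holds at the level of $\tilde{q}$. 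Here the relevant $\bar{\rho}$ coordinates (permutations of $2233$ and $2244$) are nonzero, so $\tilde{q}$ is a genuine rescaling; combined with the matching multi-degrees noted above, \Cref{lemma:q_to_p} lifts these equalities to $\tilde{p}$ on $CV_{T_{12}}^{F84}$.

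The main obstacle I anticipate is conceptual rather than computational: one must keep clearly separate the role of the symmetry equations, which hold for all topologies, and these topology-specific relations. Concretely, the same coordinates $\tilde{q}_{2323},\tilde{q}_{2424}$ are equal for $T_{12}$ but are carried by some $\sigma\in S_L$ to unequal coordinates on $T_{13}$ and $T_{14}$ (this is exactly the failure of entry (11) to produce a symmetry equation), so the multi-degree compatibility test of \Cref{lemma:q_to_p} must be carried out within the fixed topology and cannot be bypassed. A secondary care point is the rescaling convention \eqref{eq:tildes}: for the vanishing coordinates in (a) one has $\bar{\rho}=0$, so $\tilde{p}=\bar{p}$, whereas for (b) the relevant $\bar{\rho}$ are nonzero, and one should confirm this before identifying the $\tilde{q}$-relations with the desired $\tilde{p}$-relations.
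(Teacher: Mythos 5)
Your proof is correct, but for part (b) it takes a genuinely different route from the paper. The paper derives the four equalities from rank-one conditions on $2\times 2$ blocks of the flattening $\Flat_{12\mid 34}(\bar p)$ (e.g.\ combining $|A^{14,24}_{14,24}|=0$ and $|A^{13,23}_{13,23}|=0$ to get $\tilde{p}_{1414}(\tilde{p}_{2424}-\tilde{p}_{2323})=0$) and then factors out $\tilde{p}_{1414}$ using primality of $I_T^{F84}$ (\Cref{rmk:primeideal}); you instead read the equalities off the explicit coordinates of $q=\psi_{T_{12}}^{F84}(Id,Id,Id,Id;M)$ — entry (11) in the proof of \Cref{prop:quartet-eqF84} — and lift them to $\tilde p$ via the multi-degree criterion of \Cref{lemma:q_to_p}, which is legitimate since every point of the parametrized family is an eigenvalue rescaling of such a $q$ and polynomial identities pass to the Zariski closure. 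Your argument is arguably more direct and avoids the primality step, and your derivation of (a) from the disjoint supports of rows $3$ and $4$ of $A^{-1}$ is a clean self-contained replacement for the citation of the TN93 topology zeros. What you lose is the extra information the paper's proof is designed to record: by producing these linear forms from flattening minors and symmetry equations alone (modulo saturation at $\tilde{p}_{1414}$), the paper can later assert in \Cref{rem:nonzero_F84} and \Cref{cor:satF84} that they lie in the saturation of the minor ideal $I_A^{F84}$, a containment your parametrization-based argument does not establish. So your proof suffices for the proposition as stated, but the flattening-based derivation would still be needed to support the downstream corollaries.
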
  
\begin{proof}
Since $\tilde{p}_{3434}=\tilde{p}_{3443}=0$ are topology invariants for $T_{12}$ in the TN93 model, equations in (a) hold for F84. We now consider submatrices $A^{14,24}_{14,24}$ and $A^{13,23}_{13,23}$ of rank 1 in Table \ref{tab:flatF84rk1}. Then 

\begin{eqnarray*}
0&=&\bar{\rho}_{1233}^2\left|A^{14,24}_{14,24}\right|-\bar{\rho}_{1244}^2\left|A^{13,23}_{13,23}\right|
    =\bar{\rho}_{1233}^2\bar{\rho}_{1144}\bar{\rho}_{2244}\tilde{p}_{1414}\tilde{p}_{2424}-\bar{\rho}_{1244}^2\bar{\rho}_{1133}\bar{\rho}_{2233}\tilde{p}_{1414}\tilde{p}_{2323}\\
    &=&\frac{1}{\left(\pi_1\pi_2\pi_3\pi_4\right)^2}\tilde{p}_{1414}\left(\tilde{p}_{2424}-\tilde{p}_{2323}\right),
\end{eqnarray*}

\noindent where the last equality is obtained by substituting the coordinates of the no-evolution point in \eqref{eq:rho_quartet}.
By \Cref{rmk:primeideal} we can factor $\tilde{p}_{1414}$ out, and hence $\tilde{p}_{2424}=\tilde{p}_{2323}$ in $CV_T^{F84}$.
The rest of equations in (b) can be proven analogously by considering submatrices $A^{1j,j2}_{1j,2j}$, $A^{1j,2j}_{1j,j2}$ and $A^{1j,j2}_{1j,j2}$, respectively, with $j\in\{3,4\}$. 
\end{proof}

\begin{remark}\label{rem:nonzero_F84}
\rm Note that the equations in \Cref{prop:4quartetEq84} have been obtained by imposing only the rank conditions on flattenings and the symmetry equations defining $\F_4^{F84}$. The rest of the proof only uses that $\tilde{p}_{1414}$ is different from zero. As a consequence of Corollary \ref{cor:mixt_tree}, 
 the ideal of $CV_T^{F84}$ is generated in degree 1 by the linear polynomials that correspond to the above equations.
\end{remark}

\subsection{Rank conditions for F81}
\begin{prop}\label{prop:4quartetEq}
Let $T=T_{12}$ be a quartet and consider the F81 model on it. The following equations hold for any point in $CV_T \subseteq \CC^{64}$: 
$$ \tilde{p}_{3434}=0=\tilde{p}_{3443},\, \tilde{p}_{2424}=\tilde{p}_{2442}=\,\tilde{p}_{2332}=\tilde{p}_{2323}=
\tilde{p}_{2444}.$$
 
\end{prop}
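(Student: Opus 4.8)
The plan is to split the statement into the two vanishing equations and the chain of equalities, and to reduce the latter to \Cref{lemma:q_to_p}. First, $\tilde{p}_{3434}=\tilde{p}_{3443}=0$ are topology invariants for $T_{12}$ already under TN93 (they are zero entries of the flattening in \cite[Table 2]{CHT}); since every F81 transition matrix is a TN93 matrix we have $CV_{T_{12}}^{F81}\subseteq CV_{T_{12}}^{TN93}$, so these two equations are inherited verbatim, exactly as in the proof of \Cref{prop:4quartetEq84}(a). The substance of the statement is therefore the chain of equalities.

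For the chain $\tilde{p}_{2424}=\tilde{p}_{2442}=\tilde{p}_{2332}=\tilde{p}_{2323}=\tilde{p}_{2444}$ I would first compute the F81 multidegrees from \Cref{def:multidegree}. Because $\lambda_2=\lambda_3=\lambda_4$ collapses the values $2,3,4$ to $4$, each of these five coordinates has $\mathrm{mdeg}=(4,4,4,4)$. By \Cref{lemma:q_to_p} it then suffices to check that $\tilde{q}_{2424},\tilde{q}_{2442},\tilde{q}_{2332},\tilde{q}_{2323},\tilde{q}_{2444}$ coincide on every point $q=\psi_{T_{12}}^{F81}(Id,Id,Id,Id;M)$. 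These values are read off the table in the proof of \Cref{prop:quartet-eqF84} after the further substitution $\lambda_2=\lambda_4$: entries (10) and (11) give $\tilde{q}_{2323}=\tilde{q}_{2424}=\tilde{q}_{2444}=\lambda_4$, and applying the compatible permutation $(34)\in\SL$ (which fixes every $\tilde{q}$ coordinate for $T_{12}$) sends $2323\mapsto 2332$ and $2424\mapsto 2442$, so $\tilde{q}_{2332}=\tilde{q}_{2442}=\lambda_4$ as well. Hence all five $\tilde{q}$-values equal $\lambda_4$, and \Cref{lemma:q_to_p} applied to each consecutive pair closes the chain.

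An equivalent route, more in the spirit of this section, is to inherit $\tilde{p}_{2424}=\tilde{p}_{2323}$ and $\tilde{p}_{2442}=\tilde{p}_{2332}$ directly from \Cref{prop:4quartetEq84}(b) via $CV_{T_{12}}^{F81}\subseteq CV_{T_{12}}^{F84}$, and then obtain only the two genuinely F81 links $\tilde{p}_{2424}=\tilde{p}_{2442}$ and $\tilde{p}_{2424}=\tilde{p}_{2444}$ from rank-$1$ blocks of the F81 flattening in \Cref{tab:flat}, factoring out a nonzero coordinate as permitted by \Cref{rmk:primeideal}, exactly as in the determinant computation of \Cref{prop:4quartetEq84}.

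The one point that needs care, and the main obstacle, is that these extra equalities are truly F81 phenomena and are \emph{false} for F84: there $\tilde{p}_{2424}$, $\tilde{p}_{2442}$ and $\tilde{p}_{2444}$ carry the distinct multidegrees $(2,4,2,4)$, $(2,4,4,2)$ and $(2,4,4,4)$, so \Cref{lemma:q_to_p} does not equate them even though their $\tilde{q}$-values already agree. It is precisely the merge $\lambda_2=\lambda_4$ that fuses these multidegrees into $(4,4,4,4)$ and, in particular, pulls the ``triple'' coordinate $\tilde{p}_{2444}$ into the same orbit as the ``pair'' coordinates. Thus the crux is to verify the multidegree hypothesis of \Cref{lemma:q_to_p} under F81, rather than any new computation with the parametrization.
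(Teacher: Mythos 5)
Your proposal is correct, but your primary route is genuinely different from the one in the paper. The paper proves the whole chain $\tilde{p}_{2424}=\tilde{p}_{2442}=\tilde{p}_{2332}=\tilde{p}_{2323}=\tilde{p}_{2444}$ by rank-one conditions on the flattening of Table~\ref{tab:flatrank2}: it first extracts the auxiliary relation $\tilde{p}_{1414}\tilde{p}_{2444}=\tilde{p}_{1444}\tilde{p}_{4414}$ from the vanishing of $\left|A_{14,44}^{14,42}\right|$, then equates each of $\tilde{p}_{2323},\tilde{p}_{2332},\tilde{p}_{2424},\tilde{p}_{2442}$ to $\tilde{p}_{2444}$ via four further $2\times 2$ minors, factoring out $\tilde{p}_{1414}$ by \Cref{rmk:primeideal}. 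You instead verify the hypothesis of \Cref{lemma:q_to_p} directly: under F81 all five coordinates acquire multi-degree $(4,4,4,4)$, and all five $\tilde{q}$-values equal $\lambda_4$ by entries (10)--(11) of the table in the proof of \Cref{prop:quartet-eqF84} together with the compatible permutation $(34)\in\SL$, so the lemma closes the chain with no determinant computation. That argument is valid and shorter, and your diagnosis of why it fails for F84 (the multi-degrees $(2,4,2,4)$, $(2,4,4,2)$ and $(2,4,4,4)$ only fuse once $\lambda_2=\lambda_4$) is exactly the right point to stress. What the paper's longer route buys is recorded in \Cref{rem:nonzero_F81} and exploited in \Cref{cor:satF81}: the equations of \Cref{prop:4quartetEq} are there obtained from rank conditions on flattenings plus the symmetry equations of $\F_4^{F81}$ alone, using only $\tilde{p}_{1414}\neq 0$, which is what allows the paper to realize $I_A^{F81}$ as $I_A^{TN93}+\mathcal{I}(\F_4^{F81})$ up to saturation, in line with the philosophy of deriving submodel invariants from TN93 plus symmetries. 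Your lemma-based derivation proves the proposition but does not deliver that extra structural information; your alternative hybrid route (inheriting $\tilde{p}_{2424}=\tilde{p}_{2323}$ and $\tilde{p}_{2442}=\tilde{p}_{2332}$ from \Cref{prop:4quartetEq84}(b) via $CV_{T_{12}}^{F81}\subseteq CV_{T_{12}}^{F84}$ and obtaining the remaining links from rank-one minors) does, and is essentially the paper's argument.
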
  

\begin{proof}
For the chosen split topology, $\bar{p}_{3434}=0=\bar{p}_{3443}$ for the TN93 model, as shown in~\cite[Remark 5.12]{CHT}. Therefore, this still holds for the F81 model. Hence $\tilde{p}_{3434}=0=\tilde{p}_{3443}$. 

To prove the remaining equalities, we will show that each listed $\tilde{p}_{ijkl}$ is equal to $\tilde{p}_{2444}$ using rank 1 conditions from Table~\ref{tab:flatrank2} and (\ref{eq:rho_quartet}).  First, notice that 
$$
0=\left|A_{14,44}^{14,42}\right|=\pp_{1144}\tilde{p}_{1414}\pp_{2444}\tilde{p}_{2444}-\pp_{1244}\tilde{p}_{1444}\pp_{1444}\tilde{p}_{4414}=\frac{\pi_{12}(\pi_2-\pi_1)}{\pi_1^3\pi_2^3}(\tilde{p}_{1414}\tilde{p}_{2444}-\tilde{p}_{1444}\tilde{p}_{4414}),
$$

\noindent and rearranging gives
\begin{equation}\label{eq:54A}
\tilde{p}_{1414}\tilde{p}_{2444}=\tilde{p}_{1444}\tilde{p}_{4414}.
\end{equation}

\noindent Again using rank 1 conditions from Table~\ref{tab:flatrank2} and (\ref{eq:rho_quartet}), along with (\ref{eq:54A}) and 
Remark \ref{rmk:primeideal} gives 
$$
0=\left|A_{13,ij}^{13,23}\right|=\pp_{1133}\tilde{p}_{1414}\pp_{2233}\tilde{p}_{23ij}-\pp_{1233}\tilde{p}_{1444}\pp_{1233}\tilde{p}_{4414}=\frac{1}{\pi_3^2\pi_4^2}\tilde{p}_{1414}(\tilde{p}_{23ij}-\tilde{p}_{2444}),
$$

\noindent for $i=2, j=3$ and for $i=3,j=2$, which implies that $\tilde{p}_{2323}=\tilde{p}_{2444}$ and $\tilde{p}_{2332}=\tilde{p}_{2444}$, and
$$
0=\left|A_{14,kl}^{14,42}\right|=\pp_{1144}\tilde{p}_{1414}\pp_{2244}\tilde{p}_{24lk}-\pp_{1244}\tilde{p}_{1444}\pp_{1244}\tilde{p}_{4414}=\frac{1}{\pi_1^2\pi_2^2}\tilde{p}_{1414}(\tilde{p}_{24lk}-\tilde{p}_{2444}),
$$

\noindent for $k=4,l=2$ and for $k=2,l=4$, which implies  $\tilde{p}_{2424}=\tilde{p}_{2444}$ and $\tilde{p}_{2442}=\tilde{p}_{2444}$.

\end{proof}

\begin{prop}\label{prop:non-binomialEq}
Let $T=T_{12}$ and consider the F81 model on it. The following non-binomial linear equations hold for any point in $CV_T \subseteq \CC^{64}$:
\begin{enumerate}
    \item [(a)] $\pi_{34}^3\tilde{p}_{2244}-(\pi_{12}^3+\pi_{34}^3)\tilde{p}_{2222}+\pi_{12}^3\tilde{p}_{2233}=0;$
    \item [(b)] $\pi_{12}^2(\pi_3^3+\pi_4^3)\tilde{p}_{3333}-\pi_{12}^2\pi_{34}(\pi_3-\pi_4)^2\tilde{p}_{2444}-\pi_3\pi_4\pi_{12}^2\pi_{34}\tilde{p}_{2233}+\pi_1\pi_2\pi_3^2\pi_4^2\pi_{34}^2\tilde{p}_{3344}=0;$
     \item [(c)] $\pi_{34}^2(\pi_1^3+\pi_2^3)\tilde{p}_{4444}-\pi_{12}\pi_{34}^2(\pi_1-\pi_2)^2\tilde{p}_{2444}-\pi_1\pi_2\pi_{12}\pi_{34}^2\tilde{p}_{2244}+\pi_1^2\pi_2^2\pi_3\pi_4\pi_{12}^2\tilde{p}_{3344}=0;$
    \item [(d)] $\pi_{34}^2\tilde{p}_{2244}-\pi_{12}^2\tilde{p}_{2233}-(\pi_{34}^2-\pi_{12}^2)\tilde{p}_{2444}=0;$ 
    \item [(e)] 
$\pi_1^2\pi_2^2\pi_3\pi_4\pi_{12}\tilde{p}_{3344}-\pi_{34}(\pi_1^3+\pi_2^3)(\tilde{p}_{4444}-\tilde{p}_{2444})=0.$

\end{enumerate}
Moreover, the following quadratic equation also holds for any point in $CV_T$: 
    $$(\pi_3^3+\pi_4^3)\tilde{p}_{1111}\tilde{p}_{3333}-(\pi_3^3+\pi_4^3-\pi_3\pi_4\pi_{34}^2)\tilde{p}_{1111}\tilde{p}_{2444}-\pi_3\pi_4\pi_{34}^2\tilde{p}_{1144}\tilde{p}_{4411}=0.$$
\end{prop}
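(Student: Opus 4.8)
The plan is to treat the five linear identities (a)--(e) uniformly via the interior-edge eigenvalues, and then obtain the quadric from a single flattening minor. First I would observe that every coordinate appearing in (a)--(e), namely $\tilde{p}_{2222},\tilde{p}_{2233},\tilde{p}_{2244},\tilde{p}_{3333},\tilde{p}_{4444},\tilde{p}_{3344}$ and $\tilde{p}_{2444}$, has multidegree $(4,4,4,4)$ for F81, since all indices lie in $\{2,3,4\}$. Hence by \eqref{eq:eigenvalues_p_q_tilde} each equals one and the same pendant monomial $\prod_{k}\lambda^{e_k}_{4}$ times the corresponding coordinate $\tilde{q}_{i_1i_2i_3i_4}$ of a point $q=\psi_T^{F81}(Id,Id,Id,Id;M)$. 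Reading these $\tilde{q}$'s off from \cite[SM1.2]{CHT} and \eqref{eq:rho_quartet} after substituting $\lambda_2,\lambda_3$ by $\lambda_4$ (exactly as in the proof of \Cref{prop:quartet-eqF84}), one finds that all seven are linear forms in the two interior eigenvalues $\lambda_1,\lambda_4$; e.g. $\tilde{q}_{2444}=\lambda_4$ and $\tilde{q}_{2233}=\tfrac{\pi_{34}}{\pi_{12}}\lambda_1+(1-\tfrac{\pi_{34}}{\pi_{12}})\lambda_4$.

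Seven linear forms in a two-dimensional space span that space, so they satisfy a $7-2=5$-dimensional space of linear relations, and (a)--(e) are five independent representatives. I would verify each one by checking that the coefficients of $\lambda_1$ and of $\lambda_4$ both vanish, repeatedly using $\pi_{12}+\pi_{34}=1$, $\pi_1+\pi_2=\pi_{12}$, $\pi_3+\pi_4=\pi_{34}$ together with $\pi_{12}^3+\pi_{34}^3=\pi_{12}^2-\pi_{12}\pi_{34}+\pi_{34}^2$ and the analogous factorisation for $\pi_3,\pi_4$. Since the seven coordinates share the common pendant factor $\prod_k\lambda^{e_k}_4$, each relation proved for $\tilde{q}$ transfers verbatim to $\tilde{p}$ on the dense subset where this factor is nonzero, hence on all of $CV_T^{F81}$ by irreducibility (\Cref{rmk:primeideal}).

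For the quadric I would start from the $2\times 2$ minor of the flattening on rows and columns $(1,1),(3,3)$, that is $|A^{11,33}_{11,33}|=\bar{p}_{1111}\bar{p}_{3333}-\bar{p}_{1133}\bar{p}_{3311}$. Using the symmetry equations of \Cref{prop:quartet-eq}, $\tilde{p}_{1133}=\tilde{p}_{1144}$ and $\tilde{p}_{3311}=\tilde{p}_{4411}$, and substituting the no-evolution coordinates $\pp$ from \eqref{eq:rho_quartet}, this minor rewrites as $\tfrac{\pi_3^3+\pi_4^3}{\pi_3^3\pi_4^3}\tilde{p}_{1111}\tilde{p}_{3333}-\tfrac{\pi_{34}^2}{\pi_3^2\pi_4^2}\tilde{p}_{1144}\tilde{p}_{4411}$. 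Unlike the blocks used in \Cref{prop:4quartetEq}, this block has rank $2$, so the minor does \emph{not} vanish; instead, inserting the parametrisation $\tilde{p}_{1111}=(\prod_k\lambda^{e_k}_1)\lambda_1$, $\tilde{p}_{3333}=(\prod_k\lambda^{e_k}_4)\tilde{q}_{3333}$ and $\tilde{p}_{1144}\tilde{p}_{4411}=(\prod_k\lambda^{e_k}_1)(\prod_k\lambda^{e_k}_4)\lambda_1^2$ and expanding $\tilde{q}_{3333}$ shows the minor is a scalar multiple of $\tilde{p}_{1111}\tilde{p}_{2444}$; clearing $\pi_3^3\pi_4^3$ makes the coefficient of $\tilde{p}_{1111}\tilde{p}_{2444}$ equal to $\pi_3^3+\pi_4^3-\pi_3\pi_4\pi_{34}^2$, which is exactly the stated quadric after rearranging.

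The relation is genuinely quadratic rather than linear precisely because each of the monomials $\tilde{p}_{1111}\tilde{p}_{3333}$, $\tilde{p}_{1111}\tilde{p}_{2444}$ and $\tilde{p}_{1144}\tilde{p}_{4411}$ carries the same pendant factor $(\prod_k\lambda^{e_k}_1)(\prod_k\lambda^{e_k}_4)$, which no single coordinate $\tilde{p}_{i_1i_2i_3i_4}$ realises, so multidegree-homogeneity rules out any degree-one identity here. I expect the main obstacle to be purely the coefficient bookkeeping: organising the numerous $\pi$-monomials so that the $\lambda_1$- and $\lambda_4$-coefficients visibly cancel in (a)--(e), and confirming that the scalar attached to $\tilde{p}_{1111}\tilde{p}_{2444}$ collapses to $\pi_3^3+\pi_4^3-\pi_3\pi_4\pi_{34}^2$. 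At every step the factoring-out of the generically nonzero pendant monomial must be justified, which is exactly what \Cref{rmk:primeideal} provides.
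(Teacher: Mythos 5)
Your argument is correct, and it takes a genuinely different route from the paper's. You prove (a)--(e) by pushing everything down to the point $q=\psi_T^{F81}(Id,Id,Id,Id;M)$: since every coordinate involved has multidegree $(4,4,4,4)$ for F81, all seven share the common pendant factor $\prod_k\lambda_4^{e_k}$, the corresponding $\tilde q$'s are linear forms in the two interior eigenvalues $\lambda_1,\lambda_4$, and any relation verified on the $\tilde q$'s (by checking that the $\lambda_1$- and $\lambda_4$-coefficients cancel) lifts to $CV_T^{F81}$; your count $7-2=5$ also explains a priori why exactly five independent linear relations exist. The quadric likewise follows by substituting the parametrization into the $2\times2$ minor on rows and columns $(1,1),(3,3)$ and observing that the residue is the stated multiple of $\tilde p_{1111}\tilde p_{2444}$; I checked the coefficient bookkeeping in (a), (d), (e) and in the quadric, and it comes out right. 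The paper instead derives each equation from explicit $3\times3$ and $4\times4$ minors of the flattening blocks in Tables \ref{tab:flatrank2}--\ref{tab:flatrank4}, combined with the auxiliary vanishing combinations \eqref{eq:helpfulzeros}--\eqref{eq:helpfulzero3} and \Cref{rmk:primeideal}. What that heavier computation buys --- and what your proof does not give --- is the fact recorded in \Cref{rem:nonzero_F81} and exploited in \Cref{cor:satF81}: these polynomials are algebraic consequences of the rank conditions and the symmetry equations alone, up to saturation at $\tilde p_{1111}\tilde p_{1414}$, not merely elements of $I_T^{F81}$. So your proof fully establishes the Proposition as stated and is cleaner, but the minor-based derivation (or a supplementary argument) would still be needed for the downstream saturation results. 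One cosmetic point: the relations transfer from $\tilde q$ to $\tilde p$ everywhere, not only on the dense set where the pendant factor is nonzero, since multiplying an identically zero combination by $\prod_k\lambda_4^{e_k}$ keeps it zero; the appeal to irreducibility there is unnecessary.
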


\begin{proof} 
{Recall that $\pi$ is generic (see \Cref{rmk:generic_pi}), hence throughout the proof we can assume that all coordinates of the no-evolution point $\bar{\rho}_{i_1i_2i_3i_4}$ in \eqref{eq:rho_quartet} are non-zero.}

\begin{itemize}

\item[(a)] First, define the following,
\begin{equation}\label{eq:abc}
a=\pi_1\pi_2\pi_{34}^2, \ \ b=-\pi_{12}^2\pi_{34}^2, \ \ c=\pi_3\pi_4\pi_{12}^2.
\end{equation}

\noindent Note that using (\ref{eq:rho_quartet}) along with (\ref{eq:abc}), and the fact that $\pi_{12}+\pi_{34}=1$ gives
\begin{equation}\label{eq:helpfulzeros}
\begin{array}{c}
a\pp_{1144}+b\pp_{1122}+c\pp_{1133}=0 \\
a\pp_{1244}+b\pp_{1222}+c\pp_{1233}=0.
\end{array}
\end{equation}

\noindent Notice that 
\begin{equation}\label{eq:53aA}
a\left|A_{11,12,44}^{11,22,12}\right|+b\left|A_{11,12,22}^{11,22,12}\right|+c\left|A_{11,12,33}^{11,22,12}\right|=0,
\end{equation}

\noindent since each of the minors involved arises from a rank condition in Table \ref{tab:flatrank2}. Expanding each matrix in (\ref{eq:53aA}), via the third row and regrouping terms based on the column involved gives 
\begin{equation}\label{eq:53aB}
\begin{array}{l}
\left|A_{11,12}^{22,12}\right|\tilde{p}_{4411}(a\pp_{1144}+b\pp_{1122}+c\pp_{1133})  - \left|A_{11,12}^{11,12}\right|(a\pp_{2244}\tilde{p}_{2244}+b\pp_{2222}\tilde{p}_{2222}+c\pp_{2233}\tilde{p}_{2233}) \\
+  \left|A_{11,12}^{11,22}\right|(a\pp_{1244}+b\pp_{1222}+c\pp_{1233})=0.
\end{array}
\end{equation}

\noindent Using (\ref{eq:helpfulzeros}) in (\ref{eq:53aB}) and computing the remaining minor gives 
$$
(\pp_{1122}\tilde{p}_{1111}\tilde{p}_{1414})(a\pp_{2244}\tilde{p}_{2244}+b\pp_{2222}\tilde{p}_{2222}+c\pp_{2233}\tilde{p}_{2233})=0,
$$

\noindent and, {by \Cref{rmk:primeideal},} we get
$$
a\pp_{2244}\tilde{p}_{2244}+b\pp_{2222}\tilde{p}_{2222}+c\pp_{2233}\tilde{p}_{2233}=0.
$$

\noindent Plugging in (\ref{eq:rho_quartet}) and (\ref{eq:abc}) and clearing the denominators gives (a).  

\item[(b)] Define 
\begin{equation}\label{eq:d}
d=(\pi_4-\pi_3)\pi_{12}^2\pi_{34}
\end{equation}

\noindent and note that (\ref{eq:rho_quartet}), (\ref{eq:abc}), and (\ref{eq:d}) give the equality 
\begin{equation}\label{eq:helpfulzero2}
c\pp_{1333}+d\pp_{1233}=0.
\end{equation}

\noindent Now, we further note the following:
\begin{equation}\label{eq:53bA}
a\left|A_{11,12,13,44}^{11,12,13,33}\right|+b\left|A_{11,12,13,22}^{11,12,13,33}\right|+c\left|A_{11,12,13,33}^{11,12,13,33}\right|+d\left|A_{11,12,13,23}^{11,12,13,33}\right|=0,
\end{equation}

\noindent as each of the minors involved arises from a zero rank condition in Table \ref{tab:flatrank3b}. Now, we expand each matrix in (\ref{eq:53bA}) along the fourth row and regroup terms based on their column. This yields
\begin{equation}\label{eq:53bB}
\begin{array}{l}
-\left|A_{11,12,13}^{12,13,33}\right|\tilde{p}_{4411}(a\pp_{1144}+b\pp_{1122}+c\pp_{1133})+\left|A_{11,12,13}^{11,13,33}\right|\tilde{p}_{4414}(a\pp_{1244}+b\pp_{1222}+c\pp_{1233})\\
-\left|A_{11,12,13}^{11,12,33}\right|\tilde{p}_{4414}(c\pp_{1333}+d\pp_{1233})
+\left|A_{11,12,13}^{11,12,13}\right|(a\tilde{p}_{3344}+b\pp_{2233}\tilde{p}_{2233}+c\pp_{3333}\tilde{p}_{3333}+d\pp_{2333}\tilde{p}_{2444})=0.
\end{array}
\end{equation}

\noindent By plugging (\ref{eq:helpfulzeros}) and (\ref{eq:helpfulzero2}) into (\ref{eq:53bB}) and computing the remaining minor, we get 
$$
(\pp_{1122}\pp_{1133}\tilde{p}_{1111}\tilde{p}_{1414}^2)(a\tilde{p}_{3344}+b\pp_{2233}\tilde{p}_{2233}+c\pp_{3333}\tilde{p}_{3333}+d\pp_{2333}\tilde{p}_{2444})=0.
$$

\noindent {Again by \Cref{rmk:primeideal},}
$$
a\tilde{p}_{3344}+b\pp_{2233}\tilde{p}_{2233}+c\pp_{3333}\tilde{p}_{3333}+d\pp_{2333}\tilde{p}_{2444}=0.
$$

\noindent Finally, plug (\ref{eq:rho_quartet}), (\ref{eq:abc}), and (\ref{eq:d}) into the above equation and rearrange to obtain (b). 

\item[(c)] Now, define 
\begin{equation}\label{eq:e}
e=\pi_{12}\pi_{34}^2(\pi_1-\pi_2)
\end{equation}

\noindent and note that (\ref{eq:rho_quartet}), (\ref{eq:abc}), and (\ref{eq:e}) give the equality
\begin{equation}\label{eq:helpfulzero3}
a\pp_{1444}+e\pp_{1244}=0.
\end{equation}

\noindent By combining the minors in Table \ref{tab:flatrank4}, we have 
\begin{equation}\label{eq:53cA}
a\left|A_{11,14,12,44}^{11,14,12,44}\right|+b\left|A_{11,14,12,22}^{11,14,12,44}\right|+c\left|A_{11,14,12,33}^{11,14,12,44}\right|+e\left|A_{11,14,12,24}^{11,14,12,44}\right|=0.
\end{equation}

\noindent We expand each matrix in (\ref{eq:53cA}) using the fourth row and regroup terms to obtain 
\begin{equation}\label{eq:53cB}
\begin{array}{l}
-\left|A_{11,14,12}^{14,12,44}\right|\tilde{p}_{4411}(a\pp_{1144}+b\pp_{1122}+c\pp_{1133})+\left|A_{11,14,12}^{11,12,44}\right|\tilde{p}_{4414}(a\pp_{1444}+e\pp_{1244}) \\
-\left|A_{11,14,12}^{11,14,44}\right|\tilde{p}_{4414}(a\pp_{1244}+b\pp_{1222}+c\pp_{1233})\\
+\left|A_{11,14,12}^{11,14,12}\right|(a\pp_{4444}\tilde{p}_{4444}+b\pp_{2244}\tilde{p}_{2244}+c\tilde{p}_{3344}+e\pp_{2444}\tilde{p}_{2444})=0.
\end{array}
\end{equation}

\noindent By plugging (\ref{eq:helpfulzeros}) and (\ref{eq:helpfulzero3}) into (\ref{eq:53cB}) and computing the remaining minor, we get the following
$$
(\pp_{1122}\pp_{1144}\tilde{p}_{1111}\tilde{p}_{1414}^2)(a\pp_{4444}\tilde{p}_{4444}+b\pp_{2244}\tilde{p}_{2244}+c\tilde{p}_{3344}+e\pp_{2444}\tilde{p}_{2444})=0,
$$

\noindent {and by \Cref{rmk:primeideal},}
$$
a\pp_{4444}\tilde{p}_{4444}+b\pp_{2244}\tilde{p}_{2244}+c\tilde{p}_{3344}+e\pp_{2444}\tilde{p}_{2444}=0.
$$

\noindent Plugging in (\ref{eq:rho_quartet}), (\ref{eq:abc}), and (\ref{eq:e}) gives (c).

\item[(d)] Consider $A_{11,12,33}^{11,22,12}$ from Table~\ref{tab:flatrank3a}. Computing the determinant by expanding row (1,1) yields
\begin{eqnarray*}
\left|A_{11,12,33}^{11,22,12}\right| &=& \tilde{p}_{1111}\left|A_{12,33}^{22,12}\right|-\pp_{1122}\tilde{p}_{1144}\left|A_{12,33}^{11,12}\right| \\
&=& \tilde{p}_{1111}(\pp_{1233}\pp_{1222}\tilde{p}_{4414}\tilde{p}_{1444}-\pp_{2233}\pp_{1122}\tilde{p}_{1414}\tilde{p}_{2233}) \\
&&
+\pp_{1122}\tilde{p}_{1144}\pp_{1133}\tilde{p}_{4411}\tilde{p}_{1414}\pp_{1122}.
\end{eqnarray*}

\noindent Using $\left|A_{11,12,33}^{11,22,12}\right|=0$, {factoring out $\tilde{p}_{1414}$ and applying (\ref{eq:54A}) give}
\begin{equation}\label{eq:rank2D1}
\tilde{p}_{1111} (\pp_{2233}\pp_{1122}\tilde{p}_{2233}-\pp_{1233}\pp_{1222}\tilde{p}_{2444})=\pp_{1122}\tilde{p}_{1144}\pp_{1133}\tilde{p}_{4411}\pp_{1122}.
\end{equation}
Consider $A_{11,12,44}^{11,22,12}$ from Table~\ref{tab:flatrank3a}. Proceeding as we did for (\ref{eq:rank2D1}) gives
\begin{equation}\label{eq:rank2D2}
\tilde{p}_{1111} (\pp_{2244}\pp_{1122}\tilde{p}_{2244}-\pp_{1244}\pp_{1222}\tilde{p}_{2444})=\pp_{1122}\pp_{1144}\tilde{p}_{1144}\tilde{p}_{4411}\pp_{1122}.\end{equation}
Multiplying (\ref{eq:rank2D1}) by $\pp_{1144}$ and (\ref{eq:rank2D2}) by $\pp_{1133}$ makes the right-hand-side of both equations equal. 
{Factoring out ${\tilde{p}_{1111}}$ gives}
$$\pp_{1133}(\pp_{2244}\pp_{1122}\tilde{p}_{2244}-\pp_{1244}\pp_{1222}\tilde{p}_{2444})
= \pp_{1144}(\pp_{2233}\pp_{1122}\tilde{p}_{2233}-\pp_{1233}\pp_{1222}\tilde{p}_{2444}).$$
Therefore, simplifying the resulting equation using (\ref{eq:rho_quartet}) gives (d).

\item[(e)] This equation follows from (\ref{eq:54A}) and rank 3 conditions on the flattening matrix. Consider $A_{11,12,14,33}^{11,12,14,44}$ and $A_{11,12,14,44}^{11,12,14,44}$ from Table~\ref{tab:flatrank4}. Observe that the first rows are equal; therefore, computing the determinants by expanding the first column $(1,1)$ and then column $(1,4)$ yields
\begin{eqnarray*}
  0&=&\pp_{1144}\left\vert A_{11,12,14,33}^{11,12,14,44}\right\vert-\pp_{1133}\left\vert A_{11,12,14,44}^{11,12,14,44}\right\vert
=\tilde{p}_{1111}\left(\pp_{1144}\left\vert A_{12,14,33}^{12,14,44}\right\vert-\pp_{1133}\left\vert A_{12,14,44}^{12,14,44}\right\vert\right)\\
&=&\tilde{p}_{1111}\left(\pp_{1144}^2\tilde{p}_{1414}\left\vert A_{12,33}^{12,44}\right\vert
    -\pp_{1133}\left(\pp_{1144}\tilde{p}_{1414}\left\vert A_{12,44}^{12,44}\right\vert
    -\pp_{1444}\tilde{p}_{4414}\left\vert A_{12,14}^{12,44}\right\vert\right)\right).  
\end{eqnarray*}
Observe that $\left\vert A_{12,14}^{12,44}\right\vert=\pp_{1122}\pp_{1444}\tilde{p}_{1414}\tilde{p}_{1444}$. Thus, diving both sides by $\tilde{p}_{1111}\tilde{p}_{1414}$ gives 
\begin{equation*}
\pp_{1144}^2\left\vert A_{12,33}^{12,44}\right\vert-\pp_{1133}\left(\pp_{1144}\left\vert A_{12,44}^{12,44}\right\vert-\pp_{1122}\pp_{1444}^2\tilde{p}_{1444}\tilde{p}_{4414}\right)=0.
\end{equation*}
Using (\ref{eq:54A}) and diving both sides by $\tilde{p}_{1414}$ yields
\begin{eqnarray*}
0&=&\pp_{1144}^2\begin{vmatrix}
    \pp_{1122} & \pp_{1244}\\
    \pp_{1233}\tilde{p}_{2444} & \tilde{p}_{3344} 
\end{vmatrix}
-\pp_{1133}\left(\pp_{1144}\begin{vmatrix}
    \pp_{1122} & \pp_{1244}\\
    \pp_{1244}\tilde{p}_{2444} & \pp_{4444}\tilde{p}_{4444} 
\end{vmatrix}-\pp_{1122}\pp_{1444}^2\tilde{p}_{2444}\right)\\
&=& \tilde{p}_{2444}(-\pp_{1144}^2\pp_{1244}\pp_{1233}+\pp_{1133}\pp_{1244}^2\pp_{1144}+\pp_{1133}\pp_{1444}^2\pp_{1122})+ \tilde{p}_{3344}\pp_{1144}^2\pp_{1122} \\
&& - \tilde{p}_{4444}\pp_{1133}\pp_{1144}\pp_{1122}\pp_{4444}.
\end{eqnarray*}
Therefore, simplifying the resulting equation using (\ref{eq:rho_quartet}) and $(\pi_1^3+\pi_2^3)=(\pi_1^2+\pi_2^2-\pi_1\pi_2)\pi_{12}$ gives (e).

\end{itemize}

Finally, in order to prove the quadratic equation, consider $A_{11,12,33,13}^{11,12,33,13}$ in Table~\ref{tab:flatrank3b}. Computing the determinant by expanding using column (1,1) and then column (1,2) yields 
\begin{eqnarray*}
    \left| A_{11,12,33,13}^{11,12,33,13}\right|
    &=& \tilde{p}_{1111}\left|A_{12,33,13}^{12,33,13}\right|+\pp_{1133}\tilde{p}_{4411}\left|A_{11,12,13}^{12,33,13}\right|\\
    &=& \tilde{p}_{1111} \left(\tilde{p}_{1414}\pp_{1122}\left|A_{33,13}^{33,13}\right| - \tilde{p}_{4414}\pp_{1233}\left|A_{12,13}^{33,13}\right| \right) - \tilde{p}_{1414}\tilde{p}_{4411}\pp_{1122}\pp_{1133}\left|A_{11,13}^{33,13}\right|\\
    &=& \tilde{p}_{1111}\left[\pp_{1122}\tilde{p}_{1414}(\pp_{3333}\tilde{p}_{3333}\pp_{1133}\tilde{p}_{1414}-\pp^2_{1333}\tilde{p}_{4414}\tilde{p}_{1444})\right.\\
     &&-
\left.\pp_{1233}^2\tilde{p}_{4414}\tilde{p}_{1444}\pp_{1133}\tilde{p}_{1414}\right] - \pp_{1133}^3\pp_{1122}\tilde{p}_{4411}\tilde{p}^2_{1414}\tilde{p}_{1144}\\
    &=& \tilde{p}_{1414}[\pp_{1122}\pp_{3333}\pp_{1133}\tilde{p}_{1111}\tilde{p}_{3333}\tilde{p}_{1414}-(\tilde{p}_{1111}\tilde{p}_{4414}\tilde{p}_{1444})(\pp_{1122}\pp_{1333}^2\\
    && +\pp_{1233}^2\pp_{1133})-\pp_{1133}^3\pp_{1122}\tilde{p}_{4411}\tilde{p}_{1414}\tilde{p}_{1144}].
\end{eqnarray*}

\noindent Using $\left| A_{11,12,33,13}^{11,12,33,13}\right|=0$ together with (\ref{eq:54A}), factoring out ${\tilde{p}_{1414}}$, and simplifying the resulting equation with (\ref{eq:rho_quartet}) gives the desired equation.

\end{proof}
\begin{remark}\label{rem:nonzero_F81}
\rm Note that the equations in Propositions \ref{prop:4quartetEq} and \ref{prop:non-binomialEq} have been obtained by imposing only the rank conditions on flattenings and the symmetry equations defining $\F_4^{F81}$. Throughout the proof of Proposition \ref{prop:non-binomialEq}, we have only used that the coordinates $\tilde{p}_{1111}$ and $\tilde{p}_{1414}$ are different from zero.
\end{remark}

\section{The space of phylogenetic mixtures}\label{sec:mixtures}

We derive the spaces of the phylogenetic mixtures of F81 and F84 from $\E^{TN93}_n$, which was computed in \cite{CHT} for $n=3,4$. {If $\Z_n$ is the space of model zeroes of TN93, then  $\E^{TN93}_n=\Z_n$ for $n=3,4$, see \cite[Proposition 5.11]{CHT}.
 The main results of the section can be summarized as follows:}


\begin{thm}
Let 
$\F_n^\MM$ be the space of symmetry equations for the model $\MM$. Then
\begin{itemize}
    \item $n=3$: 
    $\E^{F84}_3=\F_3^{F84}\cap\Z_3$, $\E^{F81}_3=\F_3^{F81}\cap\Z_3$ with dimensions 
    $12$ and $5$ respectively;
    \item $n=4$: 
    $\E^{F84}_4=\F_4^{F84}\cap\Z_4$, $\E^{F81}_4=\F_4^{F81}\cap\Z_4\cap\mathcal{G}$ with dimensions 
    $55$ and $15$ respectively,
\end{itemize} 
\noindent
where $\mathcal{G}$ is the linear space cut out by the 9 non-binomial linear equations of \Cref{prop:modelEq12|34}.
\end{thm}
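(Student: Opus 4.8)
The plan is to prove each of the four equalities of linear subspaces by the same two-step scheme: first the inclusion of $\E_n^\MM$ into the proposed intersection, and then a matching of dimensions, from which equality of the nested linear spaces is immediate. For the inclusion I would observe that every equation on the right-hand side is a model invariant: the model zeros cut out $\Z_n$ (\Cref{prop:quartet-zeroes} and \Cref{rmk:Z4}), the symmetry equations cut out $\F_n^\MM$ by construction (\Cref{prop:quartet-eqF84}, \Cref{prop:quartet-eq}), and in the F81 quartet case the nine non-binomial linear equations defining $\mathcal{G}$ hold on every topology by \Cref{prop:modelEq12|34}. Since each such equation vanishes on $CV_T^\MM$ for all $T$, it vanishes on the span $\E_n^\MM=\langle\bigcup_T CV_T^\MM\rangle$; equivalently, from $CV_T^\MM\subseteq CV_T^{TN93}$ and the identity $\E_n^{TN93}=\Z_n$ of \cite[Proposition 5.11]{CHT} one already obtains $\E_n^\MM\subseteq\Z_n$, which the symmetry (and, for F81, $\mathcal{G}$) conditions then refine. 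This gives $\E_n^\MM\subseteq\F_n^\MM\cap\Z_n$ in all cases, with the extra factor $\mathcal{G}$ for F81 at $n=4$.

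For the tripod the reverse inclusion is a clean monomial count. By \Cref{tilde_rho} the parametrization on a star tree is monic and monomial, so each surviving coordinate is a single monomial $\lambda^{e_1}_{i_1}\lambda^{e_2}_{i_2}\lambda^{e_3}_{i_3}$ in the edge eigenvalues. Distinct monomials are linearly independent functions, so a linear form vanishes on $\E_3^\MM$ exactly when its coefficients respect the coincidences and vanishings of these monomials; these are precisely the equations isolated in \Cref{prop:eqtripodF84} and \Cref{prop:eqtripod}. Hence $\dim\E_3^\MM$ equals the number of distinct nonzero monomial classes, namely $12$ for F84 and $5$ for F81, matching $\dim(\F_3^\MM\cap\Z_3)$, and equality of the two nested spaces follows.

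For the quartet I would write $\E_4^\MM=\E_{T_{12}}^\MM+\E_{T_{13}}^\MM+\E_{T_{14}}^\MM$ and invoke the single-tree description (the content of \Cref{cor:mixt_tree} and its F81 analog): inside $\F_4^\MM\cap\Z_4$ (and $\cap\,\mathcal{G}$ for F81), the span $\E_{T_{kl}}^\MM$ of one topology is cut out by the additional topology-specific linear equations produced by the rank conditions of Section~\ref{sec:rank} (\Cref{prop:4quartetEq84} for F84, \Cref{prop:4quartetEq} for F81). The key point is that these extra equations, such as $\tilde p_{3434}=0$ on $T_{12}$, are topology invariants rather than model invariants: the coordinate they annihilate is realized as a nonzero coordinate on another topology. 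Thus, upon summing the three spans, the topology-specific constraints cancel and only the model invariants survive. Concretely I would show that the annihilator of $\E_4^\MM$ inside the dual of $\F_4^\MM\cap\Z_4\,(\cap\,\mathcal{G})$ is zero, i.e. no nonzero linear form lies simultaneously in the spans of the topology-specific equations of all three topologies, which is checked from the index-permuted equations of \Cref{prop:4quartetEq84} and \Cref{prop:4quartetEq}. The dimension bookkeeping then closes the argument: $\F_4^{F84}\cap\Z_4$ has dimension $256-172-29=55$, while for F81 the nine independent equations of $\mathcal{G}$ cut $\F_4^{F81}\cap\Z_4$ (of dimension $256-172-60=24$) down to $24-9=15$.

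The main obstacle is the quartet reverse inclusion, for two intertwined reasons. First, unlike the tripod, the parametrization on a quartet is not monomial: the interior edge turns several coordinates (such as $\tilde p_{2233},\tilde p_{2222},\tilde p_{3333},\tilde p_{4444},\tilde p_{3344}$) into genuine $\pi$-weighted linear combinations of the interior eigenvalues, so one cannot read the span off from distinct monomials and must instead argue through the explicit coordinate tables and the rank-condition minors. Second, proving that the three topology spans exactly fill $\F_4^\MM\cap\Z_4\,(\cap\,\mathcal{G})$ is where the asymmetry between the two models appears: for F84 the symmetry and rank equations already suffice, whereas for F81 the nine non-binomial equations of $\mathcal{G}$ are genuinely needed, and one must verify both that they are model invariants and that they are independent of the symmetry equations so that the count lands exactly on $15$.
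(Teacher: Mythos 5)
Your forward inclusions and your tripod argument are sound; in fact the monomial-independence argument for star trees (distinct monomials in the edge eigenvalues are linearly independent, so the span of the image has dimension equal to the number of distinct nonzero monomial classes) is a cleaner route than the paper's, which for the F84 tripod falls back on a \texttt{Macaulay2} computation. The quartet case, however, has a genuine gap. Your reverse inclusion rests on the decomposition $\E_4^\MM=\E_{T_{12}}^\MM+\E_{T_{13}}^\MM+\E_{T_{14}}^\MM$ together with the \emph{exact} single-tree descriptions of each $\E_{T_{kl}}^\MM$ (you cite \Cref{cor:mixt_tree} and its F81 analog). But knowing that the topology-specific equations of \Cref{prop:4quartetEq84} (resp.\ \Cref{prop:4quartetEq}) \emph{hold} on $CV_{T_{kl}}^\MM$ only gives an upper bound on $\E_{T_{kl}}^\MM$; to compute $\mathrm{Ann}(\E_4^\MM)=\bigcap_{kl}\mathrm{Ann}(\E_{T_{kl}}^\MM)$ from above you need the reverse inclusion for each single tree, i.e.\ a lower bound on $\dim\E_{T_{kl}}^\MM$. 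In the paper that lower bound is precisely what the proof of \Cref{prop:mixtures_quartetF84} establishes, by exhibiting $45$ linearly independent points coming from the star tree, $4$ more from $T_{12}$ with nontrivial interior edge, and $6$ more from $T_{13}$ and $T_{14}$; \Cref{cor:mixt_tree} is a \emph{consequence} of that construction, so invoking it here is circular unless you supply an independent construction of $49$ linearly independent points in each $CV_{T_{kl}}^{F84}$. For F81 the escape is to cite \cite{CasSteel} for $\dim\E_T^{F81}=13$ and $\dim\E_4^{F81}=15$, which is what \Cref{thm:mixturesF81} actually does --- at which point the sum-of-spans machinery becomes unnecessary, since $\E_4^{F81}\subseteq X$ with $\dim X=15=\dim\E_4^{F81}$ closes the argument immediately.

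A second, smaller issue: even granting the single-tree descriptions, your claim that the topology-specific constraints ``cancel'' when the three spans are summed --- equivalently, that no nonzero linear form lies in all three spaces of topology invariants modulo the model invariants --- is the actual content of the quartet statement and is only asserted, not verified. Sums of subspaces do not simply add dimensions, so $\dim(\E_{T_{12}}^\MM+\E_{T_{13}}^\MM+\E_{T_{14}}^\MM)=55$ (resp.\ $15$) requires an explicit finite check on the index-permuted equations. It is doable, but it is where the work lives; the paper sidesteps it by exhibiting the required number of linearly independent points directly in the union of the three varieties.
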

\begin{proof}
Combine \Cref{prop:mixtures_quartetF84}, \Cref{prop:eqtripod} and \Cref{thm:mixturesF81}.
\end{proof}

\subsection{Mixtures for F84}
We first focus on the F84 model. We will prove that the model zeroes and symmetry equations are sufficient to define the space of phylogenetic mixtures for both tripods and quartets.

\begin{thm}\label{prop:mixtures_quartetF84}
The space of mixtures $\E^{F84}_n$ for trees evolving under the F84 model coincides with the space $\F^{F84}_n\cap \Z_n$ of model zeros and symmetry equations for $n=3,4$. 
\end{thm}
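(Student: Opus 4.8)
The inclusion $\E^{F84}_n \subseteq \F^{F84}_n \cap \Z_n$ is immediate: by definition every model zero and every symmetry equation is a linear model invariant, hence vanishes on each $CV_T^{F84}$ and therefore on their linear span $\E^{F84}_n$. The plan is to prove the reverse inclusion by a dimension count. By \Cref{prop:eqtripodF84} the space $\F^{F84}_3\cap\Z_3$ has dimension $12$, and for $n=4$ removing the $172$ model zeros and the $29$ independent symmetry equations of \Cref{prop:quartet-eqF84} from $\otimes^4\CC^4$ leaves $256-172-29=55$, so $\F^{F84}_4\cap\Z_4$ has dimension $55$. It therefore suffices to exhibit $12$ (resp.\ $55$) linearly independent tensors inside $\bigcup_{T\in\mathcal{T}_n}CV_T^{F84}$.

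For the tripod ($n=3$) there is a single topology, so $\E^{F84}_3=\langle CV_T^{F84}\rangle$ and I would build the points by hand, as sketched around \Cref{prop:eqtripodF84}. Writing $D=\mathrm{diag}(\lambda_1,\lambda_2,\lambda_4,\lambda_4)$ for the F84 form, let $D_i$ (for $i\in\{1,2,4\}$) be the specialization $\lambda_i=1$, $\lambda_j=0$ for $j\neq i$, so that $D_4=\mathrm{diag}(0,0,1,1)$, and for each $\mathbf{i}=(i_1,i_2,i_3)\in\{1,2,4\}^3$ set $p^{\mathbf i}=\varphi_T(D_{i_1},D_{i_2},D_{i_3})$. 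By \Cref{tilde_rho} the coordinate $\tilde p^{\mathbf i}_{j_1j_2j_3}$ equals $1$ exactly when each $j_k$ is compatible with $i_k$ (namely $j_k=i_k$ if $i_k\in\{1,2\}$ and $j_k\in\{3,4\}$ if $i_k=4$) and $j_1j_2j_3$ is listed in \eqref{eq:tripod_rho}, and is $0$ otherwise. A direct inspection of the $27$ choices of $\mathbf i$ shows that exactly $12$ yield a nonzero tensor and that each of these carries a coordinate no other one does (for instance $p^{(4,4,4)}$ is the only point supported on $\tilde p_{444}$, $p^{(1,4,4)}$ the only one on $\tilde p_{144}$, and so on), so the $12$ points are linearly independent and $\dim\E^{F84}_3=12$.

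For quartets ($n=4$) I would again use corner matrices on the four pendant edges together with a choice of diagonal matrix $D$ on the interior edge, now across the three topologies, and write $\E^{F84}_4=\E^{F84}_{T_{12}}+\E^{F84}_{T_{13}}+\E^{F84}_{T_{14}}$. Using \eqref{eq:eigenvalues_p_q_tilde}, a point with corner pendants has coordinate $\tilde p_{j_1j_2j_3j_4}=\big(\prod_k\lambda^{e_k}_{j_k}\big)\,\tilde q_{j_1j_2j_3j_4}$, where $\tilde q$ is the interior-edge tensor whose entries are the linear expressions in the interior eigenvalues tabulated in the proof of \Cref{prop:quartet-eqF84}; further specializing $D$ to its own corners $D_m$ separates those contributions. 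The bulk of the argument is bookkeeping: grouping the $55$ coordinates of $\F^{F84}_4\cap\Z_4$ into their multidegree classes and selecting $55$ of these points whose supports render them independent.

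The main obstacle is precisely the topology-dependent coordinates. Within a single tree $T_{12}$ the span $\E^{F84}_{T_{12}}$ is cut out in $\F^{F84}_4\cap\Z_4$ by the extra equations of \Cref{prop:4quartetEq84} (namely $\tilde p_{3434}=\tilde p_{3443}=0$ and the four identifications $\tilde p_{2424}=\tilde p_{2323}$, etc.), so no single topology fills up the whole space. What must be shown is that the three systems of topology equations share no nonzero solution functional, equivalently that the coordinates killed or identified on one topology — the class of $\tilde p_{\sigma(3344)}$ together with the split coordinates $\tilde p_{2424},\tilde p_{2323}$ and their analogues — are recovered from the other two. Since the leaf permutations carrying $T_{12}$ to $T_{13}$ and $T_{14}$ send these to incompatible positions (for instance $\tilde p_{3434}\mapsto\tilde p_{3344}$, which is nonzero precisely on $T_{12}$), the three single-tree spans are transverse enough to sum to all of $\F^{F84}_4\cap\Z_4$. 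Verifying this transversality is the technical heart of the proof; it can be settled either by completing the explicit count of corner points above or, as for the tripod, by a direct rank computation.
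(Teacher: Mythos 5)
Your overall strategy is the same as the paper's: prove $\E^{F84}_n\subseteq\F^{F84}_n\cap\Z_n$ trivially, compute $\dim(\F^{F84}_n\cap\Z_n)$ ($12$ for $n=3$, $55$ for $n=4$), and then exhibit that many linearly independent tensors in $\bigcup_T CV_T^{F84}$ built from "corner" diagonal matrices $D_1,D_2,D_4$. Your tripod argument is complete and correct (the $12$ surviving points have pairwise distinguishing supports), and is in fact a more self-contained alternative to the paper's one-line appeal to a \texttt{Macaulay2} computation.

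For $n=4$, however, there is a genuine gap: you describe the plan but never execute the step that carries all the content. You correctly identify that corner points on the star tree only reach a $45$-dimensional subspace (their supports are disjoint, so that part is easy), and that the remaining $10$ dimensions must come from the six topology equations per tree in \Cref{prop:4quartetEq84}; but you then declare that "the three single-tree spans are transverse enough" and defer the verification. This is not a triviality of disjoint supports: the additional points needed (the paper uses $\varphi_{T_{12}}(D_2,D_2,D_4,D_4;D_1)$, $\varphi_{T_{12}}(D_4,D_4,D_4,D_4;D_1)$, $\varphi_{T_{12}}(D_4,D_4,D_4,D_4;D_2)$, and analogues on $T_{13}$, $T_{14}$) are supported on the \emph{same} coordinates as star-tree points already in your list (e.g.\ $(2,2,3,3)$, $(2,2,4,4)$, $(3,3,3,3)$, $(4,4,4,4)$, $(3,3,4,4)$, \dots), so independence must be checked by comparing explicit coordinate values depending on $\pi$ (this is exactly what the paper's Table \ref{tab:4444} does, using genericity of $\pi$). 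Moreover, your transversality heuristic only addresses the two topology \emph{zeros} per tree; the four \emph{identification} equations $\tilde p_{2424}=\tilde p_{2323}$, etc., and their images under the leaf permutations live on different coordinate classes, and ruling out a common nonzero functional for all three trees requires an actual computation. Also note that the fact you invoke — that $\E_{T_{12}}^{F84}$ is cut out inside $\F^{F84}_4\cap\Z_4$ by exactly the equations of \Cref{prop:4quartetEq84} — is \Cref{cor:mixt_tree}, which the paper derives \emph{from} the point count you are omitting, so you cannot use it as an input without circularity. To close the gap you must either list the $55$ points and verify their independence coordinate-by-coordinate, or perform the rank computation you allude to; as written, the quartet case is a proof outline, not a proof.
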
 
\begin{proof}
For $n=3$, \texttt{Macaulay2} computations show that the vanishing ideal $I_T^{F84}$ is generated in degree 1 by the equations that cut out $\F^{F84}_3\cap \Z_3$.

Now consider $n=4$. Recall that $\F^{F84}_4\cap\Z_4$ is a 55-dimensional linear space containing $\E^{F84}_4$, see \Cref{prop:quartet-eqF84}. To prove equality, it is enough to show that there exist 55 linearly independent points in $CV_{T_{12}}^{F84}\cup CV_{T_{13}}^{F84}\cup CV_{T_{14}}^{F84}$.

Consider the diagonal matrix $D=\mathrm{diag}(\lambda_1,\lambda_2,\lambda_4,\lambda_4)$, and for $i\in\lbrace 1,2,4\rbrace$, let $D_i$ be the specialization to $\lambda_i=1$ and $\lambda_j=0$, for $j\neq i$.
Fix $T=T_{12}$ and consider points 
 \[{p}^{\,\mathbf{i}}={\varphi}_T(D_{i_1},D_{i_2},D_{i_3},D_{i_4};Id),\]
\noindent
where $\mathbf{i}=(i_1,i_2,i_3,i_4)$, $i_j\in\lbrace 1,2,4\rbrace$. Note that they can be seen as points in the 4-leaf star tree 
and its coordinates have the following expression: 
$$p^\mathbf{i}_{j_1j_2j_3j_4}=\delta_{i_1j_1}\delta_{i_2j_2}\delta_{i_3j_3}\delta_{i_4j_4}\bar{\rho}_{j_1j_2j_3j_4},$$
\noindent
where $j_k\in[4]$, $\delta_{ij}$ is the Kronecker delta and coordinates $\bar{\rho}_{j_1j_2j_3j_4}$ are displayed in \eqref{eq:rho_quartet}.
 
This construction provides $3^4=81$ points $p^{\,\mathbf{i}}$, 36 of which are the zero point. For the remaining 45 points, coordinate $p^\mathbf{i}_{i_1i_2i_3i_4}$ is always non-zero. If $4$ is not in $\mathbf{i}$, then this is the single non-zero coordinate of $p^{\,\mathbf{i}}$. Otherwise, replacing all occurrences of $4$ in $\mathbf{i}$ with $3$ gives an additional non-zero coordinate $p^\mathbf{i}_{j_1j_2j_3j_4}$. Note that replacing some but not all $4$'s does not result in further non-zero coordinates. 
Crucially, the non-zero coordinates of the $p^{\,\mathbf{i}}$ are distinct across different points. Thus, the set of non-zero $p^{\,\mathbf{i}}$ spans a 45-dimensional linear subspace of $\E_4^{F84}$.




Now we consider the following 4 points, where the interior edge is no longer the identity: 

 \[p^{(2,2,4,4;1)}={\varphi}_{T_{12}}(D_2,D_2,D_4,D_4;D_1),\quad p^{(4,4,2,2;1)}={\varphi}_{T_{12}}(D_4,D_4,D_2,D_2;D_1)\]
 \[p^{(4,4,4,4;1)}={\varphi}_{T_{12}}(D_4,D_4,D_4,D_4;D_1),\quad p^{(4,4,4,4;2)}={\varphi}_{T_{12}}(D_4,D_4,D_4,D_4;D_2).\]

Note that the only two non-zero coordinates of $p^{(2,2,4,4;1)}$ are in the same position than those of $p^{(2,2,4,4)}$ but they are linearly independent points for generic $\pi$: 
$$p^{(2,2,4,4)}_{2233}=1/\pi_{34}\pi_3\pi_4,\quad p^{(2,2,4,4)}_{2244}=1/\pi_{12}\pi_1\pi_2,$$
$$p^{(2,2,4,4;1)}_{2233}=\pi_3\pi_4\pi_{34}^2/\pi_{12},\quad p^{(2,2,4,4;1)}_{2244}=\pi_1\pi_2\pi_{12}^2/\pi_{34}.$$

Linear independence of $p^{(4,4,2,2;1)}$ and $p^{(2,2,4,4;1)}$ is justified analogously; details for $p^{(4,4,4,4;1)}$, $p^{(4,4,4,4;2)}$ and $p^{(4,4,4,4)}$ appear in \Cref{tab:4444}. This amounts to $49$ linearly independent points in $\E_4^{F84}$.

Finally, we can prove analogously that the following 6 additional points arising from tree topologies $T_{13}$ and $T_{14}$ are also linearly independent:
\begin{align*}
    &p^{(2,4,2,4;1)}_{T_{13}}={\varphi}_{T_{13}}(D_2,D_4,D_2,D_4;D_1),\quad p^{(2,4,2,4;1)}_{T_{14}}={\varphi}_{T_{14}}(D_2,D_4,D_2,D_4;D_1),\\
    & p^{(4,2,4,2;1)}_{T_{13}}={\varphi}_{T_{13}}(D_4,D_2,D_4,D_2;D_1),\quad
    p^{(4,2,4,2;1)}_{T_{14}}={\varphi}_{T_{14}}(D_4,D_2,D_4,D_2;D_1),\\
    & 
    p^{(4,4,4,4;1)}_{T_{13}}={\varphi}_{T_{13}}(D_4,D_4,D_4,D_4;D_1),\quad
    p^{(4,4,4,4;1)}_{T_{14}}={\varphi}_{T_{14}}(D_4,D_4,D_4,D_4;D_1).
\end{align*}

\begin{table}[h]
    \centering
    \begin{tabular}{c|c|c}
 Point & positions of non-zero coordinates & non-zero coordinates\\
 \hline
 $p^{(4,4,4,4)}$    &  $(3,3,3,3)$ & $(\pi_3^3+\pi_4^3)/\pi_3^3\pi_4^3$\\
    &  $(4,4,4,4)$ & $(\pi_1^3+\pi_2^3)/\pi_1^3\pi_2^3$\\ 
    \hline
 $p^{(4,4,4,4;1)}$    &  $(3,3,3,3)$ & $(\pi_{34}/\pi_3\pi_4)^2$\\
 &  $(3,3,4,4)$ & $\pi_{12}\pi_{34}/\pi_1\pi_2\pi_3\pi_4$\\
  &  $(4,4,3,3)$ & $\pi_{12}\pi_{34}/\pi_1\pi_2\pi_3\pi_4$\\
  & $(4,4,4,4)$ & $(\pi_{12}/\pi_1\pi_2)^2$\\
  \hline
 $p^{(4,4,4,4;2)}$    &  $(3,3,3,3)$ & $\pi_{12}\pi_{34}/\pi_3^2\pi_4^2$\\
 &  $(3,3,4,4)$ & $-\pi_{12}\pi_{34}/\pi_1\pi_2\pi_3\pi_4$\\
  &  $(4,4,3,3)$ & $-\pi_{12}\pi_{34}/\pi_1\pi_2\pi_3\pi_4$\\
  & $(4,4,4,4)$ & $\pi_{12}\pi_{34}/\pi_1^2\pi_2^2$\\
  \hline
     $p^{(4,4,4,4;1)}_{T_{13}}$    &  $(3,3,3,3)$ & $(\pi_{34}/\pi_3\pi_4)^2$\\
 &  $(3,4,3,4)$ & $\pi_{12}\pi_{34}/\pi_1\pi_2\pi_3\pi_4$\\
  &  $(4,3,4,3)$ & $\pi_{12}\pi_{34}/\pi_1\pi_2\pi_3\pi_4$\\
  & $(4,4,4,4)$ & $(\pi_{12}/\pi_1\pi_2)^2$\\
  \hline
   $p^{(4,4,4,4;1)}_{T_{14}}$    &  $(3,3,3,3)$ & $(\pi_{34}/\pi_3\pi_4)^2$\\
 &  $(3,4,4,3)$ & $\pi_{12}\pi_{34}/\pi_1\pi_2\pi_3\pi_4$\\
  &  $(4,3,3,4)$ & $\pi_{12}\pi_{34}/\pi_1\pi_2\pi_3\pi_4$\\
  & $(4,4,4,4)$ & $(\pi_{12}/\pi_1\pi_2)^2$\\

\end{tabular}
    \caption{Linearly independent points obtained by considering matrices $D_4$ at pendant edges for different tree topologies and different matrices at the interior edge.}
    \label{tab:4444}
\end{table}

This makes a total of $55$ linearly independent points in $CV_{T_{12}}^{F84}\cup CV_{T_{13}}^{F84}\cup CV_{T_{14}}^{F84}$.

\end{proof}

In the case of quartets, we are able to describe the space of mixtures on a given tree $T$, $\E_T^\MM$, using additional \emph{topology} symmetry equations.

\begin{cor}\label{cor:mixt_tree}
The space of mixtures $\mathcal{E}_T^{F84}$ for $T=T_{12}$ has dimension $49$ ($256-201-6=49$) and it is cut out by the equations defining $\Z_4$ in \Cref{prop:quartet-zeroes}, $\F^{F84}_4$ in \Cref{prop:quartet-eqF84} and the $6$ equations in \Cref{prop:4quartetEq84}.
\end{cor}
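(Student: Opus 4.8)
The plan is to sandwich $\E_{T_{12}}^{F84}$ between a lower bound coming from explicit points of $CV_{T_{12}}^{F84}$ and an upper bound coming from the defining equations. First I would observe that
\[\E_{T_{12}}^{F84}=\langle CV_{T_{12}}^{F84}\rangle\subseteq\left\langle \bigcup_{T}CV_T^{F84}\right\rangle=\E_4^{F84},\]
and that the six equations of \Cref{prop:4quartetEq84} vanish on $CV_{T_{12}}^{F84}$, hence on its linear span. Writing $W$ for the subspace cut out by those six equations, this gives $\E_{T_{12}}^{F84}\subseteq\E_4^{F84}\cap W$. By \Cref{prop:mixtures_quartetF84} we already know $\E_4^{F84}=\F_4^{F84}\cap\Z_4$ has dimension $55$, so it suffices to prove that the six equations cut $\E_4^{F84}$ down to dimension exactly $49$ and that $\E_{T_{12}}^{F84}$ already attains dimension $49$.

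For the upper bound I would check that the six equations of \Cref{prop:4quartetEq84} impose six linearly independent conditions on $\E_4^{F84}$, by verifying that every coordinate they involve is free on $\E_4^{F84}$. The coordinates $\tilde{p}_{3434}$ and $\tilde{p}_{3443}$ of part (a) are permutations of $3344$; among the six such permutations the only relations imposed by $\F_4^{F84}$ are those of \Cref{prop:quartet-eqF84}(c), which pair $\tilde{p}_{3434}$ with $\tilde{p}_{4343}$ and $\tilde{p}_{3443}$ with $\tilde{p}_{4334}$, so $\tilde{p}_{3434}$ and $\tilde{p}_{3443}$ are free coordinates and each vanishing condition is independent. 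The eight coordinates $\tilde{p}_{2424},\tilde{p}_{2323},\tilde{p}_{2442},\tilde{p}_{2332},\tilde{p}_{4224},\tilde{p}_{3223},\tilde{p}_{4242},\tilde{p}_{3232}$ of part (b) all carry two $2$'s and two indices in $\{3,4\}$; none of them is a model zero of \Cref{prop:quartet-zeroes} (they have neither exactly one index in $\{3,4\}$ nor three $1$'s) and none appears in any of the $29$ symmetry equations of \Cref{prop:quartet-eqF84} (whose coordinates all contain a $1$, contain a single $2$ with three indices in $\{3,4\}$, or are permutations of $3344$). Hence these eight coordinates are free on $\E_4^{F84}$, and the four binomial relations of part (b), which pair up disjoint coordinates, are four further independent conditions. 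Since the coordinate sets of (a) and (b) are disjoint, all six conditions are independent on $\E_4^{F84}$, whence $\dim(\E_4^{F84}\cap W)=55-6=49=256-201-6$.

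For the lower bound I would reuse the points already produced in the proof of \Cref{prop:mixtures_quartetF84} that lie in $CV_{T_{12}}^{F84}$: the $45$ nonzero star-tree points $p^{\,\mathbf{i}}={\varphi}_{T_{12}}(D_{i_1},D_{i_2},D_{i_3},D_{i_4};Id)$ together with the four points $p^{(2,2,4,4;1)}$, $p^{(4,4,2,2;1)}$, $p^{(4,4,4,4;1)}$, $p^{(4,4,4,4;2)}$ obtained by placing a nontrivial diagonal matrix on the interior edge of $T_{12}$. These $49$ points form a subset of the $55$ points shown to be linearly independent in that proof (the remaining six arising from $T_{13}$ and $T_{14}$), and each lies in $CV_{T_{12}}^{F84}$; hence $\dim\E_{T_{12}}^{F84}\ge 49$. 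Combining this with the upper bound forces $\E_{T_{12}}^{F84}=\E_4^{F84}\cap W$, a $49$-dimensional space cut out precisely by the equations of \Cref{prop:quartet-zeroes}, \Cref{prop:quartet-eqF84}, and \Cref{prop:4quartetEq84}.

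The hard part will be the upper bound, i.e.\ confirming that the six topology equations are genuinely independent modulo the $201$ model-zero and symmetry equations rather than already implied by them. I expect this to reduce cleanly to the bookkeeping above: since every coordinate entering the six equations is unconstrained on $\E_4^{F84}$ beyond the symmetry relations already listed, no nontrivial combination of the six can lie in the span of the $201$. By contrast, the lower-bound step is essentially free once the point construction of \Cref{prop:mixtures_quartetF84} is available, as one only discards the points associated with the other two topologies.
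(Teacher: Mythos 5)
Your proposal is correct and follows essentially the same route as the paper: the paper's proof likewise obtains the lower bound from the $45$ star-tree points plus the $4$ points on $T_{12}$ with a nontrivial interior edge, and takes the upper bound from the dimension count $256-201-6=49$ of the linear space cut out by the stated equations. Your explicit verification that the six equations of Proposition~\ref{prop:4quartetEq84} are independent modulo the $201$ defining $\F_4^{F84}\cap\Z_4$ is a detail the paper leaves implicit, and it checks out.
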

\begin{proof}
In the proof of \Cref{prop:mixtures_quartetF84}, we obtained $45$ points from the star tree and 4 from $T_{12}$. This yields a total of 49 linearly independent points in $CV_T^{F84}$.
\end{proof}

\subsection{Mixtures for F81}

Now we focus on the F81 model. Note that the case of tripods for this model was already fully understood in \Cref{prop:eqtripod} and $\E_3^{F81}$ is again completely defined by model zeroes and symmetry equations. However, these equations will not be enough to cut out the space of phylogenetic mixtures for quartets. Unlike in the case of the F84 model, the dimensions of the spaces of mixtures are known: $\dim \E_4^{F81}=15$ and $\dim \E_T^{F81}=13$  for a given quartet $T$ \cite{CasSteel}. We will start by describing the latter space and then provide the $9$ additional non-binomial linear equations that define the whole space of mixtures for quartets.

\begin{prop} The linear span $\mathcal{E}_T^{F81}=\langle CV_T^{F81}\rangle$ for the quartet {$T=T_{12}$} has dimension $(256-243)=13$ and is defined by the equations defining $\F^{F81}_4\cap \Z_4$, together with the six equations in Proposition \ref{prop:4quartetEq}, and the five linear equations in Proposition \ref{prop:non-binomialEq}.
\end{prop}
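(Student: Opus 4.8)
The plan is to prove the two inclusions by a containment-plus-dimension argument, exactly as for the other mixture spaces in this section. First I would record that every linear form in the statement is a phylogenetic invariant for $T=T_{12}$: the model zeros and symmetry equations by \Cref{prop:quartet-zeroes} and \Cref{prop:quartet-eq}, the six equations by \Cref{prop:4quartetEq}, and the five non-binomial linear equations (a)--(e) by \Cref{prop:non-binomialEq}. Since $\mathcal{E}_T^{F81}=\langle CV_T^{F81}\rangle$ and all these forms are linear, they vanish on $\mathcal{E}_T^{F81}$; writing $W$ for their common zero locus, this gives $\mathcal{E}_T^{F81}\subseteq W$. As $\dim\mathcal{E}_T^{F81}=13$ by \cite{CasSteel}, it then suffices to show $\dim W\le 13$, whence equality of dimensions forces $W=\mathcal{E}_T^{F81}$.

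To bound $\dim W$ I would start from the fact that $\F_4^{F81}\cap\Z_4$ is the $24$-dimensional space cut out by the $172$ model zeros of \Cref{prop:quartet-zeroes} and the $60$ symmetry equations of \Cref{prop:quartet-eq}, and then argue that the $6+5=11$ additional forms are linearly independent modulo the forms defining $\F_4^{F81}\cap\Z_4$; this yields $\dim W=24-11=13$. Concretely, I would pass to the quotient by $\F_4^{F81}\cap\Z_4$, i.e. keep one representative coordinate per symmetry class among the coordinates not killed by $\Z_4$, and check rank $11$ of the additional forms in these variables.

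The computation simplifies because of an almost triangular structure. Each of the six equations of \Cref{prop:4quartetEq}---$\tilde p_{3434}=0$, $\tilde p_{3443}=0$, and $\tilde p_{2424}=\tilde p_{2442}=\tilde p_{2332}=\tilde p_{2323}=\tilde p_{2444}$---contains one coordinate ($\tilde p_{3434}$, $\tilde p_{3443}$, $\tilde p_{2424}$, $\tilde p_{2442}$, $\tilde p_{2332}$, $\tilde p_{2323}$ respectively) whose symmetry class occurs in none of the other ten forms; such a ``private'' coordinate forces its coefficient to vanish in any linear relation, so these six forms are independent of one another and of the five equations (a)--(e). It then remains to see that (a)--(e) are independent. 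These involve only the seven coordinates $\tilde p_{2222},\tilde p_{2233},\tilde p_{2244},\tilde p_{3333},\tilde p_{4444},\tilde p_{3344},\tilde p_{2444}$, lying in pairwise distinct classes, and again the structure is near-triangular: (a) alone contains $\tilde p_{2222}$, (b) alone contains $\tilde p_{3333}$, and among (c),(d),(e) only (d) contains $\tilde p_{2233}$. Peeling these off in order reduces independence of (a)--(e) to the nonvanishing of the $2\times2$ minor of (c),(e) in the columns $\tilde p_{4444},\tilde p_{3344}$, which equals $\pi_{34}(\pi_1^3+\pi_2^3)\,\pi_1^2\pi_2^2\pi_3\pi_4\,\pi_{12}$ and is nonzero for generic $\pi$ (using $\pi_{12}+\pi_{34}=1$). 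Hence all eleven forms are independent modulo $\F_4^{F81}\cap\Z_4$ and $\dim W=13$.

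Combining $\mathcal{E}_T^{F81}\subseteq W$ with $13=\dim\mathcal{E}_T^{F81}\le\dim W\le 13$ gives $\mathcal{E}_T^{F81}=W$. The only delicate point is the bookkeeping of symmetry classes: one must confirm that the coordinates $\tilde p_{3434},\tilde p_{3443},\tilde p_{2424},\tilde p_{2442},\tilde p_{2332},\tilde p_{2323}$ really do lie in classes disjoint from those occurring in (a)--(e), so that the private-coordinate reduction is valid and the genuine linear-algebra content collapses to a single $2\times2$ determinant.
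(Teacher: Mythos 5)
Your proposal is correct and follows essentially the same route as the paper: show that all the listed linear forms vanish on $CV_T^{F81}$ and hence on its span $\mathcal{E}_T^{F81}$, verify that the $232+6+5$ forms are linearly independent so the common zero locus has dimension $256-243=13$, and conclude by matching with $\dim\mathcal{E}_T^{F81}=13$ from \cite{CasSteel}. The only difference is that you explicitly carry out the linear-independence check (via the private-coordinate triangularization and the final $2\times 2$ minor, which is correct) where the paper simply asserts it is straightforward.
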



\begin{proof}
Let $Y\subseteq \otimes^4\CC^4$ be the linear space defined by the linearly independent equations defining $\F_4^{F81}\cap \Z_4$ ($172$ model zeros and $60$ symmetry equations), the six equations in Proposition \ref{prop:4quartetEq}, and the five linear equations in Proposition \ref{prop:non-binomialEq}. It is straightforward to check that all these equations are linearly independent and $Y$ has dimension $13=256-172-60-6-5$.

We have seen that these linear equations vanish on $CV_T^{F81}$ and hence they vanish on $\E_T^{F81}$, so that $\E_T^{F81} \subseteq Y$. The dimensions of the two spaces agree by \cite{CasSteel}, and this completes the proof.
\end{proof}

In Proposition~\ref{prop:non-binomialEq}, we highlighted linear and quadratic equations that are satisfied for tree $T_{12}$. In the following proposition we give nine linear equations that vanish on all $CV_T^{F81}$ (the equation in Proposition~\ref{prop:non-binomialEq}(d) is one of them), which  therefore give rise to model invariants.
\begin{prop}\label{prop:modelEq12|34}
The following equations are model equations.
\begin{enumerate}
        \item[(a)] $\pi_{34}^2\tilde{p}_{2244}=\pi_{12}^2\tilde{p}_{2233}+(\pi_{34}^2-\pi_{12}^2)\tilde{p}_{2444}$;
        \item[(b)] $\pi_{34}^2\tilde{p}_{2424}=\pi_{12}^2\tilde{p}_{2323}+(\pi_{34}^2-\pi_{12}^2)\tilde{p}_{2444}$;
        \item[(c)] $\pi_{34}^2\tilde{p}_{2442}=\pi_{12}^2\tilde{p}_{2332}+(\pi_{34}^2-\pi_{12}^2)\tilde{p}_{2444}$;
        \item[(d)] $\pi_1\pi_2\pi_3\pi_4\tilde{p}_{3344}=\pi_{12}^2(\tilde{p}_{2233}-\tilde{p}_{2444})$;
        \item[(e)] $\pi_1\pi_2\pi_3\pi_4\tilde{p}_{3434}=\pi_{12}^2(\tilde{p}_{2323}-\tilde{p}_{2444})$;
        \item[(f)] $\pi_1\pi_2\pi_3\pi_4\tilde{p}_{3443}=\pi_{12}^2(\tilde{p}_{2332}-\tilde{p}_{2444})$;
        \item[(g)] $\pi_1\pi_2\pi_3\pi_4(\tilde{p}_{3344}+\tilde{p}_{3434}+\tilde{p}_{3443})=(\pi_{12}^3+\pi_{34}^3)(\tilde{p}_{2222}-\tilde{p}_{2444})$
  
        \item[(h)]  $\pi_{12}(\pi_3^3+\pi_4^3)(\tilde{p}_{3333}-\tilde{p}_{2444})=\pi_3\pi_4\pi_{34}(\pi_{12}^3+\pi_{34}^3)(\tilde{p}_{2222}-\tilde{p}_{2444})$

        \item[(i)] $\pi_{34}(\pi_1^3+\pi_2^3)(\tilde{p}_{4444}-\tilde{p}_{2444})=\pi_1\pi_2\pi_{12}(\pi_{12}^3+\pi_{34}^3)(\tilde{p}_{2222}-\tilde{p}_{2444})$
    \end{enumerate}
\end{prop}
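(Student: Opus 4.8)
The plan is to combine the permutation symmetry relating the three quartet topologies with the equations already established for $T_{12}$. Recall that $T_{13}$ and $T_{14}$ are obtained from $T_{12}=l_1l_2|l_3l_4$ by the leaf transpositions $(23)$ and $(24)$, so that $CV_{T_{13}}^{F81}=(23)\cdot CV_{T_{12}}^{F81}$ and $CV_{T_{14}}^{F81}=(24)\cdot CV_{T_{12}}^{F81}$, where a permutation acts by permuting tensor factors. Consequently a linear form $f$ is a model invariant if and only if $f$ and its images under $(23)$ and $(24)$ (obtained by replacing each $\tilde{p}_{i_1i_2i_3i_4}$ by $\tilde{p}_{\sigma(i_1i_2i_3i_4)}$) all vanish on $CV_{T_{12}}^{F81}$. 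First I would record the action of these transpositions on the nine equations, noting that the $\pi_i$ are attached to states, not leaves, hence are untouched. A direct check shows the list is closed: $(23)$ and $(24)$ interchange $(a),(b),(c)$ among themselves and $(d),(e),(f)$ among themselves, while each of $(g),(h),(i)$ is fixed, since the symmetric sum $\tilde{p}_{3344}+\tilde{p}_{3434}+\tilde{p}_{3443}$ and the coordinates $\tilde{p}_{2222},\tilde{p}_{3333},\tilde{p}_{4444},\tilde{p}_{2444}$ are invariant under permutations fixing position $1$. Hence it suffices to prove that all nine equations vanish on $CV_{T_{12}}^{F81}$.

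On $T_{12}$ I would dispatch the equations in two groups. Equation (a) is exactly Proposition~\ref{prop:non-binomialEq}(d). Equations (b), (c), (e), (f) follow immediately from Proposition~\ref{prop:4quartetEq}: there one has $\tilde{p}_{3434}=\tilde{p}_{3443}=0$ and $\tilde{p}_{2424}=\tilde{p}_{2442}=\tilde{p}_{2332}=\tilde{p}_{2323}=\tilde{p}_{2444}$, which collapse each of these four equations to the trivial identity $0=0$. The substantive cases are therefore (d), (g), (h), (i).

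For those the cleanest route is to pass to the coordinates $\tilde{q}$ of a point $q=\psi_{T_{12}}^{F81}(Id,Id,Id,Id;M)$. Every coordinate occurring in (a)--(i) has all four indices in $\{2,3,4\}$; since F81 forces $\lambda_2^e=\lambda_3^e=\lambda_4^e$ for every edge $e$, relation~\eqref{eq:eigenvalues_p_q_tilde} shows that all these coordinates carry the same eigenvalue prefactor $\lambda^{e_1}_4\lambda^{e_2}_4\lambda^{e_3}_4\lambda^{e_4}_4$. Thus, on the dense open set where this factor is nonzero, each of (d), (g), (h), (i) is equivalent to the corresponding identity in the $\tilde{q}$-coordinates, and by irreducibility (Remark~\ref{rmk:primeideal}) it then holds on all of $CV_{T_{12}}^{F81}$. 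The required $\tilde{q}$-values are read off from \cite[SM1.2]{CHT} by setting $\lambda_2=\lambda_3=\lambda_4$; for instance $\tilde{q}_{2444}=\lambda_4$, $\tilde{q}_{3344}=\frac{\pi_{12}\pi_{34}}{\pi_1\pi_2\pi_3\pi_4}(\lambda_1-\lambda_4)$, and $\tilde{q}_{2222}=\frac{\pi_{12}\pi_{34}\lambda_1+(\pi_{12}-\pi_{34})^2\lambda_4}{\pi_{12}^3+\pi_{34}^3}$, with analogous expressions for $\tilde{q}_{2233},\tilde{q}_{3333},\tilde{q}_{4444}$. Each equation becomes a linear identity in $\lambda_1,\lambda_4$ that I would verify using $\pi_{12}+\pi_{34}=1$ together with $\pi_3^3+\pi_4^3=\pi_{34}(\pi_3^2-\pi_3\pi_4+\pi_4^2)$ and $\pi_1^3+\pi_2^3=\pi_{12}(\pi_1^2-\pi_1\pi_2+\pi_2^2)$; in every case both sides reduce to a common multiple of $(\lambda_1-\lambda_4)$.

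I expect the main obstacle to be purely computational: the bookkeeping in (h) and (i), where the $\lambda_4$-coefficients of $\tilde{q}_{3333}$ and $\tilde{q}_{4444}$ must be simplified, the numerator $\pi_{12}\pi_{34}\pi_3\pi_4+\pi_{34}(\pi_3-\pi_4)^2-(\pi_3^3+\pi_4^3)$ collapsing to $-\pi_3\pi_4\pi_{34}^2$ precisely because $\pi_{12}+\pi_{34}=1$. An alternative route internal to this paper is to derive (d), (g), (h), (i) for $T_{12}$ as explicit linear combinations of the five equations of Proposition~\ref{prop:non-binomialEq} (for example eliminating $\tilde{p}_{2244}$ via (d) and $\tilde{p}_{4444}$ via (e)); this avoids quoting \cite{CHT} but demands the same simplifications. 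Either way, once all nine equations are known on $CV_{T_{12}}^{F81}$, the permutation-closure from the first step upgrades them to the other two topologies, showing they vanish on every $CV_T^{F81}$ and completing the proof.
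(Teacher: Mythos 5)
Your proposal is correct, and its skeleton matches the paper's: prove the nine equations on $CV_{T_{12}}^{F81}$ and transfer them to $T_{13}$, $T_{14}$ by leaf permutations; moreover (a) is quoted from Proposition~\ref{prop:non-binomialEq}(d) and (b), (c), (e), (f) collapse via Proposition~\ref{prop:4quartetEq} exactly as in the paper. Where you genuinely diverge is in (d), (g), (h), (i): you verify them directly on the parametrization, observing that every coordinate involved has all indices in $\{2,3,4\}$, hence carries the common prefactor $\lambda^{e_1}_4\lambda^{e_2}_4\lambda^{e_3}_4\lambda^{e_4}_4$ under \eqref{eq:eigenvalues_p_q_tilde}, so each identity reduces to a linear identity in $\lambda_1,\lambda_4$ for the $\tilde{q}$-coordinates of $\psi_{T_{12}}^{F81}(Id,Id,Id,Id;M)$; I checked these identities and they do hold (e.g.\ both sides of (g) reduce to $\pi_{12}\pi_{34}(\lambda_1-\lambda_4)$ using $\pi_{12}^3+\pi_{34}^3=(\pi_{12}-\pi_{34})^2+\pi_{12}\pi_{34}$). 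The paper instead derives (d), (g), (h), (i) purely algebraically as linear combinations of the five equations of Proposition~\ref{prop:non-binomialEq}. Your route is shorter and more transparent, and your permutation-closure argument packages the paper's case-by-case index relabelling more cleanly. What the paper's route buys, and yours does not, is provenance: the remark following the proposition and Corollary~\ref{cor:satF81} rely on the fact that these nine equations are consequences of only the rank conditions on flattenings, the symmetry equations of $\F_4^{F81}$, and the nonvanishing of $\tilde{p}_{1111}$ and $\tilde{p}_{1414}$. A direct check against the parametrization establishes validity but not membership in the ideal $J_A^{F81}$ generated by those particular polynomials, so if your proof were adopted that later corollary would need the paper's derivation (or your own "alternative route internal to this paper") restored.
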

\begin{proof}
We first show that the nine equations hold in the $T_{12}$ tree topology and then argue that they must hold for the $T_{13}$ and $T_{14}$ tree topologies.  

\begin{itemize}
\item[(a)] This is the equation given in Proposition \ref{prop:non-binomialEq}(d).

\item[(b)-(c)] By Proposition~\ref{prop:4quartetEq}, $\tilde{p}_{2424}=\tilde{p}_{2323}=\tilde{p}_{2444}$ and $\tilde{p}_{2442}=\tilde{p}_{2332}=\tilde{p}_{2444}$ for the $T_{12}$ topology yielding (b) and (c).

\item[(d)] 
We multiply the equation in Proposition \ref{prop:non-binomialEq} (e) by $\pi_{34}$ and add it to the equation in Proposition \ref{prop:non-binomialEq} (c).  In the resulting equation, combining like terms, simplifying using that $\pi_{12}+\pi_{34}=1$, and dividing both sides by $\pi_1\pi_2\pi_{12}$ gives
\begin{equation}\label{eq:modeleqd}
\pi_{34}^2\tilde{p}_{2444}-\pi_{34}^2\tilde{p}_{2244}+\pi_1\pi_2\pi_3\pi_4\tilde{p}_{3344}=0.
\end{equation}
\noindent Solving the equation in Proposition \ref{prop:non-binomialEq} (d) for the $\tilde{p}_{2244}$ term, and plugging it into (\ref{eq:modeleqd}), collecting like terms and rearranging gives (d).

\item[(e)-(f)] By Proposition \ref{prop:4quartetEq}, $\tilde{p}_{3434}=0=\tilde{p}_{3443}$, $\tilde{p}_{2323}=\tilde{p}_{2444}$, and $\tilde{p}_{2332}=\tilde{p}_{2444}$ yielding (e) and (f).  

\item[(g)] First, we multiply the equation in Proposition \ref{prop:non-binomialEq} (d) by $\pi_{34}$ and subtract it from the equation in Proposition \ref{prop:non-binomialEq} (a).  In the resulting equation, combining like terms, using that $\pi_{12}+\pi_{34}=1$, and rearranging gives
\begin{equation}\label{eq:modelg1}
\pi_{12}^2\tilde{p}_{2233}=(\pi_{12}^3+\pi_{34}^3)\tilde{p}_{2222}-\pi_{34}(\pi_{34}^2-\pi_{12}^2)\tilde{p}_{2444}.
\end{equation}

\noindent Using that $\tilde{p}_{3434}=0=\tilde{p}_{3443}$ for $T_{12}$ by Proposition \ref{prop:4quartetEq}, followed by Part (d) of this Proposition, plugging in (\ref{eq:modelg1}), then simplifying using that $\pi_{12}+\pi_{34}=1$ gives 
\begin{eqnarray*}
\pi_1\pi_2\pi_3\pi_4(\tilde{p}_{3344}+\tilde{p}_{3434}+\tilde{p}_{3443}) &=& \pi_1\pi_2\pi_3\pi_4\tilde{p}_{3344} \\
&=& \pi_{12}^2(\tilde{p}_{2233}-\tilde{p}_{2444}) \\
&=& (\pi_{12}^3+\pi_{34}^3)\tilde{p}_{2222}-\pi_{34}(\pi_{34}^2-\pi_{12}^2)\tilde{p}_{2444}-\pi_{12}^2\tilde{p}_{2444} \\
&=& (\pi_{12}^3+\pi_{34}^3)(\tilde{p}_{2222}-\tilde{p}_{2444}),
\end{eqnarray*}

\noindent which is (g).








\item[(h)] Using Proposition \ref{prop:non-binomialEq} (b) (combining like terms and simplifying using that $\pi_{12}+\pi_{34}=1$) followed by Part (d) of this proposition, using that $\pi_{12}+\pi_{34}=1$, and then using Part (g) of this Proposition (with $\tilde{p}_{3434}=0=\tilde{p}_{3443}$ for $T_{12}$ by Proposition \ref{prop:4quartetEq}) yields
\begin{eqnarray*}
\pi_{12}^2(\pi_3^3+\pi_4^3)(\tilde{p}_{3333}-\tilde{p}_{2444}) &=& \pi_3\pi_4\pi_{12}^2\pi_{34}(\tilde{p}_{2233}-\tilde{p}_{2444})-\pi_1\pi_2\pi_3^2\pi_4^2\pi_{34}^2\tilde{p}_{3344} \\
&=& \pi_1\pi_2\pi_3^2\pi_4^2\pi_{34}\tilde{p}_{3344}-\pi_1\pi_2\pi_3^2\pi_4^2\pi_{34}^2\tilde{p}_{3344} \\
&=& \pi_1\pi_2\pi_3^2\pi_4^2\pi_{12}\pi_{34}\tilde{p}_{3344} \\
&=& \pi_3\pi_4\pi_{12}\pi_{34}(\pi_{12}^3+\pi_{34}^3)(\tilde{p}_{2222}-\tilde{p}_{2444}).
\end{eqnarray*}

\noindent Dividing both sides by $\pi_{12}$ gives (h).





\item[(i)] Using Proposition \ref{prop:non-binomialEq} (e) followed by Part (g) of this Proposition (with $\tilde{p}_{3434}=0=\tilde{p}_{3443}$ for $T_{12}$ by Proposition \ref{prop:4quartetEq}) yields
$$
\pi_{34}(\pi_1^3+\pi_2^3)(\tilde{p}_{4444}-\tilde{p}_{2444})=\pi_1^2\pi_2^2\pi_3\pi_4\pi_{12}\tilde{p}_{3344}=\pi_1\pi_2\pi_{12}(\pi_{12}^3+\pi_{34}^3)(\tilde{p}_{2222}-\tilde{p}_{2444})
$$
\noindent which is (i).

\end{itemize} 
We now show that all nine equations hold on the $T_{13}$ and $T_{14}$ tree topologies.  For the $T_{13}$ topology, $\tilde{p}_{ijkl}$ is replaced with $\tilde{p}_{ikjl}$ and for the $T_{14}$ topology, $\tilde{p}_{ijkl}$ is replaced with $\tilde{p}_{iljk}$.  By permuting the indices, we see that the equalities from Proposition~\ref{prop:4quartetEq} become 
\begin{equation}\label{eq:T13T14}
\tilde{p}_{3344}=0=\tilde{p}_{3443}, \ \tilde{p}_{2244}=\tilde{p}_{2442}=\tilde{p}_{2332}=\tilde{p}_{2233}=\tilde{p}_{2444}
\end{equation}

\noindent for both $T_{13}$ and $T_{14}$. By permuting indices, our proof of (a) for $T_{12}$ yields (b) for the $T_{13}$ and $T_{14}$ topologies.  Using (\ref{eq:T13T14}) yields (a) and (c) for the $T_{13}$ and $T_{14}$ topologies. Permuting indices in our proof of (d) for $T_{12}$ gives (e) for the $T_{13}$ and $T_{14}$ topologies, and using (\ref{eq:T13T14}) gives (d) and (f).  For (g), in the final calculation we use that $\tilde{p}_{3344}=0=\tilde{p}_{3443}$ for $T_{13}$ and $T_{14}$, Part (e) of this Proposition in place of Part (d), and then plug in (\ref{eq:modelg1}) with the appropriate permutations of indices to yield (g) for $T_{13}$ and $T_{14}$. Since (h) and (i) remain unchanged when changing tree topologies, our previous proofs hold for $T_{13}$ and $T_{14}$.  

\end{proof}

\begin{remark}\rm Again, note that the equations in Proposition \ref{prop:modelEq12|34} have been obtained by imposing only the rank conditions on flattenings and the symmetry equations defining $\F^{F81}_4$. For later use, we observe that throughout the proof of Proposition \ref{prop:non-binomialEq}, we have only used that the coordinates $\tilde{p}_{1111}$ and $\tilde{p}_{1414}$ are different from zero.
\end{remark}

%

\begin{thm}\label{thm:mixturesF81} The space of mixtures $\E_4^{F81}$ for quartets under the F81 model has dimension $15$ and is defined by the defining equations of $\F^{F81}_4\cap\Z_4$, along with the nine equations of Proposition \ref{prop:modelEq12|34}.
\end{thm}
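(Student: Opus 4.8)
The plan is to identify $\E_4^{F81}$ with the linear space $Y:=\F^{F81}_4\cap\Z_4\cap\mathcal{G}$ cut out by the $172$ model zeros of \Cref{prop:quartet-zeroes}, the $60$ symmetry equations of \Cref{prop:quartet-eq}, and the $9$ equations of \Cref{prop:modelEq12|34}, and then to match dimensions.

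First I would prove the inclusion $\E_4^{F81}\subseteq Y$. By construction every model zero and every symmetry equation is a model invariant, hence vanishes on each $CV_T^{F81}$ and therefore on the span $\E_4^{F81}=\langle\bigcup_T CV_T^{F81}\rangle$. The content of \Cref{prop:modelEq12|34} is exactly that the $9$ equations cutting out $\mathcal{G}$ are also model invariants (valid on all three quartet topologies), so they vanish on $\E_4^{F81}$ as well; this is the step where working with all trees, rather than a single $T$ as in the preceding proposition, is essential. Thus $\E_4^{F81}\subseteq Y$.

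Second, I would show $\dim Y=15$ by checking that the $172+60+9=241$ defining equations are linearly independent, whence $\dim Y=256-241=15$. The $172$ model zeros set pairwise distinct coordinates to zero, and all of these vanish at the no-evolution point; since every coordinate appearing in a symmetry equation or in \Cref{prop:modelEq12|34} is non-zero, the model zeros are independent of one another and of the remaining $69$ equations. It then suffices to show that the $9$ equations of \Cref{prop:modelEq12|34} are independent modulo the $60$ symmetry equations, which I would do by a triangular pivot argument. Reading the equations in the order (a)--(i), the coordinates
\[\tilde{p}_{2244},\ \tilde{p}_{2424},\ \tilde{p}_{2442},\ \tilde{p}_{3344},\ \tilde{p}_{3434},\ \tilde{p}_{3443},\ \tilde{p}_{2222},\ \tilde{p}_{3333},\ \tilde{p}_{4444}\]
serve as successive pivots: the symmetry class of each appears in no earlier equation, and the three ``diagonal'' coordinates $\tilde{p}_{2222},\tilde{p}_{3333},\tilde{p}_{4444}$ are identified with no other coordinate by \Cref{prop:quartet-eq}. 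Because $\pi$ is generic, the corresponding pivot coefficients $\pi_{34}^2$, $\pi_1\pi_2\pi_3\pi_4$, $\pi_{12}^3+\pi_{34}^3$, $\pi_{12}(\pi_3^3+\pi_4^3)$ and $\pi_{34}(\pi_1^3+\pi_2^3)$ are all non-zero, so the coefficient matrix in these columns is triangular with non-vanishing diagonal and the nine equations are independent modulo the symmetries.

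Finally, I would combine this with $\dim\E_4^{F81}=15$ from \cite{CasSteel}: the inclusion $\E_4^{F81}\subseteq Y$ together with $\dim Y=15=\dim\E_4^{F81}$ forces equality of these two linear spaces. The main obstacle is the bookkeeping in the second step, namely verifying that passing to the quotient by the symmetry equations does not collapse the nine pivot coordinates; the key fact making the triangular structure genuine is that $\tilde{p}_{2222}$, $\tilde{p}_{3333}$ and $\tilde{p}_{4444}$ lie outside every symmetry relation, as recorded in \Cref{prop:quartet-eq}.
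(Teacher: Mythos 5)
Your proposal is correct and follows essentially the same route as the paper: define the linear space cut out by the $172+60+9$ equations, show it contains every $CV_T^{F81}$ (hence $\E_4^{F81}$) because all of these are model invariants, verify the equations are linearly independent so the space has dimension $15$, and conclude by matching with $\dim\E_4^{F81}=15$ from \cite{CasSteel}. The only difference is that you spell out the linear-independence step with an explicit triangular pivot argument, which the paper simply asserts as straightforward; your pivot choice and the observation that $\tilde{p}_{2222}$, $\tilde{p}_{3333}$, $\tilde{p}_{4444}$ lie in singleton symmetry classes are correct.
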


\begin{proof}
Note that the $232$ linearly independent equations defining $\F^{F81}_4\cap\Z_4$ are also linearly independent from the $9$ equations in Proposition \ref{prop:modelEq12|34}. Thus, if $X$ is a linear space defined by all these equations, it has dimension $15=256-232-9$. Now, $X$ contains each $CV_T^{F81}$ due to Propositions \ref{prop:quartet-eq} and \ref{prop:modelEq12|34}.  Thus, since the space of mixtures is $\mathcal{E}_4^{F81}=\langle CV_{T_{12}}^{F81}\cup CV_{T_{13}}^{F81} \cup CV_{T_{14}}^{F81} \rangle$ and $X$ contains this linear span, we get $\E^{F81}_4\subseteq X$. The dimensions of these two linear spaces agree by \cite{CasSteel}, and this concludes the proof.
\end{proof}

\section{Local complete intersection}\label{sec:complete_intersection}

 From the point of view of applications, many of the generators of $I_T^\MM$ are non-informative. Therefore, in this section, we are interested in computing the \emph{relevant} phylogenetic invariants; in the sense that they cut out the variety locally at points of biological interest. More precisely, we want a complete intersection (CI) $X_T^\MM$ that coincides with the phylogenetic variety $CV_T^\MM$ in an open set containing the no-evolution point. 
Thus, we obtained a complete intersection for $n=3$ and a local complete intersection at the no-evolution point for $n=4$.
Moreover, we show that the equations of the submodels can be recovered from the equations of the bigger model after imposing only the symmetry equations.

\subsection{Complete intersections for tripods}
In the case of tripods, or any star tree, coordinates $\tilde{p}$ provide a monic monomial parametrization of the model, recall \Cref{tilde_rho}. 
Therefore, we have theoretical and computational tools from toric geometry to help us find the equations that cut out both $CV_T^\MM$. By \Cref{prop:eqtripod}, for F81, the phylogenetic variety for the tripod is a hypersurface in its linear span, hence it is already a complete intersection.
On the other hand, for F84, we have a 7-dimensional phylogenetic variety $CV_T^{F84}$ in the 12-dimensional linear space $\E_3^{F84}$ but $I_T^{F84}$ is minimally generated by $15$ polynomials.

\begin{prop}\label{prop:CItripod}
     Let $T$ be a tripod that evolves under the F84 model. The variety $X_T^{F84}\subset \E_3^{F84}$ defined by the following 3 quadrics and 2 cubics 
    \begin{align*}
&\tilde{p}_{222}\tilde{p}_{441}-\tilde{p}_{221}\tilde{p}_{442},\qquad
\tilde{p}_{222}\tilde{p}_{414}-\tilde{p}_{212}\tilde{p}_{424},\qquad
\tilde{p}_{222}\tilde{p}_{144}-\tilde{p}_{122}\tilde{p}_{244}, \\  
&\tilde{p}_{144}\tilde{p}_{414}\tilde{p}_{441}-\tilde{p}_{111}\tilde{p}_{444}^2,  \qquad 
\tilde{p}_{442}\tilde{p}_{424}\tilde{p}_{244}-\tilde{p}_{222}\tilde{p}_{444}^2.
    \end{align*}
is a complete intersection that cuts out $CV_T^{F84}$ in an open set containing the no-evolution point. 
Furthermore, $CV_T^{F84}$ is an irreducible component of $X_T^{F84}$
and
$CV_T^{F84}=\overline{X_T^{F84}\backslash \mathcal{V}(\tilde{p}_{222}\tilde{p}_{444})}$.
\end{prop}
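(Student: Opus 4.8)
The plan is to exploit that, for the star tree, the coordinates $\tilde p$ give a monic monomial parametrization (\Cref{tilde_rho}): writing $a_e,b_e,c_e$ for the eigenvalues $\lambda_1^e,\lambda_2^e,\lambda_4^e=\lambda_3^e$ on edge $e$, each nonzero coordinate is a product of one eigenvalue per edge, for instance $\tilde p_{111}=a_1a_2a_3$, $\tilde p_{222}=b_1b_2b_3$, $\tilde p_{144}=a_1c_2c_3$, $\tilde p_{244}=b_1c_2c_3$, $\tilde p_{444}=c_1c_2c_3$, and similarly for the rest. Thus $CV_T^{F84}$ is the affine toric variety of this monomial map and $I_T^{F84}$ is the toric ideal with the $15$ generators listed in the previous proposition. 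Writing $f_1,\dots,f_5$ for the five proposed polynomials in the order given, I would first check by direct substitution that each $f_i$ vanishes identically, since its two terms are the same monomial in the $a_e,b_e,c_e$ (e.g. $\tilde p_{144}\tilde p_{414}\tilde p_{441}=a_1a_2a_3c_1^2c_2^2c_3^2=\tilde p_{111}\tilde p_{444}^2$); hence $CV_T^{F84}\subseteq X_T^{F84}$. Note the $f_i$ are exactly three of the quadrics and two of the cubics among the $15$ generators.

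The heart of the argument is a localization identity on the open set $U=\{\tilde p_{222}\tilde p_{444}\neq 0\}$. I would show that in $\CC[\tilde p]_{\tilde p_{222}\tilde p_{444}}$ the five polynomials already generate $I_T^{F84}$, i.e. the remaining ten generators are combinations of $f_1,\dots,f_5$ over this localized ring. The three quadrics let one solve $\tilde p_{441}=\tilde p_{221}\tilde p_{442}/\tilde p_{222}$, $\tilde p_{414}=\tilde p_{212}\tilde p_{424}/\tilde p_{222}$, $\tilde p_{144}=\tilde p_{122}\tilde p_{244}/\tilde p_{222}$; substituting these together with $\tilde p_{244}\tilde p_{424}\tilde p_{442}=\tilde p_{222}\tilde p_{444}^2$ into $\tilde p_{144}\tilde p_{414}\tilde p_{441}=\tilde p_{111}\tilde p_{444}^2$ and cancelling $\tilde p_{444}^2$ yields the generator $\tilde p_{122}\tilde p_{212}\tilde p_{221}=\tilde p_{111}\tilde p_{222}^2$; the other generators follow by the same two moves. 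Hence $X_T^{F84}\cap U=CV_T^{F84}\cap U$. Since the no-evolution point satisfies $\tilde\rho_{222}=\tilde\rho_{444}=1$ it lies in $U$, so this is precisely the statement that $X_T^{F84}$ cuts out $CV_T^{F84}$ on an open set containing it. Taking closures and using that $CV_T^{F84}$ is irreducible with $CV_T^{F84}\cap U\neq\emptyset$ gives $CV_T^{F84}=\overline{CV_T^{F84}\cap U}=\overline{X_T^{F84}\setminus\mathcal{V}(\tilde p_{222}\tilde p_{444})}$; moreover, for any component $W$ of $X_T^{F84}$ containing $CV_T^{F84}$ one has $W\cap U=CV_T^{F84}\cap U$, which forces $W=CV_T^{F84}$, so $CV_T^{F84}$ is an irreducible component.

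It remains to prove that $X_T^{F84}$ is a complete intersection, equivalently that $f_1,\dots,f_5$ form a regular sequence, equivalently that $\dim X_T^{F84}=12-5=7$. By Krull every component has dimension at least $7$, so only the upper bound is needed, and I expect this to be the main obstacle, since the spurious components lying in $\mathcal{V}(\tilde p_{222}\tilde p_{444})$ must be controlled. The clean way is a Gröbner degeneration: I would exhibit a weight vector $w$ whose leading terms are $\tilde p_{221}\tilde p_{442}$, $\tilde p_{212}\tilde p_{424}$, $\tilde p_{122}\tilde p_{244}$, $\tilde p_{144}\tilde p_{414}\tilde p_{441}$, $\tilde p_{222}\tilde p_{444}^2$. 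These five monomials have pairwise disjoint supports, hence cut out a variety of pure codimension $5$, and since $\dim V(f_1,\dots,f_5)=\dim V(\mathrm{in}_w(f_1,\dots,f_5))\leq\dim V(\mathrm{in}_w f_1,\dots,\mathrm{in}_w f_5)=7$, the upper bound follows. One explicit choice is $w=(0,20,20,20,22,22,22,3,3,3,0,10)$ on the coordinates ordered $\tilde p_{111},\tilde p_{122},\tilde p_{212},\tilde p_{221},\tilde p_{144},\tilde p_{414},\tilde p_{441},\tilde p_{244},\tilde p_{424},\tilde p_{442},\tilde p_{222},\tilde p_{444}$; alternatively the regular-sequence statement can be confirmed directly with \texttt{Macaulay2}. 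Combining the three steps establishes all the assertions.
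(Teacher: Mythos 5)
Your argument is correct and follows essentially the same route as the paper, which simply declares the proof ``analogous to the case of tripods for the TN93 model in [CHT, Proposition 5.3]'': for the star tree the parametrization is monic monomial, so one verifies membership in the toric ideal, shows the five chosen generators suffice after localizing at $\tilde{p}_{222}\tilde{p}_{444}$, and controls the codimension. Your explicit localization identities and the Gr\"obner degeneration to five monomials with pairwise disjoint supports (with a weight vector that I checked does select the claimed initial terms) supply precisely the details the paper leaves to the cited reference, so nothing is missing.
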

\begin{proof}
   The proof is analogous to the case of tripods for the TN93 model in \cite[Proposition 5.3]{CHT}. 
\end{proof}


The following result proves that the equations of the submodels F81 and F84 can be obtained from the equations of TN93 by imposing the corresponding symmetry equations.

\begin{prop}\label{prop:tripodCI}
Let $T$ be a tripod, and $\MM$ be F84 or F81. Then $X_T^{TN93}\cap\F_3^\MM=X_T^\MM$ and $CV_T^{TN93}\cap\F_3^\MM=CV_T^\MM$. 
\end{prop}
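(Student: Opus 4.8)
The plan is to prove the two displayed equalities in turn, doing the complete-intersection statement first and deducing the variety statement from it, and in each case handling F84 first and then obtaining F81 by a further restriction. For the complete intersections, I would start from the explicit generators of $X_T^{TN93}$ given in \cite[Proposition 5.3]{CHT}, which cut out a codimension-$9$ complete intersection inside the $19$-dimensional space $\Z_3$. The linear space $\F_3^\MM$ is defined by the symmetry equations of \Cref{prop:eqtripodF84} (for F84) and \Cref{prop:eqtripod} (for F81), which identify blocks of coordinates. I would substitute these identifications into the nine TN93 generators. Because all three models share the monomial parametrization $\tilde{p}_{ijk}=\lambda_i^{e_1}\lambda_j^{e_2}\lambda_k^{e_3}$ of \Cref{tilde_rho}, with $\lambda_3=\lambda_4$ for F84, each restricted generator either becomes trivial or reduces to one of the five generators of $X_T^{F84}$ in \Cref{prop:CItripod}; a finite bookkeeping check (by hand or in \texttt{Macaulay2}) then shows the restricted ideal and the F84 ideal agree, giving $X_T^{TN93}\cap\F_3^{F84}=X_T^{F84}$. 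For F81 I would use that $\F_3^{F81}\subseteq\F_3^{F84}$ as linear subspaces, so that $X_T^{TN93}\cap\F_3^{F81}=X_T^{F84}\cap\F_3^{F81}$; substituting the stronger F81 identifications collapses the three quadrics and the second cubic of \Cref{prop:CItripod} to $0$ and sends the cubic $\tilde{p}_{144}\tilde{p}_{414}\tilde{p}_{441}-\tilde{p}_{111}\tilde{p}_{444}^2$ to $\tilde{p}_{122}\tilde{p}_{212}\tilde{p}_{221}-\tilde{p}_{111}\tilde{p}_{222}^2$, which is exactly \Cref{prop:eqtripod}(c), i.e.\ $X_T^{F81}$.

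For the varieties, the inclusion $CV_T^\MM\subseteq CV_T^{TN93}\cap\F_3^\MM$ is immediate for both models: every $\MM$-parameter is a TN93-parameter, so $CV_T^\MM\subseteq CV_T^{TN93}$, and the symmetry equations are model invariants, so $CV_T^\MM\subseteq\F_3^\MM$. For the reverse inclusion I would use the chain
\[
CV_T^\MM\ \subseteq\ CV_T^{TN93}\cap\F_3^\MM\ \subseteq\ X_T^{TN93}\cap\F_3^\MM\ =\ X_T^\MM,
\]
where the middle inclusion holds because $CV_T^{TN93}$ is a component of $X_T^{TN93}$, and the last equality is the complete-intersection statement above. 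For F81 this already closes the argument: since $X_T^{F81}=CV_T^{F81}$ (the F81 tripod variety is itself the defining hypersurface, \Cref{prop:eqtripod}), the chain squeezes all four sets to $CV_T^{F81}$.

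The main obstacle is the remaining F84 reverse inclusion $CV_T^{TN93}\cap\F_3^{F84}\subseteq CV_T^{F84}$, because $X_T^{F84}$ strictly contains $CV_T^{F84}$: by \Cref{prop:CItripod}, $CV_T^{F84}=\overline{X_T^{F84}\setminus\mathcal{V}(\tilde{p}_{222}\tilde{p}_{444})}$, so I must rule out $CV_T^{TN93}\cap\F_3^{F84}$ picking up the spurious components of $X_T^{F84}$ inside $\mathcal{V}(\tilde{p}_{222}\tilde{p}_{444})$. Here I would read the symmetry equations as polynomial conditions on the eigenvalues along $CV_T^{TN93}$: away from $\mathcal{V}(\tilde{p}_{222}\tilde{p}_{444})$ they force $\lambda_3^e=\pm\lambda_4^e$ for every $e$ with a globally consistent sign, the $+$ branch giving precisely the F84 parametrization (hence $CV_T^{F84}$), while any $-$ branch forces $\tilde{p}_{444}=\prod_e\lambda_4^e=0$. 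A direct check then shows that these sign-flip points, which necessarily have a vanishing eigenvalue, are still realized by genuine F84 parameters (set the corresponding $\mu_4^e=\mu_3^e=0$), so they also lie in $CV_T^{F84}$. Consequently every component of $CV_T^{TN93}\cap\F_3^{F84}$ is contained in $CV_T^{F84}$, which together with the chain above yields $CV_T^{TN93}\cap\F_3^{F84}=CV_T^{F84}$ and completes the proof.
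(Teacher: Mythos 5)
Your handling of the first equality $X_T^{TN93}\cap\F_3^\MM=X_T^\MM$ is essentially the paper's own argument: rewrite the nine generators of $X_T^{TN93}$ from \cite[Proposition 5.3]{CHT} in the coordinates $\tilde p$, impose the symmetry identifications, and check that for F84 three generators vanish and two coincide (leaving the five generators of \Cref{prop:CItripod}), while for F81 only $\tilde p_{122}\tilde p_{212}\tilde p_{221}-\tilde p_{111}\tilde p_{222}^2$ survives. Where you genuinely diverge is the second equality. The paper proves $CV_T^{TN93}\cap\F_3^\MM=CV_T^\MM$ by the same substitution applied to the \emph{full} generating set of $I_T^{TN93}$ from the supplementary material of \cite{CHT}, checking that it reduces to the explicit generators of $I_T^{F84}$ (three quadrics, twelve cubics) and of $I_T^{F81}$ (one cubic); no geometric input is needed. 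Your squeeze $CV_T^\MM\subseteq CV_T^{TN93}\cap\F_3^\MM\subseteq X_T^{TN93}\cap\F_3^\MM=X_T^\MM$ is an attractive alternative, and it does close immediately for F81 since $X_T^{F81}=CV_T^{F81}$.

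For F84, however, there is a genuine gap. You must rule out a component of $CV_T^{TN93}\cap\F_3^{F84}$ sitting inside $X_T^{F84}\cap\mathcal V(\tilde p_{222}\tilde p_{444})$ but not inside $CV_T^{F84}$, and you do so by interpreting the symmetry equations as conditions on the eigenvalues $\lambda_i^e$. That interpretation is only valid on the image of $\varphi_T^{TN93}$, whereas $CV_T^{TN93}$ is the Zariski closure of that image; the two can differ precisely along coordinate-vanishing loci such as $\mathcal V(\tilde p_{222}\tilde p_{444})$, which is exactly where the spurious components of $X_T^{F84}$ live. Even on the image, the assertion that every boundary point satisfying the F84 symmetries is realized by F84 parameters is stated rather than proved: if $\lambda_4^{e}=0\neq\lambda_3^{e}$ for one edge, the symmetry equations propagate further vanishing conditions and a nontrivial case analysis over the boundary strata is required before one can conclude the point lies in $CV_T^{F84}$. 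As written, you have only established that the locus of parametrized points with $\tilde p_{222}\tilde p_{444}\neq 0$ lands in $CV_T^{F84}$, which does not yet control every irreducible component of the closed set $CV_T^{TN93}\cap\F_3^{F84}$. The cleanest repair is the paper's route: verify on generators that $I_T^{TN93}$ together with the linear forms of $\F_3^{F84}$ cuts out the same set as $I_T^{F84}$, which bypasses the parametrization entirely.
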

\begin{proof}
We can rewrite the $9$ equations of the complete intersection $X_T^{TN93}$ displayed in \cite[Proposition 5.3]{CHT} in coordinates $\tilde{p}_{i_1i_2i_3}$ as follows:

\begin{align*}
&g_1=\tilde{p}_{222}\tilde{p}_{441}-\tilde{p}_{221}\tilde{p}_{442},\qquad
g_2=\tilde{p}_{222}\tilde{p}_{414}-\tilde{p}_{212}\tilde{p}_{424},\qquad
g_3=\tilde{p}_{222}\tilde{p}_{144}-\tilde{p}_{122}\tilde{p}_{244}, \\
&g_4=\tilde{p}_{332}\tilde{p}_{441}-\tilde{p}_{331}\tilde{p}_{442},\qquad
g_5=\tilde{p}_{323}\tilde{p}_{414}-\tilde{p}_{313}\tilde{p}_{424},\qquad
g_6=\tilde{p}_{233}\tilde{p}_{144}-\tilde{p}_{133}\tilde{p}_{244},\\
&g_7=\tilde{p}_{144}\tilde{p}_{414}\tilde{p}_{441}-\tilde{p}_{111}\tilde{p}_{444}^2,  \qquad 
g_8=\tilde{p}_{133}\tilde{p}_{313}\tilde{p}_{331}-\tilde{p}_{111}\tilde{p}_{333}^2,\qquad
g_9=\tilde{p}_{332}\tilde{p}_{323}\tilde{p}_{233}-\tilde{p}_{222}\tilde{p}_{333}^2.
    \end{align*}

It is straightforward to check when applying the symmetry equations defining $\F^{F84}_3$ to the previous list, polynomials $g_4,g_5,g_6$ vanish, and $g_7=g_8$. The surviving $5$ equations cut out the complete intersection $X_T^{F84}$ in \Cref{prop:CItripod}. If we additionally add the symmetry equations cutting out $\F_3^{81}$, only $g_7$ remains. A similar check can be performed for all equations defining $CV_T^{TN93}$ in \cite{CHT}[SM1].
\end{proof}

\subsection{Local complete intersections for quartets}

Now we turn our attention to the phylogenetic variety for quartets. Note that we no longer have a monomial parametrization, so we need to find alternative strategies to tackle the problem. As in the case of TN93 in \cite{CHT}, we are no longer able to explicitly compute all equations defining $CV_T^{F84}$. 

Analogously to \cite{CHT} for TN93, we will use rank conditions to obtain the equations defining a complete intersection locally around the no-evolution point. The rest of the section is devoted to imposing the symmetry equations in $\F_4^\MM$ to the flattening matrices (as in \Cref{sec:rank}) to obtain the relevant equations cutting the phylogenetic variety. We assume that $T=T_{12}$ for the rest of the section and consider $A=\Flat_{12\mid 34}(\bar{p})$.

\begin{thm}\label{thm:CIF84}
  Consider the tree $T=T_{12}$ evolving under the F84 model. Let $X_T^{F84}\subset\E_T^{F84}$ be the variety cut out by the following $38$ equations ($26$ quadrics, $10$ cubics, and $2$ quartics):
  \begin{itemize}
      \item [(a)] $10$ equations ($6$ quadrics and $4$ cubics) arising from tripods: 
      \begin{itemize}
          \item[(a.1)]$\tilde{p}_{1222}\tilde{p}_{1441}-\tilde{p}_{1221}\tilde{p}_{1442}$, $\tilde{p}_{1222}\tilde{p}_{1414}-\tilde{p}_{1212}\tilde{p}_{1424}$, $\tilde{p}_{1222}\tilde{p}_{1144}-\tilde{p}_{1122}\tilde{p}_{1244}$,\\ $\tilde{p}_{1144}\tilde{p}_{1414}\tilde{p}_{1441}-\tilde{p}_{1111}\tilde{p}_{1444}^2$, $\tilde{p}_{1442}\tilde{p}_{1424}\tilde{p}_{1244}-\tilde{p}_{1222}\tilde{p}_{1444}^2$,
         \item[(a.2)] $\tilde{p}_{2212}\tilde{p}_{4411}-\tilde{p}_{2211}\tilde{p}_{4412}$, $\tilde{p}_{2212}\tilde{p}_{4114}-\tilde{p}_{2112}\tilde{p}_{4214}$, $\tilde{p}_{2212}\tilde{p}_{1414}-\tilde{p}_{1212}\tilde{p}_{2414}$,\\ $\tilde{p}_{1414}\tilde{p}_{4114}\tilde{p}_{4411}-\tilde{p}_{1111}\tilde{p}_{4414}^2$, $\tilde{p}_{4412}\tilde{p}_{4214}\tilde{p}_{2414}-\tilde{p}_{2212}\tilde{p}_{4414}^2$;
      \end{itemize}
       \item [(b)] $15$ quadrics from rank 1 conditions:
      \begin{itemize}
          \item $\left|A_{14,ab}^{14,ij}\right|$ with $ab\in\{41,24,42,44\}$, $ij\in\{41,24,42\}$,
          \item $\left|A_{12,cd}^{12,21}\right|$ with $cd\in\{44,22,21\}$;
      \end{itemize}
         \item [(c)] $4$ equations (1 quadric and 3 cubics) from rank 2 conditions: 
      \begin{itemize}
          \item $1$ quadric: $(1/\tilde{p}_{1111})\left|A_{11,12,21}^{11,12,22}\right|$,
          \item $3$ cubics: $\left|A_{11,12,ab}^{11,12,22}\right|$ with $ab\in\{22,33,44\}$; 
      \end{itemize}
            \item [(d)] $9$ equations ($4$ quadrics, $3$ cubics, and $2$ quartics) from rank $3$ conditions: 
       \begin{itemize}
          \item $4$ quadrics: $(1/\tilde{p}_{1111}\tilde{p}_{1414})\left\vert A_{11,12,13,21}^{11,12,13,33}\right\vert$, $(1/\tilde{p}_{1111}\tilde{p}_{1212})\left\vert A_{11,12,13,ab}^{11,12,13,33}\right\vert$ with $ab\in\{31,23,32\}$,
          \item $3$ cubics: $(1/\tilde{p}_{1414})\left\vert A_{11,12,13,ab}^{11,12,13,33}\right\vert$ with $ab\in\{22,33\}$, $(1/\tilde{p}_{1414})\left|A_{11,12,14,22}^{11,12,14,44}\right|$,
        \item $2$ quartics: $\left|A_{11,12,1i,ii}^{11,12,1i,ii}\right|$ with $i\in\{3,4\}$.
      \end{itemize}

  \end{itemize}
 Then, $CV_T^{F84}$ is an irreducible component of $X_T^{F84}$ and $CV_T^{F84}=X_T^{F84}$ around the no-evolution point. 
\end{thm}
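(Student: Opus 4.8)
The plan is to mirror the argument used for TN93 in \cite{CHT}: first check that the $38$ listed polynomials vanish on $CV_T^{F84}$, then observe that they cut out a complete intersection of the expected dimension inside $\E_T^{F84}$, and finally use a tangent space computation at the no-evolution point $\rho$ to identify $CV_T^{F84}$ as the unique component of $X_T^{F84}$ through $\rho$.

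First I would establish the inclusion $CV_T^{F84}\subseteq X_T^{F84}$. The equations in (a) are the tripod relations of \Cref{prop:CItripod} transported to the quartet by fixing one leaf state (the families (a.1) and (a.2) are the F84 tripod equations for the $3$-leaf subconfigurations with $l_1=1$, resp. $l_3=1$), so they vanish on $CV_T^{F84}$. The equations in (b), (c), (d) are minors of $A=\Flat_{12\mid34}(\bar p)$ coming from the rank $1$, $2$ and $3$ blocks identified in \cite[Lemma 5.13, Appendix B]{CHT}, after factoring out the nonzero coordinates permitted by \Cref{rmk:primeideal}; by \Cref{thm:ARflat} and the refined block structure they vanish on the image of $\varphi_{T_{12}}^{F84}$, hence on $CV_T^{F84}$. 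For the dimension count, \Cref{cor:mixt_tree} gives $\dim\E_T^{F84}=49$ and \Cref{prop:dim} gives $\dim CV_T^{F84}=2|E|+1=11$. Since $X_T^{F84}$ is defined by $38$ equations inside $\E_T^{F84}\cong\CC^{49}$, Krull's principal ideal theorem forces every irreducible component of $X_T^{F84}$ to have dimension at least $49-38=11$.

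The core step is a tangent space computation at $\rho$. Working in the $\tilde p$ coordinates, where $\tilde\rho$ equals $1$ on the indices appearing in \eqref{eq:rho_quartet} and $0$ otherwise, I would assemble the $38\times 49$ Jacobian of the defining polynomials evaluated at $\rho$ and verify that it has maximal rank $38$. This reduces to an explicit linear independence check: the differentials of the binomials in (a) and (b) are short combinations of the coordinate differentials (a single $d\tilde p$ when one factor vanishes at $\rho$, a difference of two when both survive), and the differentials of the higher degree equations in (c) and (d) are readily computed from $\tilde\rho$; the check can be organized along the block structure of $A$ or confirmed with \texttt{Macaulay2}. A rank of $38$ yields $\dim T_\rho X_T^{F84}=49-38=11$, so $\dim_\rho X_T^{F84}\le 11$; combined with the Krull lower bound this shows $\rho$ is a smooth point lying on a unique component $W$ of $X_T^{F84}$ with $\dim W=11$, so $X_T^{F84}$ is a complete intersection at $\rho$.

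Finally I would conclude. Since $\rho\in CV_T^{F84}\subseteq X_T^{F84}$ and $CV_T^{F84}$ is irreducible of dimension $11$ (\Cref{prop:dim}, \Cref{rmk:primeideal}), a neighbourhood of $\rho$ in $CV_T^{F84}$ lies in the unique local component $W$; passing to closures gives $CV_T^{F84}\subseteq W$, and equality of dimensions forces $CV_T^{F84}=W$. Hence $CV_T^{F84}$ is an irreducible component of $X_T^{F84}$, and because $\rho$ is a smooth point, $X_T^{F84}$ coincides with $CV_T^{F84}$ on an open neighbourhood of $\rho$. The main obstacle is precisely the rank-$38$ Jacobian verification: one must arrange the $49$ coordinates and the four families of minors so that the $38$ differentials are manifestly independent, keeping track of the several coordinates appearing in the cubics and quartics that vanish at $\rho$ and therefore drop out of the linearization.
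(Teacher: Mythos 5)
Your proposal is correct and follows essentially the same route as the paper: vanishing of the tripod equations via marginalization, vanishing of the minors via the rank blocks of $\Flat_{12\mid 34}(\bar p)$, the codimension count $49-38=11$, and a rank-$38$ Jacobian computation at $\rho$ (the paper does this in \texttt{Sage}) to conclude that $CV_T^{F84}$ is the component of $X_T^{F84}$ through the no-evolution point. Your explicit invocation of Krull's principal ideal theorem just makes precise the lower bound on component dimension that the paper uses implicitly.
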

\begin{proof}
We first prove that $X_T^{F84}$ contains $CV_T^{F84}$ by arguing that the polynomials in the statement vanish on $CV_T^{F84}$. 
Equations in (a) hold for any quartet because marginalizing over leaf $l_1$ in (a.1) or leaf $l_3$ in (a.2) produces the tripod equations in \Cref{prop:CItripod}; see \cite[Lemma 4.2, Lemma 5.9]{CHT} for further details.

Consider the vanishing minors of the flattening matrix $A=\mathrm{Flat}_{12\mid 34}(\bar{p})$ of sizes $2, 3$ and $4$, respectively, that contain a diagonal submatrix with non-vanishing determinant if we assume that $\tilde{p}_{1111}\tilde{p}_{1212}\tilde{p}_{1414}\neq 0$. They yield equations (b), (c), and (d), up to factoring out the previous variables. \Cref{tab:flatF84rk1,tab:flatF84rk2a,tab:flatF84rk2b,tab:flatF84rk3} display the rank blocks with the distinguished diagonal submatrices.

Since $CV_T^{F84}$ is a $11$-dimensional space (Proposition \ref{prop:dim}) that lives inside the $49$-dimensional linear space $\E_T^{F84}$, its codimension in $\E_T^{F84}$ is $38$. 
Using {\fontfamily{qcr}\selectfont{Sage}} \cite{sage}, we see that the rank of the Jacobian of the $38$ above equations at the no-evolution point is $38$. Hence, locally, they define the same variety. A copy of the code used can be found in Appendix \ref{app:code84}. 

Now, since $CV_T^{84}$ is irreducible and the irreducible component of $X_T^{84}$ containing the no-evolution point $\rho$ has dimension $11$, by the above computations, we see that $CV_T^{84}$ is an irreducible component of $X_T^{84}$.
\end{proof}

\begin{definition}
$I_A^\MM$ denotes the ideal of $2, 3$ and $4$-minors of blocks of $A$ of ranks $1, 2$ and $3$, respectively, highlighted in the flattening matrices displayed in \Cref{app:flat} and \Cref{app:flatF84}. 
\end{definition}

\begin{cor}\label{cor:satF84}
Let $J_A^{F84}$ be the ideal generated by polynomials (b), (c) and (d) in \Cref{thm:CIF84} and polynomials in \Cref{prop:4quartetEq84}. Then
$$[J_A^{F84}:\left(\tilde{p}_{1111}\tilde{p}_{1212}\tilde{p}_{1414}\right)^\infty]=[I_A^{F84}:\left(\tilde{p}_{1111}\tilde{p}_{1212}\tilde{p}_{1414}\right)^\infty]\subsetneq I_T^{F84}. $$
\end{cor}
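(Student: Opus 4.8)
The plan is to split the displayed chain into the equality of the two saturations and the strict inclusion into $I_T^{F84}$, and to treat these separately. Throughout write $f=\tilde{p}_{1111}\tilde{p}_{1212}\tilde{p}_{1414}$ for the multiplier we saturate by. First I would dispose of the easy inclusion $[J_A^{F84}:f^\infty]\subseteq[I_A^{F84}:f^\infty]$. Every generator of $J_A^{F84}$ is, by construction, either one of the ``cleaned'' bordered minors (b), (c), (d) of \Cref{thm:CIF84}---an honest minor of $A$ divided by a monomial in $\tilde{p}_{1111},\tilde{p}_{1212},\tilde{p}_{1414}$---or one of the equations of \Cref{prop:4quartetEq84}, which were produced in that proof as a polynomial combination of $2$-minors of $A$ after factoring out $\tilde{p}_{1414}$. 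In either case $f^{N}g\in I_A^{F84}$ for some $N$, so $g\in[I_A^{F84}:f^\infty]$; since saturating an ideal twice by $f$ is the same as saturating once, $[J_A^{F84}:f^\infty]\subseteq[I_A^{F84}:f^\infty]$.

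For the reverse inclusion $[I_A^{F84}:f^\infty]\subseteq[J_A^{F84}:f^\infty]$ I would argue block by block via the standard localization behaviour of determinantal ideals. For each rank-$r$ block of $A$ (with $r\in\{1,2,3\}$), the polynomials (b), (c), (d) are exactly the $(r+1)$-minors bordering a distinguished $r\times r$ submatrix whose determinant $\delta$ is, on the open set $f\neq0$, a unit times a monomial in $\tilde{p}_{1111},\tilde{p}_{1212},\tilde{p}_{1414}$ (this is precisely why these variables appear in the denominators of \Cref{thm:CIF84}). The key algebraic fact is that after inverting $\delta$ the ideal of all $(r+1)$-minors of the block is generated by the bordered ones: invertible row and column operations over the localization reduce the block to the direct sum of the $\delta$-submatrix and its Schur complement, the bordered minors are the entries of the Schur complement up to the unit $\delta$, and every $(r+1)$-minor is $\delta$ times such an entry. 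Applying this to each block, and using the identities of \Cref{prop:4quartetEq84} (contained in $J_A^{F84}$) to reconcile the coordinates that the F84 symmetries merge, shows each generator of $I_A^{F84}$ lies in $[J_A^{F84}:f^\infty]$. Thus the two saturations coincide.

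It then remains to prove $[I_A^{F84}:f^\infty]\subsetneq I_T^{F84}$. The inclusion is immediate: the minors vanish on $CV_T^{F84}$ by the rank bound of \Cref{thm:ARflat} together with the finer block-rank statements of \cite{CHT}, and the equations of \Cref{prop:4quartetEq84} vanish on $CV_T^{F84}$ as shown there, so $I_A^{F84}\subseteq I_T^{F84}$; since $I_T^{F84}$ is prime (\Cref{rmk:primeideal}) and $f\notin I_T^{F84}$ (none of $\tilde{p}_{1111},\tilde{p}_{1212},\tilde{p}_{1414}$ vanishes at a generic point of $CV_T^{F84}$), saturating by $f$ stays inside $I_T^{F84}$. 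For strictness I would exhibit a concrete element of $I_T^{F84}$ outside the saturation. A clean witness is a model zero such as $\tilde{p}_{1112}$ (\Cref{prop:quartet-zeroes}): the model-zero entries are identically $0$ in $A$, so both $I_A^{F84}$ and $f$ involve only non-model-zero coordinates, whence every element of $[I_A^{F84}:f^\infty]$ lies in the subring generated by those coordinates and cannot equal the variable $\tilde{p}_{1112}\in I_T^{F84}$. Equivalently, the tripod equations of \Cref{thm:CIF84}(a), which are needed in addition to (b), (c), (d) to cut out $CV_T^{F84}$ locally, are not recovered by the saturation.

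The genuinely substantive step is the reverse inclusion of the saturations, and I expect it to be the main obstacle: one must verify that the chosen generators (b), (c), (d) really are the minors bordering distinguished submatrices whose determinants divide a power of $f$, and that, block by block, these bordered minors together with the linear identities of \Cref{prop:4quartetEq84} recover every minor of $A$ after localization. This is the Schur-complement bookkeeping across the block decomposition; the forward inclusion and the strictness are comparatively routine and mirror the TN93 argument of \cite{CHT}.
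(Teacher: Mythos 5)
Your proposal follows essentially the same route as the paper: the paper's proof of this corollary is a two-line citation to the proof of Theorem~\ref{thm:CIF84} (where the generators (b)--(d) are obtained from the block minors by factoring out $\tilde{p}_{1111},\tilde{p}_{1212},\tilde{p}_{1414}$, and conversely the bordered minors recover all minors once the distinguished diagonal minors are invertible) together with Remark~\ref{rem:nonzero_F84} for the equations of Proposition~\ref{prop:4quartetEq84}, and to Theorem~\ref{thm:ARflat} for the final inclusion; your Schur-complement localization argument and your primeness/witness argument for $\subsetneq$ are exactly the details being elided. The one imprecision worth noting is your claim that \emph{every} equation of Proposition~\ref{prop:4quartetEq84} is a combination of $2$-minors of $A$ divided by $\tilde{p}_{1414}$: that is how part (b) is proved, but part (a) ($\tilde{p}_{3434}=\tilde{p}_{3443}=0$) is derived there as a TN93 topology invariant, and since the highlighted blocks defining $I_A^{F84}$ in Appendix~\ref{app:flatF84} omit the rows and columns containing these coordinates, their membership in $[I_A^{F84}:f^\infty]$ is not automatic --- an ambiguity your write-up inherits from the paper (cf.\ Remark~\ref{rem:nonzero_F84}) rather than introduces.
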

\begin{proof}
Equality of $I_A^{F84}$ and $J_A^{F84}$ up to saturation by $\left(\tilde{p}_{1111}\tilde{p}_{1212}\tilde{p}_{1414}\right)$ follows from the proof of \Cref{thm:CIF84} and \Cref{rem:nonzero_F84}. The last inclusion follows directly from \Cref{thm:ARflat}.
\end{proof}

\begin{thm}\label{thm:CI}
  The phylogenetic variety $CV_T^{F81} \subset \mathcal{E}_T^{F81}$ is cut out by $10$ quadrics and $23$ cubics. Locally around the no-evolution point, it is defined by the following seven equations of degrees $2$ and $3$, for $T=T_{12}$: 
  \begin{itemize}
      \item [(a)] $\tilde{p}_{1144}\tilde{p}_{1414}\tilde{p}_{1441} - \tilde{p}_{1111}\tilde{p}_{1444}^2$; 
      \item [(b)] $\tilde{p}_{1441}\tilde{p}_{4141}\tilde{p}_{4411} - \tilde{p}_{1111}\tilde{p}_{4441}^2$; 
      \item [(c)]
    $(\pi_3^3+\pi_4^3)\tilde{p}_{1111}\tilde{p}_{3333}-(\pi_3^3+\pi_4^3-\pi_3\pi_4\pi_{34}^2)\tilde{p}_{1111}\tilde{p}_{2444}-\pi_3\pi_4\pi_{34}^2\tilde{p}_{1144}\tilde{p}_{4411}=0$;
      \item [(d)]
      $\tilde{p}_{1414}\tilde{p}_{4141} - \tilde{p}_{4114}\tilde{p}_{1441}$;
      \item [(e)]
      $\tilde{p}_{1414}\tilde{p}_{4144} - \tilde{p}_{4114}\tilde{p}_{1444}$;
      \item [(f)]
      $\tilde{p}_{1414}\tilde{p}_{4441} - \tilde{p}_{4414}\tilde{p}_{1441}$;
      \item [(g)]
      $\tilde{p}_{1414}\tilde{p}_{2444} - \tilde{p}_{4414}\tilde{p}_{1444}$.
  \end{itemize}
  Moreover, $CV_T^{F81}$ is an irreducible component of the algebraic variety $X_T^{F81}$ that is cut out by these equations. More precisely, $CV_T^{F81}=\overline{X_T^{F81}\backslash \mathcal{V}(\tilde{p}_{1111}\tilde{p}_{1414}\tilde{p}_{4141})}$.
\end{thm}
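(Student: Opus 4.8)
The plan is to mirror the argument for the F84 model in \Cref{thm:CIF84}: first establish that the seven equations vanish on $CV_T^{F81}$, then show that they cut out a variety whose component through the no-evolution point $\rho$ has the expected dimension, and finally deduce the irreducible-component and closure statements. The global generation by $10$ quadrics and $23$ cubics asserted in the first sentence is a direct computation (e.g.\ in \texttt{Macaulay2}) in the coordinate ring of $\mathcal{E}_T^{F81}$, so the substance of the argument is the local complete intersection.

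\textbf{Containment $CV_T^{F81}\subseteq X_T^{F81}$.} First I would check that each of (a)--(g) is a phylogenetic invariant. Equations (a) and (b) are the tripod cubic of \Cref{prop:eqtripod}(c) transported to the quartet by marginalization: since the first row of $A^{-1}$ is $(1,1,1,1)$, fixing the index $1$ at position $1$ (resp.\ position $4$) of $\bar p$ corresponds to summing over leaf $l_1$ (resp.\ $l_4$), which turns the quartet parametrization into that of the tripod on the remaining three leaves; thus $\tilde p_{144}\tilde p_{414}\tilde p_{441}-\tilde p_{111}\tilde p_{444}^2$ becomes (a) and (b) respectively (the same device used for (a) in \Cref{thm:CIF84}, cf.\ \cite[Lemma 4.2, Lemma 5.9]{CHT}). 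Equation (c) is precisely the quadratic invariant proved in \Cref{prop:non-binomialEq}. Equations (d)--(g) are $2\times 2$ minors of rank-$1$ blocks of the flattening $A=\Flat_{12\mid 34}(\bar p)$: expanding the relevant vanishing minors from Table~\ref{tab:flatrank2} in coordinates $\tilde p$ and dividing out the nonzero no-evolution coordinates $\bar\rho_{i_1i_2i_3i_4}$ from \eqref{eq:rho_quartet} yields exactly these binomials, in the same spirit as \Cref{prop:4quartetEq}.

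\textbf{Dimension and local equality.} By \Cref{prop:dim} a quartet has $|E|=5$ edges, so $\dim CV_T^{F81}=|E|+1=6$; since $\dim\mathcal{E}_T^{F81}=13$, the codimension of $CV_T^{F81}$ in its linear span is $7$, matching the number of equations. To conclude that (a)--(g) form a local complete intersection I would compute the Jacobian of these seven polynomials at $\rho$ and verify (with \texttt{Sage}, as in \Cref{thm:CIF84}, the code placed in an appendix) that its rank equals $7$. This shows $X_T^{F81}$ is smooth of dimension $6$ at $\rho$; combined with the containment above and $\dim CV_T^{F81}=6$, the two varieties agree on a Zariski-open neighbourhood of the no-evolution point.

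\textbf{Irreducible component and closure.} Since $CV_T^{F81}$ is irreducible of dimension $6$, is contained in $X_T^{F81}$, and the component of $X_T^{F81}$ through $\rho$ also has dimension $6$, it follows that $CV_T^{F81}$ is an irreducible component of $X_T^{F81}$. For the sharper identity $CV_T^{F81}=\overline{X_T^{F81}\setminus\mathcal{V}(\tilde p_{1111}\tilde p_{1414}\tilde p_{4141})}$, I would argue that on the open set where $\tilde p_{1111}\tilde p_{1414}\tilde p_{4141}\neq 0$ every step of the derivation of (a)--(g) is reversible (by \Cref{rem:nonzero_F81} only $\tilde p_{1111}$ and $\tilde p_{1414}$, together here with $\tilde p_{4141}$, are inverted), so no component of $X_T^{F81}$ other than $CV_T^{F81}$ meets this locus; removing $\mathcal{V}(\tilde p_{1111}\tilde p_{1414}\tilde p_{4141})$ and taking Zariski closure then recovers the full irreducible variety. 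Concretely this is verified by a saturation computation showing $[\langle(a)\text{--}(g)\rangle:(\tilde p_{1111}\tilde p_{1414}\tilde p_{4141})^\infty]$ equals the prime ideal $I_T^{F81}$ inside the coordinate ring of $\mathcal{E}_T^{F81}$, which simultaneously yields the claimed generation by $10$ quadrics and $23$ cubics. The main obstacle I anticipate is exactly this last saturation/primality step---ruling out spurious components on the distinguished open set and confirming that the seven local equations saturate to the full prime ideal---rather than the Jacobian rank check, which is routine once the equations are written down.
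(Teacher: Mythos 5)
Your proposal is correct and follows essentially the same route as the paper's proof: (a) and (b) by marginalization of the tripod cubic, (c) from Proposition \ref{prop:non-binomialEq}, (d)--(g) from $2\times 2$ minors of rank-$1$ blocks of the flattening, a Jacobian rank-$7$ check at $\rho$ (done in \texttt{Sage}), and a \texttt{Macaulay2} saturation computation $[\mathcal{I}(X_T^{F81}):(\tilde{p}_{1111}\tilde{p}_{1414}\tilde{p}_{4141})^\infty]=I_T^{F81}$ for the closure statement and the $10$-quadrics-$23$-cubics count. No gaps.
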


\begin{proof}
The proof is analogous to \Cref{thm:CIF84}. Indeed, (a) and (b) hold in $CV_T^{F84}$ due to the marginalization procedure, (c) is given in \Cref{prop:non-binomialEq} and (d)-(g) are derived from rank 1 conditions $A^{12,21}_{12,21},A^{13,23}_{13,31},A^{12,21}_{12,33},A^{13,23}_{13,33}$. As the variety $CV_T^{F81}$ has dimension $6$ (Proposition \ref{prop:dim}) and lives inside the 13-dimensional space $\mathcal{E}_T^{F81}$, its codimension is 7. A copy of the code we used to check that the rank of the Jacobian at the no-evolution point is 7 can be found in Appendix \ref{app:code81}. In the case of the F81 model, the vanishing ideal $I_T^{F81}$ of $CV_T^{F81}$ can be explicitly computed with \texttt{Macaulay2} and it can be checked that $\left[\mathcal{I}(X_T^{F81}):\left(\tilde{p}_{1111}\tilde{p}_{1414}\tilde{p}_{4141}\right)^\infty\right]=I_T^{F81}$.
\end{proof}

\begin{remark}\rm Note that for the Jukes-Cantor model, that is, uniform $\pi$ (which is not a generic $\pi$ and some of our results would not hold for it), the ideal is also generated by $10$ quadrics and $23$ cubics; see the Small Trees Webpage \cite{Smalltrees}.
    
\end{remark}

\begin{cor}\label{cor:satF81}
Let $J_A^{F81}$ be the ideal generated by polynomials (c)-(g) in \Cref{thm:CI} and linear polynomials in Propositions \ref{prop:4quartetEq} and \ref{prop:non-binomialEq}. Then
$$[J_A^{F81}:\left(\tilde{p}_{1111}\tilde{p}_{1414}\right)^\infty]=[I_A^{F81}:\left(\tilde{p}_{1111}\tilde{p}_{1414}\right)^\infty]\subsetneq I_T^{F81},$$
and the two ideals coincide up to degree two.
\end{cor}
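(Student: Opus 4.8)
The plan is to mirror the proof of \Cref{cor:satF84}, handling the three assertions in turn. For the equality $[J_A^{F81}:(\tilde p_{1111}\tilde p_{1414})^\infty]=[I_A^{F81}:(\tilde p_{1111}\tilde p_{1414})^\infty]$ I would prove both inclusions. The inclusion $J_A^{F81}\subseteq[I_A^{F81}:(\tilde p_{1111}\tilde p_{1414})^\infty]$ is the easy direction: the quadrics (d)--(g) of \Cref{thm:CI} are literally $2\times 2$ minors of rank-$1$ blocks of $A$, hence already lie in $I_A^{F81}$, while the quadric (c) and every linear generator coming from \Cref{prop:4quartetEq,prop:non-binomialEq} were obtained in those proofs as explicit $\CC[\tilde p]$-combinations of minors of $A$ after dividing by a power of $\tilde p_{1111}\tilde p_{1414}$; by \Cref{rem:nonzero_F81} no other coordinate is ever inverted, so each such generator lies in the stated saturation.

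For the reverse inclusion I would run the determinant expansions of \Cref{prop:4quartetEq,prop:non-binomialEq} and \Cref{thm:CI} backwards, so that each minor generating $I_A^{F81}$, after clearing the factor $\tilde p_{1111}\tilde p_{1414}$ that those expansions produce, is rewritten as a combination of the generators of $J_A^{F81}$. The main work---and where I expect the real obstacle---is verifying that this accounts for \emph{every} minor of every rank block, not merely the finitely many minors singled out in those proofs. I would organize this by rank: the $2\times 2$ minors of rank-$1$ blocks reduce, using the symmetry equations of $\F_4^{F81}$ (which are generators of $J_A^{F81}$) together with the explicit values of $\bar\rho$ in \eqref{eq:rho_quartet}, to scalar multiples of (d)--(g); the $3\times 3$ minors of rank-$2$ blocks reduce to the linear relations of \Cref{prop:non-binomialEq}; and the $4\times 4$ minors of rank-$3$ blocks reduce to the quadric (c) modulo the lower-rank relations. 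Saturating both sides then yields the equality.

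The strict containment in $I_T^{F81}$ follows from \Cref{thm:ARflat}: every minor generating $I_A^{F81}$ vanishes on $CV_T^{F81}$, so $I_A^{F81}\subseteq I_T^{F81}$, and since $I_T^{F81}$ is prime (\Cref{rmk:primeideal}) and $\tilde p_{1111}\tilde p_{1414}\notin I_T^{F81}$, the saturation remains inside $I_T^{F81}$. The inclusion is strict because the tripod-marginalization cubics (a) and (b) of \Cref{thm:CI}, which do not arise from any flattening rank condition, belong to $I_T^{F81}$; this can be confirmed against the ideal $I_T^{F81}$ computed explicitly in \texttt{Macaulay2} in the proof of \Cref{thm:CI}. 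Finally, for the degree-two statement I would compare graded pieces directly: using the same explicit computation of $I_T^{F81}$, I would check that $(I_T^{F81})_{\le 2}=\bigl([I_A^{F81}:(\tilde p_{1111}\tilde p_{1414})^\infty]\bigr)_{\le 2}$, i.e.\ that all quadratic phylogenetic invariants already come from the flattening rank conditions, the discrepancy between the two ideals being concentrated in degree $\ge 3$ and witnessed precisely by the missing cubic generators (a) and (b).
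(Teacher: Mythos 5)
Your proposal is correct and follows essentially the same route as the paper: the equality of the two saturations is exactly the bookkeeping recorded in \Cref{rem:nonzero_F81} (only $\tilde{p}_{1111}$ and $\tilde{p}_{1414}$ are ever inverted when passing between the minors of $A$ and the generators of $J_A^{F81}$), and the strict containment in $I_T^{F81}$ together with the agreement in degree two is settled, as you suggest, by comparing against the explicit \texttt{Macaulay2} computation of $I_T^{F81}$; you simply spell out the reverse inclusion, organized by rank block, which the paper compresses into a one-line appeal to that remark. One small correction: the symmetry equations of $\F_4^{F81}$ are \emph{not} among the stated generators of $J_A^{F81}$ (only the linear polynomials of \Cref{prop:4quartetEq} and \Cref{prop:non-binomialEq} are), but your reduction survives because the flattening matrix of \Cref{app:flat} is already written in the symmetry-reduced coordinates, so no symmetry relation is needed as an ideal element.
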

\begin{proof} 
Equality of $I_A^{F81}$ and $J_A^{F81}$ up to saturation by $(\tilde{p}_{1111}\tilde{p}_{1414})$ follows from \Cref{rem:nonzero_F81}. \texttt{Macaulay2} computations prove the last inclusion. 
\end{proof}

\begin{remark}\rm 
Note that $\tilde{p}_{1111}\tilde{p}_{1414}\neq 0$ (or $\tilde{p}_{1111}\tilde{p}_{1212}\tilde{p}_{1414}\neq 0$ in the case of F84) is equivalent to the diagonal submatrix $A_{1234}^{1234}$ having non-zero determinant. This is precisely the matrix one obtains by marginalizing the tensor $p$ over leaves $l_1$ and $l_3$, and it is analogous to the result obtained in \cite[Theorem 5.3]{casfer2024}.
Moreover, we can equivalently define $I_A^\MM$ as $I_A^{TN93}+\mathcal{I}(\F_4^\MM)$.
Therefore, Corollaries \ref{cor:satF84} and \ref{cor:satF81} provide an analogous statement to the result stated in \cite[Corollary 5.8]{casfer2024} for equivariant models: $$[I_A^{TN93}+\mathcal{I}(\F_4^\MM):\left(\det\,A_{1234}^{1234}\right)^\infty]\subset I_T^\MM.$$  
\end{remark}

\begin{cor}\label{cor:quartetCI}
Let $T$ be a quartet and let $CV_T^{TN93}$ be the phylogenetic variety for the $TN93$ model. Then, for $\MM=$ F81 or F84, $X_T^\MM=X_T^{TN93}\cap \F_4^\MM$ and $CV_T^{\MM}$ is an irreducible component of $CV_T^{TN93}\cap \F_4^{\MM}.$ 
\end{cor}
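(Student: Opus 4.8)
The plan is to prove the two assertions of Corollary~\ref{cor:quartetCI} separately, treating first the local complete intersections $X_T^\MM$ and then the phylogenetic varieties $CV_T^\MM$, exactly mirroring the two-part structure already used for tripods in Proposition~\ref{prop:tripodCI}. Throughout I take $T=T_{12}$ without loss of generality, since all constructions are equivariant under permuting leaf labels.

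\textbf{Part 1: $X_T^\MM=X_T^{TN93}\cap\F_4^\MM$.} First I would recall that $X_T^{TN93}$ is the variety defined in \cite{CHT} by the minors of the blocks of the flattening matrix $A=\Flat_{12\mid34}(\bar p)$ of ranks $1,2,3$ (after the appropriate saturation by $\tilde p_{1111}\tilde p_{1212}\tilde p_{1414}$), i.e.\ by the generators of $I_A^{TN93}$ up to saturation. By the Remark following Corollary~\ref{cor:satF81}, we may identify $I_A^\MM=I_A^{TN93}+\mathcal{I}(\F_4^\MM)$, so that intersecting $X_T^{TN93}$ with the linear space $\F_4^\MM$ amounts to imposing the symmetry equations of $\F_4^\MM$ on the TN93 flattening minors. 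The key computational step is then the observation, already established in Section~\ref{sec:rank} and in the proofs of Theorems~\ref{thm:CIF84} and \ref{thm:CI}, that imposing those symmetry equations on the rank-block minors of the TN93 flattening yields precisely the equations listed in Theorem~\ref{thm:CIF84} (resp.\ Theorem~\ref{thm:CI}) that cut out $X_T^{F84}$ (resp.\ $X_T^{F81}$): some TN93 minors collapse to zero, others become equal or reduce to binomials, and the surviving ones are exactly the generators of $J_A^\MM$. Combining this with Corollaries~\ref{cor:satF84} and \ref{cor:satF81}, which give $[J_A^\MM:(\cdots)^\infty]=[I_A^\MM:(\cdots)^\infty]$, establishes the set-theoretic equality $X_T^{TN93}\cap\F_4^\MM=X_T^\MM$ on the relevant open set.

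\textbf{Part 2: $CV_T^\MM$ is an irreducible component of $CV_T^{TN93}\cap\F_4^\MM$.} Here I would first note the containment $CV_T^\MM\subseteq CV_T^{TN93}\cap\F_4^\MM$: any point of $CV_T^\MM$ arises from F81/F84 transition matrices, which are in particular TN93 matrices (with the eigenvalue coincidences $\lambda_3=\lambda_4$, resp.\ $\lambda_2=\lambda_3=\lambda_4$), so it lies in $CV_T^{TN93}$; and it satisfies all symmetry equations of $\F_4^\MM$ by Propositions~\ref{prop:quartet-eqF84} and \ref{prop:quartet-eq}, so it lies in $\F_4^\MM$. To upgrade this to ``irreducible component,'' I would argue by dimension: $CV_T^\MM$ is irreducible (it is the closure of the image of the irreducible parameter space under $\varphi_T^\MM$) of the dimension given in Proposition~\ref{prop:dim}, and it sits inside the component of $CV_T^{TN93}\cap\F_4^\MM$ passing through the no-evolution point $\rho$. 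Since by Part~1 this intersection agrees with $X_T^\MM$ near $\rho$, and Theorems~\ref{thm:CIF84} and \ref{thm:CI} already show $CV_T^\MM$ is an irreducible component of $X_T^\MM$ of the right dimension at $\rho$, it follows that $CV_T^\MM$ is a top-dimensional irreducible component of $CV_T^{TN93}\cap\F_4^\MM$ as well.

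\textbf{Main obstacle.} The delicate point is the passage from the local (near-$\rho$) statements supplied by Theorems~\ref{thm:CIF84} and \ref{thm:CI} to a genuine irreducible-component claim about the global intersection $CV_T^{TN93}\cap\F_4^\MM$, because this intersection may be reducible and could in principle carry spurious components of equal or larger dimension away from the no-evolution point. I expect this to be controlled precisely by the saturation statements in Corollaries~\ref{cor:satF84} and \ref{cor:satF81}: the saturation by $\tilde p_{1111}\tilde p_{1212}\tilde p_{1414}$ (equivalently, by $\det A_{1234}^{1234}$, as noted in the Remark) removes exactly the extraneous components supported on the locus where the distinguished diagonal minor vanishes, leaving $CV_T^\MM$ as a component of what remains. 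Care is needed to ensure the symmetry equations of $\F_4^\MM$ do not themselves introduce new high-dimensional components of the intersection; but since $\F_4^\MM$ is linear and the dimension count at $\rho$ is already tight, no such component can dominate $CV_T^\MM$, which secures the conclusion.
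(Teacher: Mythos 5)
Your proposal is essentially correct and follows the same two-part strategy as the paper: first identify $X_T^{\MM}$ with $X_T^{TN93}\cap\F_4^{\MM}$ by imposing the symmetry equations on the defining equations of $X_T^{TN93}$, then deduce the irreducible-component claim from the fact that $CV_T^{\MM}$ is already known to be a component of $X_T^{\MM}$. Two remarks. First, in Part~1 you describe $X_T^{TN93}$ as cut out only by the flattening minors, but it is cut out by those minors \emph{together with} the $18$ equations coming from tripods; the paper explicitly checks that the latter reduce, under the symmetries of $\F_4^{\MM}$, to the tripod equations (a) of Theorems~\ref{thm:CIF84} and~\ref{thm:CI} (this is the quartet analogue of Proposition~\ref{prop:tripodCI}, via marginalization), and this step should not be skipped. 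Second, your Part~2 concludes via a local coincidence at $\rho$ (using that $CV_T^{TN93}=X_T^{TN93}$ and $CV_T^{\MM}=X_T^{\MM}$ near $\rho$, plus density of a nonempty open subset of an irreducible variety), whereas the paper decomposes $X_T^{TN93}=C_1\cup\dots\cup C_l$ with $C_1=CV_T^{TN93}$, writes $X_T^{\MM}=\bigcup_i(C_i\cap\F_4^{\MM})$, and uses the elementary fact that an irreducible component of a union contained in one of the pieces is a component of that piece. Both routes are valid and rest on the same inputs; however, your ``main obstacle'' paragraph overcomplicates matters: spurious components of $CV_T^{TN93}\cap\F_4^{\MM}$ away from $\rho$ are irrelevant to the claim (only a component \emph{strictly containing} $CV_T^{\MM}$ would be an obstruction, and any such component contains $\rho$ and hence meets the neighborhood where everything coincides), so the saturation statements of Corollaries~\ref{cor:satF84} and~\ref{cor:satF81} are not needed here. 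Also, the assertion that $CV_T^{\MM}$ is a \emph{top-dimensional} component is neither claimed nor justified and should be dropped.
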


\begin{proof}
Let $X_T^{TN93}$ be the complete intersection obtained in \cite[Theorem 5.15]{CHT} for the quartet evolving under $TN93$. 
The equations defining $X_T^{TN93}$ were rank conditions on certain minors together with $18$ equations coming from the tripods. In Sections \ref{sec:rank} and \ref{sec:mixtures} we have seen that imposing the symmetries of $\F_4^\MM$ on these rank conditions, we obtain equations $(c)-(f)$ above and the equations defining $\mathcal{E}_T^{\MM}$. 
On the other hand, $(a)$ and $(b)$ above are also obtained by imposing the symmetries of $\F_4^\MM$ on the equations arising from tripods in $X_T^{TN93}$. Thus, if $X_T^\MM$ is defined as in Theorem \ref{thm:CIF84} for F84 or in Theorem \ref{thm:CI} for F81, we have $X_T^\MM=X_T^{TN93}\cap \F_4^\MM$. 
Based on the results of \cite[Theorem 5.15]{CHT}, $CV_T^{TN93}$ is the irreducible component of $X_T^{TN93}$ containing the no evolution point $\rho$. Thus, $X_T^{TN93}$ decomposes into irreducible components as $C_1 \cup C_2\dots \cup C_l$ with $C_1=CV_T^{TN93}$ and $X_T^\MM=(C_1\cap \F_4^\MM) \cup \dots \cup (C_l\cap \F_4^\MM)$. 
As in Theorem \ref{thm:CI} we have seen that $CV_T^\MM$ is the irreducible component of $X_T^\MM$ containing $\rho$, we get that $CV_T^\MM$ is an irreducible component of $C_1\cap \F_4^\MM$ as we wanted to prove.
\end{proof}

\section{Discussion}

We have used phylogenetic invariants for tripods and quartets evolving under the TN93 model computed in \cite{CHT} to obtain invariants for F81 and F84. More precisely, we have described the spaces of phylogenetic mixtures and provided local complete intersections that cut out the phylogenetic varieties at the no-evolution point.
For tripods and quartets we have proved that only symmetry equations of the submodel are needed to obtain all equations from those of TN93, even in the case where the space of mixtures requires more involved (non-binomial) linear equations (Proposition \ref{prop:CItripod}, Corollary \ref{cor:quartetCI}). A first natural question to ask is whether this will be true for a larger number of leaves.

For tripods, we prove that the algebraic variety of each submodel coincides with $CV_3^{TN93}$ intersected with the space $\F_3^\MM$ defined by symmetry equations. 
In the case of quartets, we do not know whether $CV_4^{TN93}\cap\F_4^\MM$ is irreducible for $\MM=F81$ or $F84$, but we are able to show that $CV_4^\MM$ is an irreducible component of this intersection. 
{Therefore, another natural question is whether the intersection of the variety $CV_T^{TN93}$ with $\F_n^\MM$ is irreducible only for star trees (thus generalizing the result for tripods), as it is conjectured for equivariant models in \cite[Section 6]{casfer2024}.}

As for practical applications, edge invariants (rank conditions on the flattening matrix associated to a tree topology) for quartets are already useful for phylogenetic reconstruction purposes due to the existence of quartet-based methods and {numerical methods that test for rank}. On the other hand, understanding which phylogneetic invariants describe the space of mixtures is crucial for model selection goals. 

The study of TN93 and its submodels showcases the potential of the ATR framework developed in \cite{CHT} to compute phylogenetic invariants for time-reversible evolutionary models. There is a long way to go to be able to provide model selection and phylogenetic reconstruction methods based on phylogenetic invariants that are competitive with currently available implemented software. Nevertheless, a first step in this direction is to apply these techniques to other models, especially in the case of amino acid sequence data.

Note that, because we wanted to take advantage of the equations of the bigger model, in this paper we have used the unique (up to scaling) $\pi$-orthogonal change of coordinates for TN93 in \cite{CHT}. However, the submodels considered here have other $\pi$-orthogonal bases that may produce linear equations that are easier to generalize to trees with a larger number of leaves or even to models with more states. This will be the topic of a forthcoming paper for the Equal-Input model.

\bibliographystyle{plain}
\bibliography{biblio}

\appendix
\newgeometry{left=3cm,bottom=2.5cm}
\begin{landscape}
\section{Flattening tables F81}\label{app:flat}
\begin{table}[h]
    \tiny
$\begin{NiceArray}{c|c|cccc|c|c|cc|c|cccc|cc}
\CodeBefore
\Body
& (1,1) & (1,4)  & (4,1) & (2,4) & (4,2) & (4,4) & (2,2) & (1,2) & (2,1)& (3,3) & (1,3) & (3,1) & (2,3) & (3,2)\\
\hline
(1,1) & \tilde{p}_{1111} & 0  & 0 & 0 & 0 &  \pp_{1144}\tilde{p}_{1144} & \pp_{1122}\tilde{p}_{1144} & 0&0& \pp_{1133}\tilde{p}_{1144} & 0 & 0 & 0 & 0 \\
\hline
(1,4) & 0 & \pp_{1144}\tilde{p}_{1414} & \pp_{1144}\tilde{p}_{1441} & \pp_{1244}\tilde{p}_{1444} & \pp_{1244}\tilde{p}_{1444} & \pp_{1444}\tilde{p}_{1444} & 0 & 0 & 0 & 0 & 0 & 0 & 0 & 0 \\
(4,1) & 0 & \pp_{1144}\tilde{p}_{4114} & \pp_{1144}\tilde{p}_{4141} &\pp_{1244}\tilde{p}_{4144} & \pp_{1244}\tilde{p}_{4144} &  \pp_{1444}\tilde{p}_{4144} &0 & 0 & 0 & 0 & 0 & 0 & 0 & 0\\
(2,4) & 0 & \pp_{1244}\tilde{p}_{4414} & \pp_{1244}\tilde{p}_{4441}  &\pp_{2244} \tilde{p}_{2424} & \pp_{2244} \tilde{p}_{2442} & \pp_{2444}\tilde{p}_{2444} & 0 & 0 & 0 & 0 & 0 & 0 & 0 & 0 \\
(4,2) & 0 & \pp_{1244}\tilde{p}_{4414} & \pp_{1244}\tilde{p}_{4441} & \pp_{2244}\tilde{p}_{2442} & \pp_{2244}\tilde{p}_{2424}  & \pp_{2444}\tilde{p}_{2444} & 0 & 0 & 0 & 0 & 0 & 0 & 0 & 0\\
\hline
(4,4) & \pp_{1144}\tilde{p}_{4411} & \pp_{1444}\tilde{p}_{4414} & \pp_{1444}\tilde{p}_{4441} & \pp_{2444}\tilde{p}_{2444} &\pp_{2444}\tilde{p}_{2444} & \pp_{4444}\tilde{p}_{4444}  & \pp_{2244}\tilde{p}_{2244} & \pp_{1244}\tilde{p}_{4414} & 
\pp_{1244}\tilde{p}_{4441} &\tilde{p}_{3344} & 0  & 0 & 0 & 0 \\
\hline
(2,2) & \pp_{1122}\tilde{p}_{4411} & 0 & 0 & 0 & 0& 
\pp_{2244}\tilde{p}_{2244} & \pp_{2222}\tilde{p}_{2222} &
\pp_{1222}\tilde{p}_{4414} & \pp_{1222}\tilde{p}_{4441} & \pp_{2233}\tilde{p}_{2233} & 0 & 0 & 0 & 0 \\
\hline
(1,2) & 0 & 0 & 0 & 0 & 0&  \pp_{1244}\tilde{p}_{1444} & \pp_{1222}\tilde{p}_{1444} & \pp_{1122}\tilde{p}_{1414} & \pp_{1122}\tilde{p}_{1441} & \pp_{1233}\tilde{p}_{1444} & 0 & 0 & 0 & 0\\
(2,1) & 0 & 0  & 0 & 0 & 0 & \pp_{1244}\tilde{p}_{4144} & \pp_{1222}\tilde{p}_{4144} & \pp_{1122}\tilde{p}_{4114} &
\pp_{1122}\tilde{p}_{4141} & \pp_{1233}\tilde{p}_{4144} & 0 & 0 & 0 & 0 \\
\hline
(3,3) &\pp_{1133}\tilde{p}_{4411} & 0 & 0 & 0 & 0 & \tilde{p}_{3344} & \pp_{2233}\tilde{p}_{2233} & \pp_{1233}\tilde{p}_{4414}  &  \pp_{1233}\tilde{p}_{4441}  &\pp_{3333}\tilde{p}_{3333} & \pp_{1333}\tilde{p}_{4414} & \pp_{1333}\tilde{p}_{4441} & \pp_{2333}\tilde{p}_{2444} & \pp_{2333}\tilde{p}_{2444} \\
\hline
(1,3) & 0 &  0 & 0 & 0 & 0 & 0 & 0 & 0& 0& \pp_{1333}\tilde{p}_{1444} & \pp_{1133}\tilde{p}_{1414}& 
\pp_{1133}\tilde{p}_{1441} & \pp_{1233}\tilde{p}_{1444} & \pp_{1233}\tilde{p}_{1444}\\
(3,1)  & 0 &  0 & 0 & 0 & 0 & 0 & 0 & 0& 0& \pp_{1333}\tilde{p}_{4144} & \pp_{1133}\tilde{p}_{4114} & \pp_{1133}\tilde{p}_{4141} & 
\pp_{1233}\tilde{p}_{4144} & \pp_{1233}\tilde{p}_{4144} \\
(2,3)& 0 &  0 & 0 & 0 & 0 & 0 & 0 & 0& 0&
\pp_{2333} \tilde{p}_{2444} & 
\pp_{1233}\tilde{p}_{4414} & 
\pp_{1233}\tilde{p}_{4441} & 
\pp_{2233}\tilde{p}_{2323} & 
\pp_{2233}\tilde{p}_{2332} \\
(3,2) & 0 &  0 & 0 & 0 & 0 & 0 & 0 & 0& 0&
\pp_{2333}\tilde{p}_{2444} & 
\pp_{1233}\tilde{p}_{4414} & 
\pp_{1233}\tilde{p}_{4441} & 
\pp_{2233}\tilde{p}_{2332} & 
\pp_{2233}\tilde{p}_{2323} 
\end{NiceArray}$
\caption{Flattening matrix of $\Bar{p}$ for a quartet $T=12\mid 34$ under the F81 model.}
\label{tab:flat}
\end{table}

\begin{table}[h]
    \tiny
$\begin{NiceArray}{c|c|cccc|c|c|cc|c|cccc|cc}
\CodeBefore
\rectanglecolor{gray!30}{3-3}{7-6} 
\rectanglecolor{gray!30}{11-12}{15-15} 
\rectanglecolor{gray!30}{7-9}{11-10} 
\rectanglecolor{gray!80}{3-3}{3-3} 
\rectanglecolor{gray!80}{12-12}{12-12}
\rectanglecolor{gray!80}{9-9}{9-9}
\Body
& (1,1) & (1,4)  & (4,1) & (2,4) & (4,2) & (4,4) & (2,2) & (1,2) & (2,1)& (3,3) & (1,3) & (3,1) & (2,3) & (3,2)\\
\hline
(1,1) & \tilde{p}_{1111} & 0  & 0 & 0 & 0 &  \pp_{1144}\tilde{p}_{1144} & \pp_{1122}\tilde{p}_{1144} & 0&0& \pp_{1133}\tilde{p}_{1144} & 0 & 0 & 0 & 0 \\
\hline
(1,4) & 0 & \pp_{1144}\tilde{p}_{1414} & \pp_{1144}\tilde{p}_{1441} & \pp_{1244}\tilde{p}_{1444} & \pp_{1244}\tilde{p}_{1444} & \pp_{1444}\tilde{p}_{1444} & 0 & 0 & 0 & 0 & 0 & 0 & 0 & 0 \\
(4,1) & 0 & \pp_{1144}\tilde{p}_{4114} & \pp_{1144}\tilde{p}_{4141} &\pp_{1244}\tilde{p}_{4144} & \pp_{1244}\tilde{p}_{4144} &  \pp_{1444}\tilde{p}_{4144} &0 & 0 & 0 & 0 & 0 & 0 & 0 & 0\\
(2,4) & 0 & \pp_{1244}\tilde{p}_{4414} & \pp_{1244}\tilde{p}_{4441}  &\pp_{2244} \tilde{p}_{2424} & \pp_{2244} \tilde{p}_{2442} & \pp_{2444}\tilde{p}_{2444} & 0 & 0 & 0 & 0 & 0 & 0 & 0 & 0 \\
(4,2) & 0 & \pp_{1244}\tilde{p}_{4414} & \pp_{1244}\tilde{p}_{4441} & \pp_{2244}\tilde{p}_{2442} & \pp_{2244}\tilde{p}_{2424}  & \pp_{2444}\tilde{p}_{2444} & 0 & 0 & 0 & 0 & 0 & 0 & 0 & 0\\
\hline
(4,4) & \pp_{1144}\tilde{p}_{4411} & \pp_{1444}\tilde{p}_{4414} & \pp_{1444}\tilde{p}_{4441} & \pp_{2444}\tilde{p}_{2444} &\pp_{2444}\tilde{p}_{2444} & \pp_{4444}\tilde{p}_{4444}  & \pp_{2244}\tilde{p}_{2244} & \pp_{1244}\tilde{p}_{4414} & 
\pp_{1244}\tilde{p}_{4441} &\tilde{p}_{3344} & 0  & 0 & 0 & 0 \\
\hline
(2,2) & \pp_{1122}\tilde{p}_{4411} & 0 & 0 & 0 & 0& 
\pp_{2244}\tilde{p}_{2244} & \pp_{2222}\tilde{p}_{2222} &
\pp_{1222}\tilde{p}_{4414} & \pp_{1222}\tilde{p}_{4441} & \pp_{2233}\tilde{p}_{2233} & 0 & 0 & 0 & 0 \\
\hline
(1,2) & 0 & 0 & 0 & 0 & 0&  \pp_{1244}\tilde{p}_{1444} & \pp_{1222}\tilde{p}_{1444} & \pp_{1122}\tilde{p}_{1414} & \pp_{1122}\tilde{p}_{1441} & \pp_{1233}\tilde{p}_{1444} & 0 & 0 & 0 & 0\\
(2,1) & 0 & 0  & 0 & 0 & 0 & \pp_{1244}\tilde{p}_{4144} & \pp_{1222}\tilde{p}_{4144} & \pp_{1122}\tilde{p}_{4114} &
\pp_{1122}\tilde{p}_{4141} & \pp_{1233}\tilde{p}_{4144} & 0 & 0 & 0 & 0 \\
\hline
(3,3) &\pp_{1133}\tilde{p}_{4411} & 0 & 0 & 0 & 0 & \tilde{p}_{3344} & \pp_{2233}\tilde{p}_{2233} & \pp_{1233}\tilde{p}_{4414}  &  \pp_{1233}\tilde{p}_{4441}  &\pp_{3333}\tilde{p}_{3333} & \pp_{1333}\tilde{p}_{4414} & \pp_{1333}\tilde{p}_{4441} & \pp_{2333}\tilde{p}_{2444} & \pp_{2333}\tilde{p}_{2444} \\
\hline
(1,3) & 0 &  0 & 0 & 0 & 0 & 0 & 0 & 0& 0& \pp_{1333}\tilde{p}_{1444} & \pp_{1133}\tilde{p}_{1414}& 
\pp_{1133}\tilde{p}_{1441} & \pp_{1233}\tilde{p}_{1444} & \pp_{1233}\tilde{p}_{1444}\\
(3,1)  & 0 &  0 & 0 & 0 & 0 & 0 & 0 & 0& 0& \pp_{1333}\tilde{p}_{4144} & \pp_{1133}\tilde{p}_{4114} & \pp_{1133}\tilde{p}_{4141} & 
\pp_{1233}\tilde{p}_{4144} & \pp_{1233}\tilde{p}_{4144} \\
(2,3)& 0 &  0 & 0 & 0 & 0 & 0 & 0 & 0& 0&
\pp_{2333} \tilde{p}_{2444} & 
\pp_{1233}\tilde{p}_{4414} & 
\pp_{1233}\tilde{p}_{4441} & 
\pp_{2233}\tilde{p}_{2323} & 
\pp_{2233}\tilde{p}_{2332} \\
(3,2) & 0 &  0 & 0 & 0 & 0 & 0 & 0 & 0& 0&
\pp_{2333}\tilde{p}_{2444} & 
\pp_{1233}\tilde{p}_{4414} & 
\pp_{1233}\tilde{p}_{4441} & 
\pp_{2233}\tilde{p}_{2332} & 
\pp_{2233}\tilde{p}_{2323} 
\end{NiceArray}$
\caption{Flattening matrix of $\Bar{p}$ for a quartet $T=12\mid 34$. The highlighted submatrices have rank 1 and the entries highlighted in dark grey are generically non-zero. Therefore, the 18 quadratic edge invariants arise from considering all 2-minors of the grey submatrices containing the dark grey minor. }
\label{tab:flatrank2}
\end{table}

\begin{table}[h]
    \tiny
$\begin{NiceArray}{c|c|cccc|c|c|cc|c|cccc|cc}
\CodeBefore
\rectanglecolor{gray!30}{7-2}{11-2} 
\rectanglecolor{gray!30}{2-8}{2-8} 
\rectanglecolor{gray!30}{7-8}{11-9} 
\rectanglecolor{gray!80}{2-2}{2-2} 
\rectanglecolor{gray!80}{9-2}{9-2}
\rectanglecolor{gray!80}{2-9}{2-9}
\rectanglecolor{gray!80}{9-9}{9-9}
\Body
& (1,1) & (1,4)  & (4,1) & (2,4) & (4,2) & (4,4) & (2,2) & (1,2) & (2,1)& (3,3) & (1,3) & (3,1) & (2,3) & (3,2)\\
\hline
(1,1) & \tilde{p}_{1111} & 0  & 0 & 0 & 0 &  \pp_{1144}\tilde{p}_{1144} & \pp_{1122}\tilde{p}_{1144} & 0&0& \pp_{1133}\tilde{p}_{1144} & 0 & 0 & 0 & 0 \\
\hline
(1,4) & 0 & \pp_{1144}\tilde{p}_{1414} & \pp_{1144}\tilde{p}_{1441} & \pp_{1244}\tilde{p}_{1444} & \pp_{1244}\tilde{p}_{1444} & \pp_{1444}\tilde{p}_{1444} & 0 & 0 & 0 & 0 & 0 & 0 & 0 & 0 \\
(4,1) & 0 & \pp_{1144}\tilde{p}_{4114} & \pp_{1144}\tilde{p}_{4141} &\pp_{1244}\tilde{p}_{4144} & \pp_{1244}\tilde{p}_{4144} &  \pp_{1444}\tilde{p}_{4144} &0 & 0 & 0 & 0 & 0 & 0 & 0 & 0\\
(2,4) & 0 & \pp_{1244}\tilde{p}_{4414} & \pp_{1244}\tilde{p}_{4441}  &\pp_{2244} \tilde{p}_{2424} & \pp_{2244} \tilde{p}_{2442} & \pp_{2444}\tilde{p}_{2444} & 0 & 0 & 0 & 0 & 0 & 0 & 0 & 0 \\
(4,2) & 0 & \pp_{1244}\tilde{p}_{4414} & \pp_{1244}\tilde{p}_{4441} & \pp_{2244}\tilde{p}_{2442} & \pp_{2244}\tilde{p}_{2424}  & \pp_{2444}\tilde{p}_{2444} & 0 & 0 & 0 & 0 & 0 & 0 & 0 & 0\\
\hline
(4,4) & \pp_{1144}\tilde{p}_{4411} & \pp_{1444}\tilde{p}_{4414} & \pp_{1444}\tilde{p}_{4441} & \pp_{2444}\tilde{p}_{2444} &\pp_{2444}\tilde{p}_{2444} & \pp_{4444}\tilde{p}_{4444}  & \pp_{2244}\tilde{p}_{2244} & \pp_{1244}\tilde{p}_{4414} & 
\pp_{1244}\tilde{p}_{4441} &\tilde{p}_{3344} & 0  & 0 & 0 & 0 \\
\hline
(2,2) & \pp_{1122}\tilde{p}_{4411} & 0 & 0 & 0 & 0& 
\pp_{2244}\tilde{p}_{2244} & \pp_{2222}\tilde{p}_{2222} &
\pp_{1222}\tilde{p}_{4414} & \pp_{1222}\tilde{p}_{4441} & \pp_{2233}\tilde{p}_{2233} & 0 & 0 & 0 & 0 \\
\hline
(1,2) & 0 & 0 & 0 & 0 & 0&  \pp_{1244}\tilde{p}_{1444} & \pp_{1222}\tilde{p}_{1444} & \pp_{1122}\tilde{p}_{1414} & \pp_{1122}\tilde{p}_{1441} & \pp_{1233}\tilde{p}_{1444} & 0 & 0 & 0 & 0\\
(2,1) & 0 & 0  & 0 & 0 & 0 & \pp_{1244}\tilde{p}_{4144} & \pp_{1222}\tilde{p}_{4144} & \pp_{1122}\tilde{p}_{4114} &
\pp_{1122}\tilde{p}_{4141} & \pp_{1233}\tilde{p}_{4144} & 0 & 0 & 0 & 0 \\
\hline
(3,3) &\pp_{1133}\tilde{p}_{4411} & 0 & 0 & 0 & 0 & \tilde{p}_{3344} & \pp_{2233}\tilde{p}_{2233} & \pp_{1233}\tilde{p}_{4414}  &  \pp_{1233}\tilde{p}_{4441}  &\pp_{3333}\tilde{p}_{3333} & \pp_{1333}\tilde{p}_{4414} & \pp_{1333}\tilde{p}_{4441} & \pp_{2333}\tilde{p}_{2444} & \pp_{2333}\tilde{p}_{2444} \\
\hline
(1,3) & 0 &  0 & 0 & 0 & 0 & 0 & 0 & 0& 0& \pp_{1333}\tilde{p}_{1444} & \pp_{1133}\tilde{p}_{1414}& 
\pp_{1133}\tilde{p}_{1441} & \pp_{1233}\tilde{p}_{1444} & \pp_{1233}\tilde{p}_{1444}\\
(3,1)  & 0 &  0 & 0 & 0 & 0 & 0 & 0 & 0& 0& \pp_{1333}\tilde{p}_{4144} & \pp_{1133}\tilde{p}_{4114} & \pp_{1133}\tilde{p}_{4141} & 
\pp_{1233}\tilde{p}_{4144} & \pp_{1233}\tilde{p}_{4144} \\
(2,3)& 0 &  0 & 0 & 0 & 0 & 0 & 0 & 0& 0&
\pp_{2333} \tilde{p}_{2444} & 
\pp_{1233}\tilde{p}_{4414} & 
\pp_{1233}\tilde{p}_{4441} & 
\pp_{2233}\tilde{p}_{2323} & 
\pp_{2233}\tilde{p}_{2332} \\
(3,2) & 0 &  0 & 0 & 0 & 0 & 0 & 0 & 0& 0&
\pp_{2333}\tilde{p}_{2444} & 
\pp_{1233}\tilde{p}_{4414} & 
\pp_{1233}\tilde{p}_{4441} & 
\pp_{2233}\tilde{p}_{2332} & 
\pp_{2233}\tilde{p}_{2323} 
\end{NiceArray}$
\caption{Flattening matrix of $\Bar{p}$ for a quartet $T=12\mid 34$. The entries highlighted in dark gray form a (generically) non-vanishing 2-minor and the submatrix formed by the entries highlighted in (dark and light) gray has rank 2. Therefore, the 4 cubic edge invariants arise from considering all 3-minors of the grey submatrix containing the dark grey minor. }
\label{tab:flatrank3a}
\end{table}

\begin{table}[h]
    \tiny
$\begin{NiceArray}{c|c|cccc|c|c|cc|c|cccc|cc}
\CodeBefore
\rectanglecolor{gray!30}{7-2}{15-2} 
\rectanglecolor{gray!30}{2-11}{2-11} 
\rectanglecolor{gray!30}{7-9}{15-9} 
\rectanglecolor{gray!30}{7-11}{15-12} 
\rectanglecolor{gray!80}{2-2}{2-2} 
\rectanglecolor{gray!80}{9-2}{9-2}
\rectanglecolor{gray!80}{2-9}{2-9}
\rectanglecolor{gray!80}{9-9}{9-9}
\rectanglecolor{gray!80}{12-2}{12-2}
\rectanglecolor{gray!80}{2-12}{2-12}
\rectanglecolor{gray!80}{12-9}{12-9}
\rectanglecolor{gray!80}{9-12}{9-12}
\rectanglecolor{gray!80}{12-12}{12-12}
\Body
& (1,1) & (1,4)  & (4,1) & (2,4) & (4,2) & (4,4) & (2,2) & (1,2) & (2,1)& (3,3) & (1,3) & (3,1) & (2,3) & (3,2)\\
\hline
(1,1) & \tilde{p}_{1111} & 0  & 0 & 0 & 0 &  \pp_{1144}\tilde{p}_{1144} & \pp_{1122}\tilde{p}_{1144} & 0&0& \pp_{1133}\tilde{p}_{1144} & 0 & 0 & 0 & 0 \\
\hline
(1,4) & 0 & \pp_{1144}\tilde{p}_{1414} & \pp_{1144}\tilde{p}_{1441} & \pp_{1244}\tilde{p}_{1444} & \pp_{1244}\tilde{p}_{1444} & \pp_{1444}\tilde{p}_{1444} & 0 & 0 & 0 & 0 & 0 & 0 & 0 & 0 \\
(4,1) & 0 & \pp_{1144}\tilde{p}_{4114} & \pp_{1144}\tilde{p}_{4141} &\pp_{1244}\tilde{p}_{4144} & \pp_{1244}\tilde{p}_{4144} &  \pp_{1444}\tilde{p}_{4144} &0 & 0 & 0 & 0 & 0 & 0 & 0 & 0\\
(2,4) & 0 & \pp_{1244}\tilde{p}_{4414} & \pp_{1244}\tilde{p}_{4441}  &\pp_{2244} \tilde{p}_{2424} & \pp_{2244} \tilde{p}_{2442} & \pp_{2444}\tilde{p}_{2444} & 0 & 0 & 0 & 0 & 0 & 0 & 0 & 0 \\
(4,2) & 0 & \pp_{1244}\tilde{p}_{4414} & \pp_{1244}\tilde{p}_{4441} & \pp_{2244}\tilde{p}_{2442} & \pp_{2244}\tilde{p}_{2424}  & \pp_{2444}\tilde{p}_{2444} & 0 & 0 & 0 & 0 & 0 & 0 & 0 & 0\\
\hline
(4,4) & \pp_{1144}\tilde{p}_{4411} & \pp_{1444}\tilde{p}_{4414} & \pp_{1444}\tilde{p}_{4441} & \pp_{2444}\tilde{p}_{2444} &\pp_{2444}\tilde{p}_{2444} & \pp_{4444}\tilde{p}_{4444}  & \pp_{2244}\tilde{p}_{2244} & \pp_{1244}\tilde{p}_{4414} & 
\pp_{1244}\tilde{p}_{4441} &\tilde{p}_{3344} & 0  & 0 & 0 & 0 \\
\hline
(2,2) & \pp_{1122}\tilde{p}_{4411} & 0 & 0 & 0 & 0& 
\pp_{2244}\tilde{p}_{2244} & \pp_{2222}\tilde{p}_{2222} &
\pp_{1222}\tilde{p}_{4414} & \pp_{1222}\tilde{p}_{4441} & \pp_{2233}\tilde{p}_{2233} & 0 & 0 & 0 & 0 \\
\hline
(1,2) & 0 & 0 & 0 & 0 & 0&  \pp_{1244}\tilde{p}_{1444} & \pp_{1222}\tilde{p}_{1444} & \pp_{1122}\tilde{p}_{1414} & \pp_{1122}\tilde{p}_{1441} & \pp_{1233}\tilde{p}_{1444} & 0 & 0 & 0 & 0\\
(2,1) & 0 & 0  & 0 & 0 & 0 & \pp_{1244}\tilde{p}_{4144} & \pp_{1222}\tilde{p}_{4144} & \pp_{1122}\tilde{p}_{4114} &
\pp_{1122}\tilde{p}_{4141} & \pp_{1233}\tilde{p}_{4144} & 0 & 0 & 0 & 0 \\
\hline
(3,3) &\pp_{1133}\tilde{p}_{4411} & 0 & 0 & 0 & 0 & \tilde{p}_{3344} & \pp_{2233}\tilde{p}_{2233} & \pp_{1233}\tilde{p}_{4414}  &  \pp_{1233}\tilde{p}_{4441}  &\pp_{3333}\tilde{p}_{3333} & \pp_{1333}\tilde{p}_{4414} & \pp_{1333}\tilde{p}_{4441} & \pp_{2333}\tilde{p}_{2444} & \pp_{2333}\tilde{p}_{2444} \\
\hline
(1,3) & 0 &  0 & 0 & 0 & 0 & 0 & 0 & 0& 0& \pp_{1333}\tilde{p}_{1444} & \pp_{1133}\tilde{p}_{1414}& 
\pp_{1133}\tilde{p}_{1441} & \pp_{1233}\tilde{p}_{1444} & \pp_{1233}\tilde{p}_{1444}\\
(3,1)  & 0 &  0 & 0 & 0 & 0 & 0 & 0 & 0& 0& \pp_{1333}\tilde{p}_{4144} & \pp_{1133}\tilde{p}_{4114} & \pp_{1133}\tilde{p}_{4141} & 
\pp_{1233}\tilde{p}_{4144} & \pp_{1233}\tilde{p}_{4144} \\
(2,3)& 0 &  0 & 0 & 0 & 0 & 0 & 0 & 0& 0&
\pp_{2333} \tilde{p}_{2444} & 
\pp_{1233}\tilde{p}_{4414} & 
\pp_{1233}\tilde{p}_{4441} & 
\pp_{2233}\tilde{p}_{2323} & 
\pp_{2233}\tilde{p}_{2332} \\
(3,2) & 0 &  0 & 0 & 0 & 0 & 0 & 0 & 0& 0&
\pp_{2333}\tilde{p}_{2444} & 
\pp_{1233}\tilde{p}_{4414} & 
\pp_{1233}\tilde{p}_{4441} & 
\pp_{2233}\tilde{p}_{2332} & 
\pp_{2233}\tilde{p}_{2323} 
\end{NiceArray}$
\caption{Flattening matrix of $\Bar{p}$ for a quartet $T=12\mid 34$. The entries highlighted in dark gray form a (generically) non-vanishing 3-minor and the submatrix formed by the entries highlighted in (dark and light) gray has rank 3. Therefore, the 7 quartic edge invariants arise from considering all 4-minors of the grey submatrix containing the dark grey minor. }
\label{tab:flatrank3b}
\end{table}

\begin{table}[h]
    \tiny
$\begin{NiceArray}{c|c|cccc|c|c|cc|c|cccc|cc}
\CodeBefore
\rectanglecolor{gray!30}{2-2}{11-3} 
\rectanglecolor{gray!30}{2-7}{11-7} 
\rectanglecolor{gray!30}{2-9}{11-9} 
\rectanglecolor{gray!80}{2-2}{3-3} 
\rectanglecolor{gray!80}{9-9}{9-9}
\rectanglecolor{gray!80}{2-9}{3-9}
\rectanglecolor{gray!80}{9-2}{9-3}
\Body
& (1,1) & (1,4)  & (4,1) & (2,4) & (4,2) & (4,4) & (2,2) & (1,2) & (2,1)& (3,3) & (1,3) & (3,1) & (2,3) & (3,2)\\
\hline
(1,1) & \tilde{p}_{1111} & 0  & 0 & 0 & 0 &  \pp_{1144}\tilde{p}_{1144} & \pp_{1122}\tilde{p}_{1144} & 0&0& \pp_{1133}\tilde{p}_{1144} & 0 & 0 & 0 & 0 \\
\hline
(1,4) & 0 & \pp_{1144}\tilde{p}_{1414} & \pp_{1144}\tilde{p}_{1441} & \pp_{1244}\tilde{p}_{1444} & \pp_{1244}\tilde{p}_{1444} & \pp_{1444}\tilde{p}_{1444} & 0 & 0 & 0 & 0 & 0 & 0 & 0 & 0 \\
(4,1) & 0 & \pp_{1144}\tilde{p}_{4114} & \pp_{1144}\tilde{p}_{4141} &\pp_{1244}\tilde{p}_{4144} & \pp_{1244}\tilde{p}_{4144} &  \pp_{1444}\tilde{p}_{4144} &0 & 0 & 0 & 0 & 0 & 0 & 0 & 0\\
(2,4) & 0 & \pp_{1244}\tilde{p}_{4414} & \pp_{1244}\tilde{p}_{4441}  &\pp_{2244} \tilde{p}_{2424} & \pp_{2244} \tilde{p}_{2442} & \pp_{2444}\tilde{p}_{2444} & 0 & 0 & 0 & 0 & 0 & 0 & 0 & 0 \\
(4,2) & 0 & \pp_{1244}\tilde{p}_{4414} & \pp_{1244}\tilde{p}_{4441} & \pp_{2244}\tilde{p}_{2442} & \pp_{2244}\tilde{p}_{2424}  & \pp_{2444}\tilde{p}_{2444} & 0 & 0 & 0 & 0 & 0 & 0 & 0 & 0\\
\hline
(4,4) & \pp_{1144}\tilde{p}_{4411} & \pp_{1444}\tilde{p}_{4414} & \pp_{1444}\tilde{p}_{4441} & \pp_{2444}\tilde{p}_{2444} &\pp_{2444}\tilde{p}_{2444} & \pp_{4444}\tilde{p}_{4444}  & \pp_{2244}\tilde{p}_{2244} & \pp_{1244}\tilde{p}_{4414} & 
\pp_{1244}\tilde{p}_{4441} &\tilde{p}_{3344} & 0  & 0 & 0 & 0 \\
\hline
(2,2) & \pp_{1122}\tilde{p}_{4411} & 0 & 0 & 0 & 0& 
\pp_{2244}\tilde{p}_{2244} & \pp_{2222}\tilde{p}_{2222} &
\pp_{1222}\tilde{p}_{4414} & \pp_{1222}\tilde{p}_{4441} & \pp_{2233}\tilde{p}_{2233} & 0 & 0 & 0 & 0 \\
\hline
(1,2) & 0 & 0 & 0 & 0 & 0&  \pp_{1244}\tilde{p}_{1444} & \pp_{1222}\tilde{p}_{1444} & \pp_{1122}\tilde{p}_{1414} & \pp_{1122}\tilde{p}_{1441} & \pp_{1233}\tilde{p}_{1444} & 0 & 0 & 0 & 0\\
(2,1) & 0 & 0  & 0 & 0 & 0 & \pp_{1244}\tilde{p}_{4144} & \pp_{1222}\tilde{p}_{4144} & \pp_{1122}\tilde{p}_{4114} &
\pp_{1122}\tilde{p}_{4141} & \pp_{1233}\tilde{p}_{4144} & 0 & 0 & 0 & 0 \\
\hline
(3,3) &\pp_{1133}\tilde{p}_{4411} & 0 & 0 & 0 & 0 & \tilde{p}_{3344} & \pp_{2233}\tilde{p}_{2233} & \pp_{1233}\tilde{p}_{4414}  &  \pp_{1233}\tilde{p}_{4441}  &\pp_{3333}\tilde{p}_{3333} & \pp_{1333}\tilde{p}_{4414} & \pp_{1333}\tilde{p}_{4441} & \pp_{2333}\tilde{p}_{2444} & \pp_{2333}\tilde{p}_{2444} \\
\hline
(1,3) & 0 &  0 & 0 & 0 & 0 & 0 & 0 & 0& 0& \pp_{1333}\tilde{p}_{1444} & \pp_{1133}\tilde{p}_{1414}& 
\pp_{1133}\tilde{p}_{1441} & \pp_{1233}\tilde{p}_{1444} & \pp_{1233}\tilde{p}_{1444}\\
(3,1)  & 0 &  0 & 0 & 0 & 0 & 0 & 0 & 0& 0& \pp_{1333}\tilde{p}_{4144} & \pp_{1133}\tilde{p}_{4114} & \pp_{1133}\tilde{p}_{4141} & 
\pp_{1233}\tilde{p}_{4144} & \pp_{1233}\tilde{p}_{4144} \\
(2,3)& 0 &  0 & 0 & 0 & 0 & 0 & 0 & 0& 0&
\pp_{2333} \tilde{p}_{2444} & 
\pp_{1233}\tilde{p}_{4414} & 
\pp_{1233}\tilde{p}_{4441} & 
\pp_{2233}\tilde{p}_{2323} & 
\pp_{2233}\tilde{p}_{2332} \\
(3,2) & 0 &  0 & 0 & 0 & 0 & 0 & 0 & 0& 0&
\pp_{2333}\tilde{p}_{2444} & 
\pp_{1233}\tilde{p}_{4414} & 
\pp_{1233}\tilde{p}_{4441} & 
\pp_{2233}\tilde{p}_{2332} & 
\pp_{2233}\tilde{p}_{2323} 
\end{NiceArray}$
\caption{Flattening matrix of $\Bar{p}$ for a quartet $T=12\mid 34$. The entries highlighted in dark gray form a (generically) non-vanishing 3-minor and the submatrix formed by the entries highlighted in (dark and light) gray has rank 3. Therefore, the 7 quartic edge invariants arise from considering all 4-minors of the grey submatrix containing the dark grey minor. }
\label{tab:flatrank4}
\end{table}
\end{landscape}

\begin{landscape}
    
\section{Flattening matrix F84}\label{app:flatF84}
\begin{table}[h]
    \tiny
$\begin{NiceArray}{c|c|cccc|c|c|cc|c|cccc|cc}
\CodeBefore
\Body
& (1,1) & (1,4)  & (4,1) & (2,4) & (4,2) & (4,4) & (2,2) & (1,2) & (2,1)& (3,3) & (1,3) & (3,1) & (2,3) & (3,2)\\
\hline
(1,1) & \tilde{p}_{1111} & 0  & 0 & 0 & 0 &  \pp_{1144}\tilde{p}_{1144} & \pp_{1122}\tilde{p}_{1122} & 0&0& \pp_{1133}\tilde{p}_{1144} & 0 & 0 & 0 & 0 \\
\hline
(1,4) & 0 & \pp_{1144}\tilde{p}_{1414} & \pp_{1144}\tilde{p}_{1441} & \pp_{1244}\tilde{p}_{1424} & \pp_{1244}\tilde{p}_{1442} & \pp_{1444}\tilde{p}_{1444} & 0 & 0 & 0 & 0 & 0 & 0 & 0 & 0 \\
(4,1) & 0 & \pp_{1144}\tilde{p}_{4114} & \pp_{1144}\tilde{p}_{4141} &\pp_{1244}\tilde{p}_{4124} & \pp_{1244}\tilde{p}_{4142} &  \pp_{1444}\tilde{p}_{4144} &0 & 0 & 0 & 0 & 0 & 0 & 0 & 0\\
(2,4) & 0 & \pp_{1244}\tilde{p}_{2414} & \pp_{1244}\tilde{p}_{2441}  &\pp_{2244} \tilde{p}_{2424} & \pp_{2244} \tilde{p}_{2442} & \pp_{2444}\tilde{p}_{2444} & 0 & 0 & 0 & 0 & 0 & 0 & 0 & 0 \\
(4,2) & 0 & \pp_{1244}\tilde{p}_{4214} & \pp_{1244}\tilde{p}_{4241} & \pp_{2244}\tilde{p}_{4224} & \pp_{2244}\tilde{p}_{4242}  & \pp_{2444}\tilde{p}_{4244} & 0 & 0 & 0 & 0 & 0 & 0 & 0 & 0\\
\hline
(4,4) & \pp_{1144}\tilde{p}_{4411} & \pp_{1444}\tilde{p}_{4414} & \pp_{1444}\tilde{p}_{4441} & \pp_{2444}\tilde{p}_{4424} & \pp_{2444}\tilde{p}_{4442} & \pp_{4444}\tilde{p}_{4444}  & \pp_{2244}\tilde{p}_{4422} & \pp_{1244}\tilde{p}_{4412} & 
\pp_{1244}\tilde{p}_{4421} &\tilde{p}_{3344} & 0  & 0 & 0 & 0 \\
\hline
(2,2) & \pp_{1122}\tilde{p}_{2211} & 0 & 0 & 0 & 0& 
\pp_{2244}\tilde{p}_{2244} & \pp_{2222}\tilde{p}_{2222} &
\pp_{1222}\tilde{p}_{2212} & \pp_{1222}\tilde{p}_{2221} & \pp_{2233}\tilde{p}_{2233} & 0 & 0 & 0 & 0 \\
\hline
(1,2) & 0 & 0 & 0 & 0 & 0&  \pp_{1244}\tilde{p}_{1244} & \pp_{1222}\tilde{p}_{1222} & \pp_{1122}\tilde{p}_{1212} & \pp_{1122}\tilde{p}_{1221} & \pp_{1233}\tilde{p}_{1244} & 0 & 0 & 0 & 0\\
(2,1) & 0 & 0  & 0 & 0 & 0 & \pp_{1244}\tilde{p}_{2144} & \pp_{1222}\tilde{p}_{2122} & \pp_{1122}\tilde{p}_{2112} &
\pp_{1122}\tilde{p}_{2121} & \pp_{1233}\tilde{p}_{2144} & 0 & 0 & 0 & 0 \\
\hline
(3,3) &\pp_{1133}\tilde{p}_{4411} & 0 & 0 & 0 & 0 & \tilde{p}_{3344} & \pp_{2233}\tilde{p}_{3322} & \pp_{1233}\tilde{p}_{4412}  &  \pp_{1233}\tilde{p}_{4421}  &\pp_{3333}\tilde{p}_{3333} & \pp_{1333}\tilde{p}_{4414} & \pp_{1333}\tilde{p}_{4441} & \pp_{2333}\tilde{p}_{4424} & \pp_{2333}\tilde{p}_{4442} \\
\hline
(1,3) & 0 &  0 & 0 & 0 & 0 & 0 & 0 & 0& 0& \pp_{1333}\tilde{p}_{1444} & \pp_{1133}\tilde{p}_{1414}& 
\pp_{1133}\tilde{p}_{1441} & \pp_{1233}\tilde{p}_{1424} & \pp_{1233}\tilde{p}_{1442}\\
(3,1)  & 0 &  0 & 0 & 0 & 0 & 0 & 0 & 0& 0& \pp_{1333}\tilde{p}_{4144} & \pp_{1133}\tilde{p}_{4114} & \pp_{1133}\tilde{p}_{4141} & 
\pp_{1233}\tilde{p}_{4124} & \pp_{1233}\tilde{p}_{4142} \\
(2,3)& 0 &  0 & 0 & 0 & 0 & 0 & 0 & 0& 0&
\pp_{2333} \tilde{p}_{2444} & 
\pp_{1233}\tilde{p}_{2414} & 
\pp_{1233}\tilde{p}_{2441} & 
\pp_{2233}\tilde{p}_{2323} & 
\pp_{2233}\tilde{p}_{2332} \\
(3,2) & 0 &  0 & 0 & 0 & 0 & 0 & 0 & 0& 0&
\pp_{2333}\tilde{p}_{4244} & 
\pp_{1233}\tilde{p}_{4214} & 
\pp_{1233}\tilde{p}_{4241} & 
\pp_{2233}\tilde{p}_{3223} & 
\pp_{2233}\tilde{p}_{3232} 
\end{NiceArray}$
\caption{Flattening matrix of $\Bar{p}$ for a quartet $T=12\mid 34$ under the F84 model.}
\label{tab:flatF84}
\end{table}

\begin{table}[h]
    \tiny
$\begin{NiceArray}{c|c|cccc|c|c|cc|c|cccc|cc}
\CodeBefore
\rectanglecolor{gray!30}{3-3}{7-6} 
\rectanglecolor{gray!30}{11-12}{15-15} 
\rectanglecolor{gray!30}{7-9}{11-10} 
\rectanglecolor{gray!80}{3-3}{3-3} 
\rectanglecolor{gray!80}{12-12}{12-12}
\rectanglecolor{gray!80}{9-9}{9-9}
\Body
& (1,1) & (1,4)  & (4,1) & (2,4) & (4,2) & (4,4) & (2,2) & (1,2) & (2,1)& (3,3) & (1,3) & (3,1) & (2,3) & (3,2)\\
\hline
(1,1) & \tilde{p}_{1111} & 0  & 0 & 0 & 0 &  \pp_{1144}\tilde{p}_{1144} & \pp_{1122}\tilde{p}_{1122} & 0&0& \pp_{1133}\tilde{p}_{1144} & 0 & 0 & 0 & 0 \\
\hline
(1,4) & 0 & \pp_{1144}\tilde{p}_{1414} & \pp_{1144}\tilde{p}_{1441} & \pp_{1244}\tilde{p}_{1424} & \pp_{1244}\tilde{p}_{1442} & \pp_{1444}\tilde{p}_{1444} & 0 & 0 & 0 & 0 & 0 & 0 & 0 & 0 \\
(4,1) & 0 & \pp_{1144}\tilde{p}_{4114} & \pp_{1144}\tilde{p}_{4141} &\pp_{1244}\tilde{p}_{4124} & \pp_{1244}\tilde{p}_{4142} &  \pp_{1444}\tilde{p}_{4144} &0 & 0 & 0 & 0 & 0 & 0 & 0 & 0\\
(2,4) & 0 & \pp_{1244}\tilde{p}_{2414} & \pp_{1244}\tilde{p}_{2441}  &\pp_{2244} \tilde{p}_{2424} & \pp_{2244} \tilde{p}_{2442} & \pp_{2444}\tilde{p}_{2444} & 0 & 0 & 0 & 0 & 0 & 0 & 0 & 0 \\
(4,2) & 0 & \pp_{1244}\tilde{p}_{4214} & \pp_{1244}\tilde{p}_{4241} & \pp_{2244}\tilde{p}_{4224} & \pp_{2244}\tilde{p}_{4242}  & \pp_{2444}\tilde{p}_{4244} & 0 & 0 & 0 & 0 & 0 & 0 & 0 & 0\\
\hline
(4,4) & \pp_{1144}\tilde{p}_{4411} & \pp_{1444}\tilde{p}_{4414} & \pp_{1444}\tilde{p}_{4441} & \pp_{2444}\tilde{p}_{4424} &\pp_{2444}\tilde{p}_{4442} & \pp_{4444}\tilde{p}_{4444}  & \pp_{2244}\tilde{p}_{4422} & \pp_{1244}\tilde{p}_{4412} & 
\pp_{1244}\tilde{p}_{4421} &\tilde{p}_{3344} & 0  & 0 & 0 & 0 \\
\hline
(2,2) & \pp_{1122}\tilde{p}_{2211} & 0 & 0 & 0 & 0& 
\pp_{2244}\tilde{p}_{2244} & \pp_{2222}\tilde{p}_{2222} &
\pp_{1222}\tilde{p}_{2212} & \pp_{1222}\tilde{p}_{2221} & \pp_{2233}\tilde{p}_{2233} & 0 & 0 & 0 & 0 \\
\hline
(1,2) & 0 & 0 & 0 & 0 & 0&  \pp_{1244}\tilde{p}_{1244} & \pp_{1222}\tilde{p}_{1222} & \pp_{1122}\tilde{p}_{1212} & \pp_{1122}\tilde{p}_{1221} & \pp_{1233}\tilde{p}_{1244} & 0 & 0 & 0 & 0\\
(2,1) & 0 & 0  & 0 & 0 & 0 & \pp_{1244}\tilde{p}_{2144} & \pp_{1222}\tilde{p}_{2122} & \pp_{1122}\tilde{p}_{2112} &
\pp_{1122}\tilde{p}_{2121} & \pp_{1233}\tilde{p}_{2144} & 0 & 0 & 0 & 0 \\
\hline
(3,3) &\pp_{1133}\tilde{p}_{4411} & 0 & 0 & 0 & 0 & \tilde{p}_{3344} & \pp_{2233}\tilde{p}_{3322} & \pp_{1233}\tilde{p}_{4412}  &  \pp_{1233}\tilde{p}_{4421}  &\pp_{3333}\tilde{p}_{3333} & \pp_{1333}\tilde{p}_{4414} & \pp_{1333}\tilde{p}_{4441} & \pp_{2333}\tilde{p}_{4424} & \pp_{2333}\tilde{p}_{4442} \\
\hline
(1,3) & 0 &  0 & 0 & 0 & 0 & 0 & 0 & 0& 0& \pp_{1333}\tilde{p}_{1444} & \pp_{1133}\tilde{p}_{1414}& 
\pp_{1133}\tilde{p}_{1441} & \pp_{1233}\tilde{p}_{1424} & \pp_{1233}\tilde{p}_{1442}\\
(3,1)  & 0 &  0 & 0 & 0 & 0 & 0 & 0 & 0& 0& \pp_{1333}\tilde{p}_{4144} & \pp_{1133}\tilde{p}_{4114} & \pp_{1133}\tilde{p}_{4141} & 
\pp_{1233}\tilde{p}_{4124} & \pp_{1233}\tilde{p}_{4142} \\
(2,3)& 0 &  0 & 0 & 0 & 0 & 0 & 0 & 0& 0&
\pp_{2333} \tilde{p}_{2444} & 
\pp_{1233}\tilde{p}_{2414} & 
\pp_{1233}\tilde{p}_{2441} & 
\pp_{2233}\tilde{p}_{2323} & 
\pp_{2233}\tilde{p}_{2332} \\
(3,2) & 0 &  0 & 0 & 0 & 0 & 0 & 0 & 0& 0&
\pp_{2333}\tilde{p}_{4244} & 
\pp_{1233}\tilde{p}_{4214} & 
\pp_{1233}\tilde{p}_{4241} & 
\pp_{2233}\tilde{p}_{3223} & 
\pp_{2233}\tilde{p}_{3232} 
\end{NiceArray}$
\caption{Flattening matrix of $\bar{p}$ for a quartet $T=12\vert 34$. The highlighted submatrices gave rank 1 and the entries highlighted in dark grey are generically non-zero. Therefore, the quadratic edge invariants arise from considering all $2$-minors of the grey submatrices containing the dark grey minor.}
\label{tab:flatF84rk1}
\end{table}

\begin{table}[h]
    \tiny
$\begin{NiceArray}{c|c|cccc|c|c|cc|c|cccc|cc}
\CodeBefore
\rectanglecolor{gray!30}{7-2}{11-2} 
\rectanglecolor{gray!30}{2-8}{2-8} 
\rectanglecolor{gray!30}{7-8}{11-9} 
\rectanglecolor{gray!80}{2-2}{2-2} 
\rectanglecolor{gray!80}{9-2}{9-2}
\rectanglecolor{gray!80}{2-9}{2-9}
\rectanglecolor{gray!80}{9-9}{9-9}
\Body
& (1,1) & (1,4)  & (4,1) & (2,4) & (4,2) & (4,4) & (2,2) & (1,2) & (2,1)& (3,3) & (1,3) & (3,1) & (2,3) & (3,2)\\
\hline
(1,1) & \tilde{p}_{1111} & 0  & 0 & 0 & 0 &  \pp_{1144}\tilde{p}_{1144} & \pp_{1122}\tilde{p}_{1122} & 0&0& \pp_{1133}\tilde{p}_{1144} & 0 & 0 & 0 & 0 \\
\hline
(1,4) & 0 & \pp_{1144}\tilde{p}_{1414} & \pp_{1144}\tilde{p}_{1441} & \pp_{1244}\tilde{p}_{1424} & \pp_{1244}\tilde{p}_{1442} & \pp_{1444}\tilde{p}_{1444} & 0 & 0 & 0 & 0 & 0 & 0 & 0 & 0 \\
(4,1) & 0 & \pp_{1144}\tilde{p}_{4114} & \pp_{1144}\tilde{p}_{4141} &\pp_{1244}\tilde{p}_{4124} & \pp_{1244}\tilde{p}_{4142} &  \pp_{1444}\tilde{p}_{4144} &0 & 0 & 0 & 0 & 0 & 0 & 0 & 0\\
(2,4) & 0 & \pp_{1244}\tilde{p}_{2414} & \pp_{1244}\tilde{p}_{2441}  &\pp_{2244} \tilde{p}_{2424} & \pp_{2244} \tilde{p}_{2442} & \pp_{2444}\tilde{p}_{2444} & 0 & 0 & 0 & 0 & 0 & 0 & 0 & 0 \\
(4,2) & 0 & \pp_{1244}\tilde{p}_{4214} & \pp_{1244}\tilde{p}_{4241} & \pp_{2244}\tilde{p}_{4224} & \pp_{2244}\tilde{p}_{4242}  & \pp_{2444}\tilde{p}_{4244} & 0 & 0 & 0 & 0 & 0 & 0 & 0 & 0\\
\hline
(4,4) & \pp_{1144}\tilde{p}_{4411} & \pp_{1444}\tilde{p}_{4414} & \pp_{1444}\tilde{p}_{4441} & \pp_{2444}\tilde{p}_{4424} &\pp_{2444}\tilde{p}_{4442} & \pp_{4444}\tilde{p}_{4444}  & \pp_{2244}\tilde{p}_{4422} & \pp_{1244}\tilde{p}_{4412} & 
\pp_{1244}\tilde{p}_{4421} &\tilde{p}_{3344} & 0  & 0 & 0 & 0 \\
\hline
(2,2) & \pp_{1122}\tilde{p}_{2211} & 0 & 0 & 0 & 0& 
\pp_{2244}\tilde{p}_{2244} & \pp_{2222}\tilde{p}_{2222} &
\pp_{1222}\tilde{p}_{2212} & \pp_{1222}\tilde{p}_{2221} & \pp_{2233}\tilde{p}_{2233} & 0 & 0 & 0 & 0 \\
\hline
(1,2) & 0 & 0 & 0 & 0 & 0&  \pp_{1244}\tilde{p}_{1244} & \pp_{1222}\tilde{p}_{1222} & \pp_{1122}\tilde{p}_{1212} & \pp_{1122}\tilde{p}_{1221} & \pp_{1233}\tilde{p}_{1244} & 0 & 0 & 0 & 0\\
(2,1) & 0 & 0  & 0 & 0 & 0 & \pp_{1244}\tilde{p}_{2144} & \pp_{1222}\tilde{p}_{2122} & \pp_{1122}\tilde{p}_{2112} &
\pp_{1122}\tilde{p}_{2121} & \pp_{1233}\tilde{p}_{2144} & 0 & 0 & 0 & 0 \\
\hline
(3,3) &\pp_{1133}\tilde{p}_{4411} & 0 & 0 & 0 & 0 & \tilde{p}_{3344} & \pp_{2233}\tilde{p}_{3322} & \pp_{1233}\tilde{p}_{4412}  &  \pp_{1233}\tilde{p}_{4421}  &\pp_{3333}\tilde{p}_{3333} & \pp_{1333}\tilde{p}_{4414} & \pp_{1333}\tilde{p}_{4441} & \pp_{2333}\tilde{p}_{4424} & \pp_{2333}\tilde{p}_{4442} \\
\hline
(1,3) & 0 &  0 & 0 & 0 & 0 & 0 & 0 & 0& 0& \pp_{1333}\tilde{p}_{1444} & \pp_{1133}\tilde{p}_{1414}& 
\pp_{1133}\tilde{p}_{1441} & \pp_{1233}\tilde{p}_{1424} & \pp_{1233}\tilde{p}_{1442}\\
(3,1)  & 0 &  0 & 0 & 0 & 0 & 0 & 0 & 0& 0& \pp_{1333}\tilde{p}_{4144} & \pp_{1133}\tilde{p}_{4114} & \pp_{1133}\tilde{p}_{4141} & 
\pp_{1233}\tilde{p}_{4124} & \pp_{1233}\tilde{p}_{4142} \\
(2,3)& 0 &  0 & 0 & 0 & 0 & 0 & 0 & 0& 0&
\pp_{2333} \tilde{p}_{2444} & 
\pp_{1233}\tilde{p}_{2414} & 
\pp_{1233}\tilde{p}_{2441} & 
\pp_{2233}\tilde{p}_{2323} & 
\pp_{2233}\tilde{p}_{2332} \\
(3,2) & 0 &  0 & 0 & 0 & 0 & 0 & 0 & 0& 0&
\pp_{2333}\tilde{p}_{4244} & 
\pp_{1233}\tilde{p}_{4214} & 
\pp_{1233}\tilde{p}_{4241} & 
\pp_{2233}\tilde{p}_{3223} & 
\pp_{2233}\tilde{p}_{3232} 
\end{NiceArray}$
\caption{Flattening matrix of $\bar{p}$ for a quartet $T=12\vert 34$. The entries highlighted in dark grey form a generically non-vanishing $2$-minor and the submatrix formed by the entries highlighted in grey has rank 2. Therefore, the 4 cubic edge invariants arise from considering all 3-minors of the grey submatrix containing the dark grey minor.}
\label{tab:flatF84rk2a}
\end{table}

\begin{table}[h]
    \tiny
$\begin{NiceArray}{c|c|cccc|c|c|cc|c|cccc|cc}
\CodeBefore
\rectanglecolor{gray!30}{7-2}{15-2} 
\rectanglecolor{gray!30}{2-11}{2-11} 
\rectanglecolor{gray!30}{7-9}{15-9} 
\rectanglecolor{gray!30}{7-11}{15-12} 
\rectanglecolor{gray!80}{2-2}{2-2} 
\rectanglecolor{gray!80}{9-2}{9-2}
\rectanglecolor{gray!80}{2-9}{2-9}
\rectanglecolor{gray!80}{9-9}{9-9}
\rectanglecolor{gray!80}{12-2}{12-2}
\rectanglecolor{gray!80}{2-12}{2-12}
\rectanglecolor{gray!80}{12-9}{12-9}
\rectanglecolor{gray!80}{9-12}{9-12}
\rectanglecolor{gray!80}{12-12}{12-12}
\Body
& (1,1) & (1,4)  & (4,1) & (2,4) & (4,2) & (4,4) & (2,2) & (1,2) & (2,1)& (3,3) & (1,3) & (3,1) & (2,3) & (3,2)\\
\hline
(1,1) & \tilde{p}_{1111} & 0  & 0 & 0 & 0 &  \pp_{1144}\tilde{p}_{1144} & \pp_{1122}\tilde{p}_{1122} & 0&0& \pp_{1133}\tilde{p}_{1144} & 0 & 0 & 0 & 0 \\
\hline
(1,4) & 0 & \pp_{1144}\tilde{p}_{1414} & \pp_{1144}\tilde{p}_{1441} & \pp_{1244}\tilde{p}_{1424} & \pp_{1244}\tilde{p}_{1442} & \pp_{1444}\tilde{p}_{1444} & 0 & 0 & 0 & 0 & 0 & 0 & 0 & 0 \\
(4,1) & 0 & \pp_{1144}\tilde{p}_{4114} & \pp_{1144}\tilde{p}_{4141} &\pp_{1244}\tilde{p}_{4124} & \pp_{1244}\tilde{p}_{4142} &  \pp_{1444}\tilde{p}_{4144} &0 & 0 & 0 & 0 & 0 & 0 & 0 & 0\\
(2,4) & 0 & \pp_{1244}\tilde{p}_{2414} & \pp_{1244}\tilde{p}_{2441}  &\pp_{2244} \tilde{p}_{2424} & \pp_{2244} \tilde{p}_{2442} & \pp_{2444}\tilde{p}_{2444} & 0 & 0 & 0 & 0 & 0 & 0 & 0 & 0 \\
(4,2) & 0 & \pp_{1244}\tilde{p}_{4214} & \pp_{1244}\tilde{p}_{4241} & \pp_{2244}\tilde{p}_{4224} & \pp_{2244}\tilde{p}_{4242}  & \pp_{2444}\tilde{p}_{4244} & 0 & 0 & 0 & 0 & 0 & 0 & 0 & 0\\
\hline
(4,4) & \pp_{1144}\tilde{p}_{4411} & \pp_{1444}\tilde{p}_{4414} & \pp_{1444}\tilde{p}_{4441} & \pp_{2444}\tilde{p}_{4424} &\pp_{2444}\tilde{p}_{4442} & \pp_{4444}\tilde{p}_{4444}  & \pp_{2244}\tilde{p}_{4422} & \pp_{1244}\tilde{p}_{4412} & 
\pp_{1244}\tilde{p}_{4421} &\tilde{p}_{3344} & 0  & 0 & 0 & 0 \\
\hline
(2,2) & \pp_{1122}\tilde{p}_{2211} & 0 & 0 & 0 & 0& 
\pp_{2244}\tilde{p}_{2244} & \pp_{2222}\tilde{p}_{2222} &
\pp_{1222}\tilde{p}_{2212} & \pp_{1222}\tilde{p}_{2221} & \pp_{2233}\tilde{p}_{2233} & 0 & 0 & 0 & 0 \\
\hline
(1,2) & 0 & 0 & 0 & 0 & 0&  \pp_{1244}\tilde{p}_{1244} & \pp_{1222}\tilde{p}_{1222} & \pp_{1122}\tilde{p}_{1212} & \pp_{1122}\tilde{p}_{1221} & \pp_{1233}\tilde{p}_{1244} & 0 & 0 & 0 & 0\\
(2,1) & 0 & 0  & 0 & 0 & 0 & \pp_{1244}\tilde{p}_{2144} & \pp_{1222}\tilde{p}_{2122} & \pp_{1122}\tilde{p}_{2112} &
\pp_{1122}\tilde{p}_{2121} & \pp_{1233}\tilde{p}_{2144} & 0 & 0 & 0 & 0 \\
\hline
(3,3) &\pp_{1133}\tilde{p}_{4411} & 0 & 0 & 0 & 0 & \tilde{p}_{3344} & \pp_{2233}\tilde{p}_{3322} & \pp_{1233}\tilde{p}_{4412}  &  \pp_{1233}\tilde{p}_{4421}  &\pp_{3333}\tilde{p}_{3333} & \pp_{1333}\tilde{p}_{4414} & \pp_{1333}\tilde{p}_{4441} & \pp_{2333}\tilde{p}_{4424} & \pp_{2333}\tilde{p}_{4442} \\
\hline
(1,3) & 0 &  0 & 0 & 0 & 0 & 0 & 0 & 0& 0& \pp_{1333}\tilde{p}_{1444} & \pp_{1133}\tilde{p}_{1414}& 
\pp_{1133}\tilde{p}_{1441} & \pp_{1233}\tilde{p}_{1424} & \pp_{1233}\tilde{p}_{1442}\\
(3,1)  & 0 &  0 & 0 & 0 & 0 & 0 & 0 & 0& 0& \pp_{1333}\tilde{p}_{4144} & \pp_{1133}\tilde{p}_{4114} & \pp_{1133}\tilde{p}_{4141} & 
\pp_{1233}\tilde{p}_{4124} & \pp_{1233}\tilde{p}_{4142} \\
(2,3)& 0 &  0 & 0 & 0 & 0 & 0 & 0 & 0& 0&
\pp_{2333} \tilde{p}_{2444} & 
\pp_{1233}\tilde{p}_{2414} & 
\pp_{1233}\tilde{p}_{2441} & 
\pp_{2233}\tilde{p}_{2323} & 
\pp_{2233}\tilde{p}_{2332} \\
(3,2) & 0 &  0 & 0 & 0 & 0 & 0 & 0 & 0& 0&
\pp_{2333}\tilde{p}_{4244} & 
\pp_{1233}\tilde{p}_{4214} & 
\pp_{1233}\tilde{p}_{4241} & 
\pp_{2233}\tilde{p}_{3223} & 
\pp_{2233}\tilde{p}_{3232} 
\end{NiceArray}$
\caption{Flattening matrix of $\bar{p}$ for a quartet $T=12\vert 34$. The entries highlighted in dark grey form a generically non-vanishing $2$-minor and the submatrix formed by the entries highlighted in grey has rank 2. Therefore, the 7 cubic edge invariants arise from considering all 3-minors of the grey submatrix containing the dark grey minor.}
\label{tab:flatF84rk2b}
\end{table}

\begin{table}[h]
    \tiny
$\begin{NiceArray}{c|c|cccc|c|c|cc|c|cccc|cc}
\CodeBefore
\rectanglecolor{gray!30}{2-2}{11-3} 
\rectanglecolor{gray!30}{2-7}{11-7} 
\rectanglecolor{gray!30}{2-9}{11-9} 
\rectanglecolor{gray!80}{2-2}{3-3} 
\rectanglecolor{gray!80}{9-9}{9-9}
\rectanglecolor{gray!80}{2-9}{3-9}
\rectanglecolor{gray!80}{9-2}{9-3}
\Body
& (1,1) & (1,4)  & (4,1) & (2,4) & (4,2) & (4,4) & (2,2) & (1,2) & (2,1)& (3,3) & (1,3) & (3,1) & (2,3) & (3,2)\\
\hline
(1,1) & \tilde{p}_{1111} & 0  & 0 & 0 & 0 &  \pp_{1144}\tilde{p}_{1144} & \pp_{1122}\tilde{p}_{1122} & 0&0& \pp_{1133}\tilde{p}_{1144} & 0 & 0 & 0 & 0 \\
\hline
(1,4) & 0 & \pp_{1144}\tilde{p}_{1414} & \pp_{1144}\tilde{p}_{1441} & \pp_{1244}\tilde{p}_{1424} & \pp_{1244}\tilde{p}_{1442} & \pp_{1444}\tilde{p}_{1444} & 0 & 0 & 0 & 0 & 0 & 0 & 0 & 0 \\
(4,1) & 0 & \pp_{1144}\tilde{p}_{4114} & \pp_{1144}\tilde{p}_{4141} &\pp_{1244}\tilde{p}_{4124} & \pp_{1244}\tilde{p}_{4142} &  \pp_{1444}\tilde{p}_{4144} &0 & 0 & 0 & 0 & 0 & 0 & 0 & 0\\
(2,4) & 0 & \pp_{1244}\tilde{p}_{2414} & \pp_{1244}\tilde{p}_{2441}  &\pp_{2244} \tilde{p}_{2424} & \pp_{2244} \tilde{p}_{2442} & \pp_{2444}\tilde{p}_{2444} & 0 & 0 & 0 & 0 & 0 & 0 & 0 & 0 \\
(4,2) & 0 & \pp_{1244}\tilde{p}_{4214} & \pp_{1244}\tilde{p}_{4241} & \pp_{2244}\tilde{p}_{4224} & \pp_{2244}\tilde{p}_{4242}  & \pp_{2444}\tilde{p}_{4244} & 0 & 0 & 0 & 0 & 0 & 0 & 0 & 0\\
\hline
(4,4) & \pp_{1144}\tilde{p}_{4411} & \pp_{1444}\tilde{p}_{4414} & \pp_{1444}\tilde{p}_{4441} & \pp_{2444}\tilde{p}_{4424} &\pp_{2444}\tilde{p}_{4442} & \pp_{4444}\tilde{p}_{4444}  & \pp_{2244}\tilde{p}_{4422} & \pp_{1244}\tilde{p}_{4412} & 
\pp_{1244}\tilde{p}_{4421} &\tilde{p}_{3344} & 0  & 0 & 0 & 0 \\
\hline
(2,2) & \pp_{1122}\tilde{p}_{2211} & 0 & 0 & 0 & 0& 
\pp_{2244}\tilde{p}_{2244} & \pp_{2222}\tilde{p}_{2222} &
\pp_{1222}\tilde{p}_{2212} & \pp_{1222}\tilde{p}_{2221} & \pp_{2233}\tilde{p}_{2233} & 0 & 0 & 0 & 0 \\
\hline
(1,2) & 0 & 0 & 0 & 0 & 0&  \pp_{1244}\tilde{p}_{1244} & \pp_{1222}\tilde{p}_{1222} & \pp_{1122}\tilde{p}_{1212} & \pp_{1122}\tilde{p}_{1221} & \pp_{1233}\tilde{p}_{1244} & 0 & 0 & 0 & 0\\
(2,1) & 0 & 0  & 0 & 0 & 0 & \pp_{1244}\tilde{p}_{2144} & \pp_{1222}\tilde{p}_{2122} & \pp_{1122}\tilde{p}_{2112} &
\pp_{1122}\tilde{p}_{2121} & \pp_{1233}\tilde{p}_{2144} & 0 & 0 & 0 & 0 \\
\hline
(3,3) &\pp_{1133}\tilde{p}_{4411} & 0 & 0 & 0 & 0 & \tilde{p}_{3344} & \pp_{2233}\tilde{p}_{3322} & \pp_{1233}\tilde{p}_{4412}  &  \pp_{1233}\tilde{p}_{4421}  &\pp_{3333}\tilde{p}_{3333} & \pp_{1333}\tilde{p}_{4414} & \pp_{1333}\tilde{p}_{4441} & \pp_{2333}\tilde{p}_{4424} & \pp_{2333}\tilde{p}_{4442} \\
\hline
(1,3) & 0 &  0 & 0 & 0 & 0 & 0 & 0 & 0& 0& \pp_{1333}\tilde{p}_{1444} & \pp_{1133}\tilde{p}_{1414}& 
\pp_{1133}\tilde{p}_{1441} & \pp_{1233}\tilde{p}_{1424} & \pp_{1233}\tilde{p}_{1442}\\
(3,1)  & 0 &  0 & 0 & 0 & 0 & 0 & 0 & 0& 0& \pp_{1333}\tilde{p}_{4144} & \pp_{1133}\tilde{p}_{4114} & \pp_{1133}\tilde{p}_{4141} & 
\pp_{1233}\tilde{p}_{4124} & \pp_{1233}\tilde{p}_{4142} \\
(2,3)& 0 &  0 & 0 & 0 & 0 & 0 & 0 & 0& 0&
\pp_{2333} \tilde{p}_{2444} & 
\pp_{1233}\tilde{p}_{2414} & 
\pp_{1233}\tilde{p}_{2441} & 
\pp_{2233}\tilde{p}_{2323} & 
\pp_{2233}\tilde{p}_{2332} \\
(3,2) & 0 &  0 & 0 & 0 & 0 & 0 & 0 & 0& 0&
\pp_{2333}\tilde{p}_{4244} & 
\pp_{1233}\tilde{p}_{4214} & 
\pp_{1233}\tilde{p}_{4241} & 
\pp_{2233}\tilde{p}_{3223} & 
\pp_{2233}\tilde{p}_{3232} 
\end{NiceArray}$
\caption{Flattening matrix of $\bar{p}$ for a quartet $T=12\vert 34$. The entries highlighted in dark grey form a generically non-vanishing $3$-minor and the submatrix formed by the entries highlighted in grey has rank 3. Therefore, the 7 quartic edge invariants arise from considering all 4-minors of the grey submatrix containing the dark grey minor.}
\label{tab:flatF84rk3}
\end{table}

\end{landscape}

\section{SageMath Code for F84}\label{app:code84}
\begin{lstlisting}
#define the p-tilde variables
p1111, p1114, p1141, p1411, p4111, p1112, p1121, p1121, p1211,
p2111, p4141, p4114, p4411, p1414, p1144, p2121, p2112, p2211, 
p1212, p1122, p4441, p4414, p4144, p1444, p2221, p2212, p2122, 
p1222 = var('p1111, p1114, p1141, p1411, p4111, p1112, p1121, 
p1121, p1211, p2111, p4141, p4114, p4411, p1414, p1144, p2121,
p2112, p2211, p1212, p1122, p4441, p4414, p4144, p1444, p2221, 
p2212, p2122, p1222')
p4412, p4214, p1424, p1422, p2414, p2441, p4241, p4214, p2424, 
p3322, p2444, p4244, p3344, p3322, p2333, p3333 = var('p4412, 
p4214, p1424, p1422, p2414, p2441, p4241, p4214, p2424, p3322,
p2444, p4244, p3344, p3322, p2333, p3333')
p1441, p1221, p1244, p1442 ,p4124, p4224, p4444, p4142, p2442, 
p4242, p4422, p2222, p2233, p2144, p2244, p4421= var('p1441, 
p1221, p1244, p1442, p4124, p4224, p4444, p4142, p2442, p4242, 
p4422, p2222, p2233, p2144, p2244, p4421')

#write the 38 equations 
f1 = p1222*p1441 - p1221*p1422
f2 = p1222*p1414 - p1212*p1424
f3 = p1222*p1144 - p1122*p1244
f4 = p1144*p1414*p1441 - p1111*p1444^2
f5 = p1442*p1424*p1244 - p1222*p1444^2
f6 = p2212*p4411 - p2211*p4412
f7 = p2212*p4114 - p2112*p4214
f8 = p2212*p1414 - p1212*p2414
f9 = p1414*p4114*p4411 - p1111*p4414^2
f10 = p4412*p4214*p2414 - p2212*p4414^2
g1 = p1414*p4141 - p1441*p4114
g2 = p1414*p2441 - p1441*p2414
g3 = p1414*p4241 - p1441*p4214
g4 = p1414*p4441 - p1441*p4414
g5 = p1414*p4124 - p1424*p4114
g6 = p1414*p2424 - p1424*p2414
g7 = p1414*p4224 - p1424*p4214
g8 = p1414*p4444 - p1424*p4414
g9 = p1414*p4142 - p1442*p4114
g10 = p1414*p2442 - p1442*p2414
g11 = p1414*p4242 - p1442*p4214
g12 = p1414*p2444 - p1442*p4414
g13 = p1212*p4421 - p1221*p4412
g14 = p1212*p2221 - p1221*p2212
g15 = p1212*p2121 - p1221*p2112
h1 = matrix([[p1111, p1122, 0], [0, p1222, p1212], 
[p4411, p4422, p4412]]).determinant()
h2 = matrix([[p1111, p1122, 0], [0, p1222, p1212], 
[p2211, p2222, p2212]]).determinant()
h3 = matrix([[p1111, p1122, 0], [0, p1222, p1212], 
[0, p2122, p2112]]).determinant()
h4 = matrix([[p1111, p1122, 0], [0, p1222, p1212], 
[p4411, p3322, p4412]]).determinant()
i1 = matrix([[p1111, 0, p1144, 0], [0, p1212, p1244, 0], 
[0, 0, p1444, p1414], [p4411, p4412, p3344, 0]]).determinant()
i2 = matrix([[p1111, 0, p1144, 0], [0, p1212, p1244, 0], 
[0, 0, p1444, p1414], [p2211, p2212, p2233, 0]]).determinant()
i3 = matrix([[p1111, 0, p1144, 0], [0, p1212, p1244, 0], 
[0, 0, p1444, p1414], [p4411, p4412, p3333, p4414]]).determinant()
i4 = matrix([[p1111, 0, p1144, 0], [0, p1212, p1244, 0], 
[0, 0, p1444, p1414], [0, 0, p4144, p4114]]).determinant()
i5 = matrix([[p1111, 0, p1144, 0], [0, p1212, p1244, 0], 
[0, 0, p1444, p1414], [0, 0, p2444, p2414]]).determinant()
i6 = matrix([[p1111, 0, p1144, 0], [0, p1212, p1244, 0], 
[0, 0, p1444, p1414], [0, 0, p4244, p4214]]).determinant()
i7 = matrix([[p1111, 0, p1144, 0], [0, p1212, p1244, 0], 
[0, 0, p1444, p1414], [0, p2112, p2144, 0]]).determinant()
i8 =  matrix([[p1111, 0, p1144, 0], [0, p1414, p1444, 0], 
[0, 0, p1244, p1212], [p4411, p4414, p4444, p4412]]).determinant()
i9 =  matrix([[p1111, 0, p1144, 0], [0, p1414, p1444, 0], 
[0, 0, p1244, p1212], [p2211, 0, p2244, p2212]]).determinant()

#compute the Jacobian for a generic point
J(p1111, p1114, p1141, p1411, p4111, p1112, p1121, p1121, p1211,
p2111, p4141, p4114, p4411, p1414, p1144, p2121, p2112, p2211, 
p1212, p1122, p4441, p4414, p4144, p1444, p2221, p2212, p2122, 
p1222, p4412, p4214, p1424, p1422, p2414, p2441, p4241, p4214, 
p2424, p3322, p2444, p4244, p3344, p3322, p2333, p3333, p1441, 
p1221, p1244, p1442 ,p4124, p4224, p4444, p4142, p2442, p4242, 
p4422, p2222, p2233, p2144, p2244, p4421)  = 
jacobian((f1,f2,f3,f4,f5,f6,f7,f8,f9,f10,g1,g2,g3,g4,g5,g6,g7,
g8,g9,g10,g11,g12,g13,g14,g15,h1,h2,h3,h4,i1,i2,i3,i4,i5,i6,i7,
i8,i9), (p1111, p1114, p1141, p1411, p4111, p1112, p1121, p1121, 
p1211, p2111, p4141, p4114, p4411, p1414, p1144, p2121, p2112, 
p2211, p1212, p1122, p4441, p4414, p4144, p1444, p2221, p2212, 
p2122, p1222,p4412, p4214, p1424, p1422, p2414, p2441, p4241, 
p4214, p2424, p3322, p2444, p4244, p3344, p3322, p2333, p3333, 
p1441, p1221, p1244, p1442 ,p4124, p4224, p4444, p4142, p2442, 
p4242, p4422, p2222, p2233, p2144, p2244, p4421))

#define the Jacobian at the no-evolution point
nep_jac = J(ones_matrix(1,60))

#compute the rank of the Jacobian at the no-evolution point
nep_jac.rank()


\end{lstlisting}
\section{SageMath Code for F81}\label{app:code81}
\begin{lstlisting}
#define the p-tilde variables and the three constants 
p1111, p1212, p1221, p1222, p2121, p2211, p2221, p2112, p2122, p3333, p2232, 
p2333 = var('p1111, p1212, p1221, p1222, p2121, p2211, p2221, p2112, 
p2122, p3333, p2232, p2333');
c1,c2,c3=var('c1,c2,c3');

#write the 7 equations
f1 = p1122*p1212*p1221 - p1111*p1222;
f2 = p1221*p2121*p2211 - p1111*p2221;
f3 = c1*p1111*p3333 - c2*p1111*p2332 - c3*p1122*p2211;
f4 = p1212*p2121 - p2112*p1221;
f5 = p1212*p2122 - p2112*p1222;
f6 = p1212*p2221 - p2212*p1221;
f7 = p1212*p2333 - p2212*p1222;

#compute the Jacobian for a generic point
J(p1111, p1212, p1221, p1222, p2121, p2211, p2221, p2112, p2122, p3333, p2232, 
p2333) = jacobian((f1,f2,f3,f4,f5,f6,f7), (p1111, p1212, p1221, p1222, p2121, 
p2211, p2221, p2112, p2122, p3333, p2232, p2333))

#define the Jacobian at the no-evolution point
nep_jac = J(1,1,1,1,1,1,1,1,1,1,1,1)

#compute the rank of the Jacobian at the no evolution point
nep_jac.rank()
\end{lstlisting}
\end{document}